\theoremstyle{definition}
\newtheorem*{defn*}{\protect\definitionname}
\theoremstyle{plain}
\newtheorem*{conjecture*}{\protect\conjecturename}
\theoremstyle{plain}
\newtheorem{prop}{\protect\propositionname}
\theoremstyle{plain}
\newtheorem{thm}{\protect\theoremname}
\theoremstyle{plain}
\newtheorem{cor}{\protect\corollaryname}
\theoremstyle{plain}
\newtheorem{lem}{\protect\lemmaname}
\providecommand{\conjecturename}{Conjecture}
\providecommand{\corollaryname}{Corollary}
\providecommand{\definitionname}{Definition}
\providecommand{\lemmaname}{Lemma}
\providecommand{\propositionname}{Proposition}
\providecommand{\theoremname}{Theorem}
\begin{document}
\title{Neural Network Ground State from the Neural Tangent Kernel Perspective:
The Sign Bias}
\author{\textbf{$\text{Harel Kol-Namer}$, $\text{Moshe Goldstein}$}\\
$\text{School of Physics and Astronomy, Tel-Aviv University, Tel Aviv 6997801, Israel}$}
\maketitle
\begin{abstract}
Neural networks has recently attracted much interest as useful representations
of quantum many body ground states, which might help address the infamous
sign problem. Most attention was directed at their representability
properties, while possible limitations on finding the desired optimal
state have not been suitably explored. By leveraging well-established
results applicable in the context of infinite width, specifically
regarding the renowned neural tangent kernel and conjugate kernel,
a comprehensive analysis of the convergence and initialization characteristics
of the method is conducted. We reveal the dependence of these characteristics
on the interplay among these kernels, the Hamiltonian, and the basis
used for its representation. We introduce and motivate novel performance
metrics and explore the condition for their optimization. By leveraging
these findings, we elucidate a substantial dependence of the effectiveness
of this approach on the selected basis, demonstrating that so-called
``stoquastic\textquotedblright{} Hamiltonians are more amenable to
solution through neural networks than those suffering from a sign
problem.
\end{abstract}

\section*{I. Introduction\protect\label{sec:I}}

Finding the ground state of many-body local spins systems is a notoriously
hard task, being QMA complete under pretty mild conditions \citep{https://doi.org/10.48550/arxiv.quant-ph/0406180},
implying that even with a universal quantum computer one does not
expect it to be tractable. Nonetheless, it is believed, and proven
for one dimensional gapped systems, that in many physical scenarios
with finite range interactions this task should be achievable \citep{Hastings_2006}.
While there are many common numerical methods to find the ground state
and ground state energy, such as Tensor Networks (TN) or Quantum Monte
Carlo (QMC), each has its own strength and weakness and thus applies
to only certain classes of Hamiltonians; for example, TN are mostly
be used for low dimensional models with a ground state satisfying
an area law \citep{Cirac_2021}, and QMC suffers from the infamous
sign problem, which arises when the Hamiltonian is not stoquastic,
as discussed further below.

In recent years the representation of quantum states using neural
networks, i.e., Neural Quantum States (NQS), have been proposed and
demonstrated success in various tasks \citep{Carleo_2017,Gao_2017,Torlai_2018}.
An unnormalized state $\ket{\psi}$ of $N$ spins is represented by
a neural network if it can written as 
\begin{equation}
\ket{\psi}=\sum_{\sigma}\psi(\sigma,\theta)\ket{\sigma}
\end{equation}
in some fixed tensor product basis $\ket{\sigma}$ with $\psi(\sigma,\theta)\equiv\psi_{\sigma}$
being the neural network output which we assume to be real without
loss of generality\footnote{As mentioned in \citep{Cubitt_2018}, every complex k-local Hamiltonian
problem can be casted as a k+1-local real Hamiltonian problem. Also
in \citep{Luo_2023} it was shown that if the NTK is complex the dynamics
can be separated into two equations, for the real and imaginary parts
of the wave functions, with the effective NTKs being the real and
imaginary parts of the complex one, each with similar characteristics.
For each of those equations our analysis applies. }, given an input $\sigma\in\{-1,1\}^{N}$ and the network parameters
$\theta\in\mathbb{R}^{p}$. The output is then fed into a loss function,
$\mathcal{L}(\ket{\psi}):\mathbb{R}^{2^{N}}\rightarrow\mathbb{R}$,
which is minimized via gradient descent (or a variant thereof), through
a change of the parameters $\theta$. The task at hand will specify
a loss function. In the following we will be interested in the representation
of the ground states of local spins Hamiltonian $H$, hence the chosen
loss function will be the energy expectation value $\braket{E}\equiv\frac{\bra{\psi}H\ket{\psi}}{\braket{\psi|\psi}}$. 

Allegedly avoiding the limitations of TN and QMC, neural quantum states
optimization have been proposed as a promising numerical method for
finding the ground state of many-body spins systems. Numerical studies
have shown results comparable to other state of the art methods while
using various neural networks architectures \citep{Carleo_2017,Liang_2018,Hibat_Allah_2020,Sharir_2020}.
Nonetheless it was empirically observed that even though NQS showed
success for models such as the transverse-field Ising model and other
stoquastic Hamiltonians, for non-stoquastic Hamiltonians the performance
of this method seems to diminish drastically. While several reasons
have been proposed, such as inability to represent states with complex
sign structure, Monte Carlo sampling issues, and rugged loss landscape
\citep{Westerhout_2020,Szabo:2020vk,Park_2022,Bukov_2021}, a more
comprehensive explanation is sorely missing. It was also noted that
whenever possible, a change of basis to one where the Hamiltonian
is stoquastic can enhance the performance, indicating that there is
an ambiguous dependence on the chosen basis \citep{Park_2022}\citep{Bukov_2021}\citep{Nomura_2021}.
Thus, the relationship between the performance of the method and the
specific characteristics of the Hamiltonian or the chosen basis remained
elusive. 

In order to explore those questions we use the highly celebrated infinite
width limit \citep{https://doi.org/10.48550/arxiv.1806.07572,https://doi.org/10.48550/arxiv.1711.00165,https://doi.org/10.48550/arxiv.1810.05148,neal2012bayesian}.
In this limit, neural networks dynamics are often more tractable and
enjoy useful features that will be demonstrated later on. We refer
to this limit as the Neural Tangent Kernel (NTK) limit. Similar approach
have shown to be insightful in the previous works, such as the analysis
of quantum neural networks \citep{Liu_2023,https://doi.org/10.48550/arxiv.2402.08726}
and classical neural networks \citep{Luo_2023} for quantum state
tomography tasks. Similar to our work, geometrical methods have been
utilized for analyzing the training dynamics for ground state optimization
of NQS such as in \citep{Park_2020}, yet using different measures
such as the Fisher information metric. 

The advantage of using the NTK limit in our case is two-fold. We are
able to explore the training dynamics in a general setting and the
influence of the chosen basis, while eliminating other influences
that have been explored, such as the representability capabilities
of the network. We are therefore able to demonstrate that NQS suffer
from an implicit bias in their initialization and dynamics toward
states with a simple sign structure, which we call ``the sign bias''.
Based on this bias and the special structure of stoquastic Hamiltonians
ground states, we are able to explain the hardness in finding the
ground states of non-stoquastic Hamiltonians using common NQS training
schemes, as compared to TNs. 

The study is structured as follows: In Sec. \hyperref[sec:II]{II}
we outline the general framework and definitions. In Sec. \hyperref[sec:III]{III}
we review the NTK limit properties based on previous results, and
rephrase them in the language of quantum mechanical operators. In
Sec. \hyperref[sec:IV]{IV} we present our general results regarding
the interplay between the kernels and the Hamiltonian and its influence
on the dynamics and initialization, and build intuition about the
desired relations between them. We provide criteria for exponential
convergence of the NQS towards the ground state, as well as derive
an upper bound on the learning rate in the discrete time case, which
might be useful in practice. Based on all these we propose metrics
that might predict the success or failure of the method. In Sec. \hyperref[sec:V]{V}
we discuss what we term the sign ``bias'', which might explain why
stoquastic Hamiltonians are more amenable to solution using neural
networks, and why one can enhance the convergence rate by a change
of basis. We also propose metrics that might predict the success or
failure of the method and discuss their optimization. In Sec. \hyperref[sec:VI]{VI}
we provide numerical results and demonstrate how the metrics provided
are correlated with the convergence speed and initial average values.
In Sec. \hyperref[sec:VII]{VII} we briefly address a commonly used
optimization strategy, Stochastic Reconfiguration (SR) as a possible
remedy to the sign bias, and more generally, show how one can analyze
its strength and limitations using the tools developed throughout
the work. Finally, in In Sec. \hyperref[sec:VIII]{VIII} we discuss
how our analysis might translate to more general ansatz and examplify
them on TNs, for which we can show that the sign bias does not exist.
Some technical details are provided in the Appendixes.

\section*{II. General framework and definitions \protect\label{sec:II}}

\subsection*{II.A Neural Networks}

Neural network have become very popular in recent years due to their
numerous applications \citep{https://doi.org/10.48550/arxiv.1506.00019,Alzubaidi_2021}.
In general, a neural network is a specific case of a parametric function
$f\left(x,\theta_{i}\right)$, with $x$ being the input to the network,
which might be images of hand written letters, time series, or almost
any other object which can be represented digitally. $\theta_{i}$
with $i\in\left\{ 1,...,p\right\} $ are the $p$ variational parameters
of the network; with different parameters the network represents different
functions. 

Based on the task at hand, one usually defines a loss function $\mathcal{L}\equiv\mathcal{L}\left(f\left(x,\theta\right)\right)$,
which is a functional of the parametric function, and minimizes it
by varying the parameters using various gradient descent methods.
The most basic gradient descent has the following update rule 
\begin{equation}
\theta_{i}^{n+1}=\theta_{i}^{n}-\Delta t\frac{\partial\mathcal{L}}{\partial\theta_{i}},\label{eq:2}
\end{equation}
with the gradient evaluated at step $n$ and $\Delta t$ being referred
to as the learning rate. The main characteristic of neural networks
that distinguish them from other parametric functions is their special
recursive form 
\begin{equation}
f\left(x,\theta_{i}\right)=h_{L}\left(h_{L-1}\left(h_{L-2}\left(...\right),\theta_{L-1}\right),\theta_{L}\right),\label{eq:3}
\end{equation}
with $L$ being an integer, usually referred to as the number of layers,
and $h_{l}$ being the output of the $l$th layer, $l=1.\cdots,L$,
which depends on the subset of parameters $\theta_{l}$ and the previous
output $h_{l-1}$. This structure is what allows for neural networks
to be very deep, and thus to represent complicated functions, while
still having tractable gradients \citep{HORNIK1991251}.

In this work we will consider a feedforward neural network with a
general number of layers $L$ and an activation function $\phi$,
as can be seen in Fig. \ref{fig:A-general-fully-1}, where 
\begin{equation}
h_{l}=\phi\left(W_{l}h_{l-1}+b_{l}\right),
\end{equation}
i.e., each layer is composed of an activation function acting element-wise
on the results of an affine transformation specified by $W_{l}\in\mathbb{R}^{p_{l}\times p_{l-1}}$
and $b_{l}\in\mathbb{R}^{p_{l}}$, with $p_{l}$ being referred to
as the width of the $l$th layer. Each affine transformation acts
on the previous hidden units $h_{l-1}\in\mathbb{R}^{p_{l-1}}$, with
$h_{0}\equiv x\in\mathbb{R}^{n}$ being the input of the network.
We group all of the trainable parameters of the networks and denote
it using a vector $\theta\in\mathbb{R}^{p}$. 

We will also mostly consider the continuous time limit of the gradient
descent updates in Eq. (\ref{eq:2}), where it turns into a gradient
flow 
\begin{equation}
\frac{d\theta_{i}}{dt}=-\frac{\partial\mathcal{L}}{\partial\theta_{i}}.
\end{equation}
Using the chain rule we get an ODE for the state vector 
\begin{equation}
\frac{d\psi_{\sigma}}{dt}=-\sum_{\sigma'}\Theta_{\sigma\sigma'}^{t}\frac{\partial\mathcal{L}}{\partial\psi_{\sigma}},\label{eq:6}
\end{equation}
with $\Theta_{\sigma\sigma'}^{t}=\sum_{i}\frac{\partial\psi_{\sigma}}{\partial\theta_{i}}\frac{\partial\psi_{\sigma'}}{\partial\theta_{i}}$.
The time dependence of $\Theta_{\sigma\sigma'}^{t}$ makes the analysis
of this equation intractable, since the dynamics still depends on
the parameters values. Fortunately, recent works in neural networks
analysis \citep{https://doi.org/10.48550/arxiv.1806.07572,https://doi.org/10.48550/arxiv.1711.00165,https://doi.org/10.48550/arxiv.1810.05148,neal2012bayesian}
have shown that when one takes the infinite width limit, i.e., the
number of hidden variables in each layer grows to infinity, several
useful properties emerge. Their applicability for the energy loss
function which we consider here is shown in \hyperref[subsec:App.A]{App. A}. 

The first property is that $\Theta$, which will be referred to as
the NTK in this context, is time independent and completely determined
by the initialization and architecture of the network. A second property
is that a random initialization of the parameters translates into
a normal distribution of the elements of the initial state vector,
i.e.,
\begin{equation}
\langle\sigma|\psi^{0}\rangle\sim\mathcal{N}(\mu,K),
\end{equation}
with $\ket{\psi^{0}}$ being the initial state, and where $K$ is
referred to as the Conjugate Kernel (CK), which is determined by the
initialization and network architecture. We will consider the case
where $\mu=0$, which corresponds to zero bias initialization. 

In our case of interest, andassuming a unique ground state of $H$
(extension to degenerate ground states are possible), the loss function
will be the energy loss function 
\begin{equation}
\mathcal{L}=\braket{E}\equiv\frac{\braket{\psi\lvert H\lvert\psi}}{\braket{\psi\vert\psi}}.
\end{equation}
Hence, using Eq. (\ref{eq:6}) we get the ODE for the state
\begin{equation}
\frac{d\ket{\psi}}{dt}=\frac{1}{\braket{\psi|\psi}}\Theta\left(\braket{E}-H\right)\ket{\psi},\label{eq:8}
\end{equation}
which will be our main interest throughout the paper. We note that
the non-triviality of the kernels is responsible for the basis dependence
of the dynamics and initialization. In the case of a trivial NTK,
the equation reduces to the imaginary time Schrödinger equation, and
would be invariant under basis transformations.

Throughout the paper we disregard the issues which may arise from
the use of Monte-Carlo sampling, and since we address the infinite
width limit, we can disregard any influence of the representability
of the network \citep{HORNIK1991251,https://doi.org/10.48550/arxiv.1806.07572}.
These settings thus allow us to eliminate other factors which might
influence the capability of neural network in solving the many-body
problem and focus on the trainability and dynamics. This in turn will
allow us to elucidate the fundamental interplay between the form of
the Hamiltonian in a given basis and the architecture of the NQS.

\begin{figure}[tb]
\centering{}\includegraphics[scale=0.5]{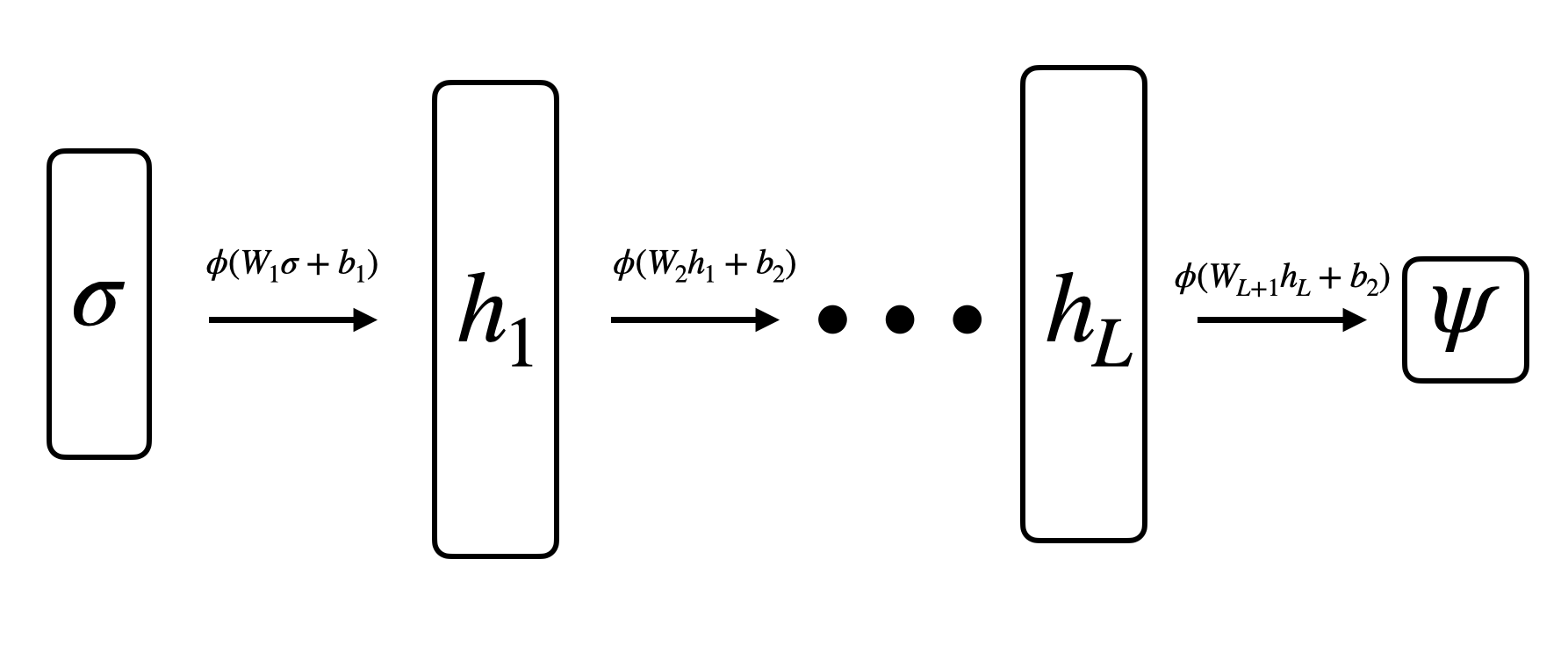}\caption{\protect\label{fig:A-general-fully-1}A general fully connected architecture
for NQS. The inputs are $\sigma\in\left\{ -1,1\right\} ^{N}$, each
hidden layer $h_{l}$ with $l=1,...,L$ represents the output of the
affine transformation and non-linear activation, $\phi\left(W_{l}h_{l-1}+b_{l}\right)$,
over the previous layer. The final output is the quantum amplitude
of the corresponding spin configuration $\psi\left(\sigma\right)$.}
\end{figure}

\subsection*{II.B Many-Body Spin Hamiltonians }

Many-body spin Hamiltonians are the subject of extensive study in
fields such as condensed matter physics, high-energy physics, and
quantum computation \citep{Albash_2018,Logan_2005}. In our case one
can always write them as 
\begin{equation}
H=\sum_{i}^{r}h_{i},
\end{equation}
with each $h_{i}$ being a tensor product of Pauli matrices, which
we denote by $X,Y,Z$, and identity matrices. Usually one is concerned
with Hamiltonians which have a certain local structure, so we will
mostly consider what is known as $k$-local Hamiltonian. For those
Hamiltonians each $h_{i}$ acts non-trivially on at most $k$ spins
or qubits. For example, the Transverse Field Ising Model (TFIM)
\begin{equation}
H_{\text{{TFIM}}}=-\sum_{i}hX_{i}-\sum_{\braket{i,j}}Z_{i}Z_{j},\label{eq:13}
\end{equation}
is a 2-local Hamiltonian model, with $\braket{i,j}$ indicates pairs
of nearest neighbors over a general interaction graph. 

A known class of Hamiltonians that was first brought up in the context
of QMC methods, are stoquastic Hamiltonians:
\begin{defn*}
\emph{Stoquastic Hamiltonian} - A Hamiltonian is said to be stoquastic
in a given tensor-product basis $\ket{\sigma}$ if it satisfies $\braket{\sigma|H|\sigma'}\leq0$,
$\forall\sigma\neq\sigma'$. 
\end{defn*}
An example is the TFIM. Stoquastic Hamiltonians have been an object
of interest due to their many special properties and applications.
Stoquastic Hamiltonian always have non-negative ground state amplitudes
$\braket{\sigma|g}\geq0$, and a Gibbs operator with non-negative
matrix elements, $\braket{\sigma|e^{-\beta H}|\sigma'}\geq0$, $\forall\beta$
in the given basis \citep{Klassen_2019}. Those properties are crucial
for the applicability of various QMC methods; and in most cases when
the Hamiltonian is non-stoquastic, QMC methods will encounter the
notorious sign problem \citep{Austin_2011}. Since the definition
of stoquastic Hamiltonian depends on the basis, in some cases one
can attempt to find a different tensor-product basis where the Hamiltonian
is stoquastic, although in general this task is NP-hard \citep{https://doi.org/10.48550/arxiv.1906.08800}\citep{Marvian_2019}. 

\section*{III. Kernel Properties\protect\label{sec:III}}

We first review previously known properties about the NTK and CK in
a more quantum mechanical language, following \citep{https://doi.org/10.48550/arxiv.1806.07572,https://doi.org/10.48550/arxiv.1907.10599}:
\begin{enumerate}
\item Both are positive definite matrices, i.e. $\braket{\psi|\Theta|\psi}>0,\ \braket{\psi|K|\psi}>0,\ \forall\ket{\psi}$
\item Their matrix elements are of the form $\bra{\sigma}\Theta\ket{\sigma'}=\Phi(\sigma\cdot\sigma'),\ \bra{\sigma}K\ket{\sigma'}=F(\sigma\cdot\sigma')$,
with $\sigma\in\{-1,1\}^{N}$ and $\Phi,\ F$ being functions which
depends on the number of layers and activation function that fully
characterize the kernels.
\item We define by $X_{\vert S\vert}\equiv\sum_{\mathcal{\vert S\vert}=\vert S\vert}X_{\mathcal{S}}$
the sum of all products $X_{\mathcal{S}}=\prod_{i\in S}X_{i}$ with
$\mathcal{S}\subset\{1,...,N\}$, acting on precisely $\vert S\vert$
spins. By the previous property, one can write the kernels in the
form $\Theta=\sum_{\lvert S\lvert\leq N}\alpha_{\lvert S\lvert}X_{\lvert S\lvert}$,
with $\alpha_{\lvert S\lvert}=\Phi(N-2\lvert S\lvert)$, and similarly
for the CK. 
\item It immediately follows from the previous property that the eigenstates
of the kernels are the vectors $\ket{s}$ with $s\in\left\{ -,+\right\} ^{N}$,
which are the product states of the $X$ Pauli matrix eigenstates.
The eigenvalue associated with each $\ket{s}$ depends only on $\lvert s\lvert$,
which we define to be the number of $s_{i}=-$ in $s$. Thus, the
spectrum of both kernels breaks down to $N+1$ degenerate subspaces,
each with dimension ${N \choose \lvert s\lvert}$. We denote the eigenvalue
associated with a sector $\lvert s\lvert$ by $\Theta_{\lvert s\lvert}$
and $K_{\lvert s\lvert}$ for the NTK and CK, respectively.
\item ``Weak Bias'' -- The eigenvalues are always ordered as follows
\begin{equation}
\Theta_{0}\geq\Theta_{2}\geq\cdots\geq\Theta_{2k}\geq\cdots,\label{eq:14}
\end{equation}
\[
\Theta_{1}\geq\Theta_{3}\geq\cdots\geq\Theta_{2k+1}\geq\cdots\ ,
\]
and similarly for the CK. It immediately follows that the maximal
eigenvalue of the kernels is either $\Theta_{0}$ with associated
eigenstate $\ket{+}\equiv\ket{+}^{\otimes N}$ , or $\Theta_{1}$
with associated eigenspace spanned by $\ket{-_{i}}$ with $-_{i}$
denoting a string of all $+$ except one $-$ sign in the $i$th position. 
\item If $\Phi\geq0$ then $\Theta_{0}$ is the maximal eigenvalue, and
similarly for the CK.
\item For the CK, if the activation function $\phi>0$ then $F>0$. We note
that this does not apply in general for the NTK. 
\item Denoting by $\kappa_{k}$ the $k$th unique largest eigenvalue of
$\Theta$, with a normalization such that $\kappa_{1}=1$, the eigenvalues
decay as $\sim\frac{1}{2^{Nk}}$. In the common case where the eigenvalues
are ordered by $\lvert s\lvert$ then $\Theta_{\lvert s\lvert}\sim\frac{1}{2^{N\lvert s\lvert}}$.
\end{enumerate}
We mention that as in \citep{https://doi.org/10.48550/arxiv.1907.10599},
one can associate a measure of ``complexity'' for each eigen-space,
since each one of them is fully characterized by the number $\lvert s\lvert$,
i.e., the number of minus signs in the states $\ket{s}$ which span
it. Since for a single qubit $\ket{+}=\frac{1}{\sqrt{2}}\left(\ket{0}+\ket{1}\right)$
and $\ket{-}=\frac{1}{\sqrt{2}}\left(\ket{0}-\ket{1}\right)$, the
number $\lvert s\lvert$ measure the number of dimensions in which
the state $\ket{s}$ is ``oscillating'' while being constant in
the other $N-\lvert s\lvert$ dimensions. This notion of complexity
stems from the relation to the discrete Fourier Transform, in which
a discrete time signal $x_{n}$ with $n\in\{0,...,N-1\}$ can be decomposed
as $x_{n}=\sum_{k=0}^{N-1}X_{k}e^{i\frac{2\pi}{N}kn}$. If the significant
contributions to the sum comes from large $k$ then the signal is
highly oscillating. 

\section*{IV. Main Results \protect\label{sec:IV}}

We begin by discussing the role of the NTK and CK, presenting basic
results and their implications, with derivations given in \hyperref[subsec:App.B]{App. B}

\subsection*{IV.A The NTK \protect\label{subsec:IV.A}}

In order to understand the role of the NTK we discretize the gradient
flow defined by Eq. (\ref{eq:8}),
\begin{equation}
\ket{\psi_{n+1}}=\text{argmin}_{\ket{\psi}}\left\{ \frac{\bra{\psi}H\ket{\psi}}{\braket{\psi|\psi}}+\frac{1}{2\Delta t}\left(\bra{\psi}-\bra{\psi_{n}}\right)\Theta^{-1}\left(\ket{\psi}-\ket{\psi_{n}}\right)\right\} ,
\end{equation}
with the index $n$ describing the discrete time step $t_{n}$ and
$\Delta t\equiv t_{n+1}-t_{n}$. One can see that this is the usual
gradient flow, yet instead of having the usual Euclidian metric, the
metric is $\Theta^{-1}$, which is the metric induced on the space
of the unnormalized state vectors when having the Euclidian metric
in parameter space. The dynamics can now be interpreted as follows:
at any point $\ket{\psi_{n}}$ the algorithm searches for the minima
of the energy in a small neighborhood defined by the metric $\Theta^{-1}$,
which favors changes along directions with small $\braket{\Theta^{-1}}$.
That bias toward directions with small inverse NTK expectation was
noted in previous works in other contexts, such as regression tasks
\citep{https://doi.org/10.48550/arxiv.1806.07572,https://doi.org/10.48550/arxiv.1907.10599,https://doi.org/10.48550/arxiv.1905.10264}.
The severity of that bias depends on the eigenvalues of the kernel.
Using Property 5 from the previous section, if $\Theta_{0}$ or $\Theta_{1}$
is much larger than any other eigenvalue, the kernel gradient flow
will be highly biased. 

A similar conclusion can be reached by examining the velocity along
the NTK eigenstates. Using Eq. (\ref{eq:8}) we have 
\begin{equation}
\frac{d\braket{s|\psi}}{dt}=\Theta_{\lvert s\lvert}\bra{s}\left(\braket{E}-H\right)\ket{\psi},
\end{equation}
which can be interpreted as having a generalized gradient flow with
a different learning rate in each direction $\ket{s}$, which is proportional
to $\Theta_{\lvert s\lvert}$. Thus the flow will be faster along
directions with larger $\Theta_{\lvert s\lvert}$. 

From the above we formulate the following compatibility conjecture
\begin{conjecture*}
Compatibility: For a Hamiltonian $H$ with a ground state $\ket{g}$,
the metric $\bra{g}\Theta\ket{g}$ is positively correlated with the
ease of finding the ground state of $H$ using ANN with NTK $\Theta$. 
\end{conjecture*}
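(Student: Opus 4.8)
Because the statement is phrased as a correlation with the informally defined notion of \emph{ease of finding the ground state}, the plan is first to commit to a concrete quantitative proxy --- the asymptotic exponential rate at which the flow in Eq.~(\ref{eq:8}) drives $\ket{\psi(t)}$ toward $\ket{g}$ --- and then to argue that this rate increases with $\bra{g}\Theta\ket{g}$. I would begin by expanding the ground state in the NTK eigenbasis $\{\ket{s}\}$, $\ket{g}=\sum_{s}c_{s}\ket{s}$, so that
\[
\bra{g}\Theta\ket{g}=\sum_{s}\Theta_{\lvert s\rvert}\,\lvert c_{s}\rvert^{2}
\]
is manifestly the ground-state-weighted average of the per-mode rates $\Theta_{\lvert s\rvert}$ already identified in the velocity equation for $\braket{s|\psi}$. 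This makes explicit the intuition that $\bra{g}\Theta\ket{g}$ is large precisely when $\ket{g}$ is supported on the sectors that the flow traverses quickly.

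Next I would linearize the dynamics about the fixed point. Writing $\ket{\psi}=\alpha\ket{g}+\ket{\delta}$ with $\ket{\delta}$ transverse to $\ket{g}$, and using that $\braket{E}$ is stationary at $\ket{g}$ together with $(E_{g}-H)\ket{g}=0$, the leading-order evolution of the transverse part becomes
\[
\frac{d\ket{\delta}}{dt}\approx\frac{1}{\braket{g|g}}\,P_{\perp}\,\Theta\left(E_{g}-H\right)\ket{\delta},
\]
so the convergence rate is set by the spectrum of the effective operator $M=P_{\perp}\Theta(E_{g}-H)P_{\perp}$. Since $E_{g}-H\preceq0$ on the excited subspace and $\Theta\succ0$, every transverse mode decays, with the slowest one controlling the rate. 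In parallel I would track the loss directly through $\frac{d\braket{E}}{dt}=-2\bra{\gamma}\Theta\ket{\gamma}$, where $\ket{\gamma}=\frac{1}{\braket{\psi|\psi}}(H-\braket{E})\ket{\psi}$ is the energy gradient; near the fixed point this yields a Rayleigh-type quotient in which $\Theta$ is weighted by the low-lying excitations, tying the instantaneous rate of energy descent to $\Theta$ evaluated on states neighboring $\ket{g}$.

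To close the correlation I would specialize to the strongly hierarchical regime of Property~8, $\Theta_{\lvert s\rvert}\sim 2^{-N\lvert s\rvert}$. There the buildup of each component $c_{s}$ is gated by $\Theta_{\lvert s\rvert}$, so a ground state concentrated on low-complexity (small $\lvert s\rvert$) sectors is both assembled faster and carries a larger $\bra{g}\Theta\ket{g}$, whereas a ground state forced into high-$\lvert s\rvert$ sectors suffers suppressed rates and small $\bra{g}\Theta\ket{g}$. A monotone comparison of these two quantities across such families of ground states would then deliver the asserted positive correlation.

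The main obstacle is that $\Theta$ and $H$ generically do not share an eigenbasis, so $M$ is non-normal and its spectral gap is not a clean function of the single scalar $\bra{g}\Theta\ket{g}$: the sectors favored by $\Theta$ need not coincide with the low-energy excitations that actually throttle convergence, and transient non-normal growth can decouple the instantaneous rate from the asymptotic one. This is exactly why the statement is advanced as a correlation rather than an identity or a two-sided bound, and why a fully rigorous monotone estimate appears out of reach without additional structure, such as approximate commutation of $\Theta$ and $H$ or a spectral-gap assumption. My proposal is therefore to establish the clean limiting cases rigorously and to support the general claim by the heuristic above together with numerics, consistent with its status as a conjecture.
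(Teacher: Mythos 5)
Your overall stance is the right one: the paper itself advances this statement only as a conjecture and never proves it, instead assembling supporting evidence, so refusing to claim a full proof and instead committing to a convergence-rate proxy plus rigorous special cases is exactly the paper's strategy. Your supporting arguments also largely parallel the paper's: the expansion $\braket{g|\Theta|g}=\sum_{s}\Theta_{\lvert s\rvert}\lvert c_{s}\rvert^{2}$ mirrors the paper's ``velocity per eigendirection'' equation $\frac{d\braket{s|\psi}}{dt}=\Theta_{\lvert s\rvert}\bra{s}\left(\braket{E}-H\right)\ket{\psi}$, and your linearization about the fixed point is the same calculation the paper performs. Two differences are worth noting. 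First, where you work with the non-normal operator $P_{\perp}\Theta\left(E_{g}-H\right)P_{\perp}$ and flag non-normality as the obstacle to relating the asymptotic rate to $\braket{g|\Theta|g}$, the paper symmetrizes: it writes the linearized solution as $\ket{\psi}=\ket{g}+\epsilon\sqrt{\Theta}e^{-\sqrt{\Theta}\left[H-E_{g}\right]\sqrt{\Theta}t}\sqrt{\Theta^{-1}}\ket{\phi_{0}}$, so the rate is governed by $\tilde{H}=\sqrt{\Theta}\left[H-E_{g}\right]\sqrt{\Theta}$, which is positive semidefinite and similar to $\Theta\left(H-E_{g}\right)$. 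The spectrum is therefore real and non-negative, the asymptotic decay rate is exactly the gap $\alpha_{1}$ of $\tilde{H}$, and your worry about non-normal transients contaminating the asymptotic rate dissolves; non-normality only affects transient amplitudes, not the exponent. Second, the paper's sharpest piece of evidence is Proposition 1: in the commuting case $\left[H,\Theta\right]=0$ it derives, from the closed-form implicit solution $\ket{\psi}=\exp\left[\Theta\left(f-Hh\right)\right]\ket{\psi_{0}}$, a \emph{necessary} condition $\frac{\braket{g|\Theta|g}}{\braket{E_{i}|\Theta|E_{i}}}>\frac{\braket{E}-E_{i}}{\braket{E}-E_{g}}$ for the ground-state weight to grow relative to other eigenstates --- a statement in which the conjectured metric appears verbatim. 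Your ``clean limiting cases'' remain unspecified by comparison; adopting the commuting case with its exponential solution would make that part of your proposal concrete, and the paper's App.~B lemmas (maximal gap of $\tilde{H}$ forces $\braket{\Theta}_{\ket{E_{i}}}\propto\frac{1}{E_{i}-E_{g}}$) show how to push the rate analysis further than your hierarchical-spectrum heuristic.
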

In order to lend further support to these ideas, let us consider the
specific case that $\left[H,\Theta\right]=0$. Then, denoting the
eigenbasis amplitudes of the normalized state by $\frac{\braket{E_{i}|\psi}}{\sqrt{\braket{\psi|\psi}}}\equiv\tilde{\psi}_{i}$
one can show the following 
\begin{prop}
If $\left[H,\Theta\right]=0$, a necessary condition for 
\[
\frac{d}{dt}\left(\frac{\tilde{\psi_{g}^{2}}}{\tilde{\psi}_{i}^{2}}\right)>0
\]
 is
\[
\frac{\braket{g|\Theta|g}}{\braket{E_{i}|\Theta|E_{i}}}>\frac{\braket{E}-E_{i}}{\braket{E}-E_{g}}.
\]
For $E_{i}<\braket{E}$ the condition is non-trivial. 
\end{prop}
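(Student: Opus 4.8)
The plan is to exploit the commutation $[H,\Theta]=0$ to decouple the dynamics in the shared eigenbasis, reducing the whole problem to a one-dimensional computation for each amplitude. First I would diagonalize both operators simultaneously, writing $H\ket{E_i}=E_i\ket{E_i}$ and $\Theta\ket{E_i}=\Theta_i\ket{E_i}$ with $\Theta_i\equiv\braket{E_i|\Theta|E_i}$ (in particular $\Theta_g=\braket{g|\Theta|g}$). Projecting the gradient-flow equation (\ref{eq:8}) onto $\bra{E_i}$ and using $\bra{E_i}\Theta=\Theta_i\bra{E_i}$ together with $\bra{E_i}H=E_i\bra{E_i}$, the unnormalized amplitude $a_i\equiv\braket{E_i|\psi}$ obeys the decoupled linear ODE
\[
\frac{da_i}{dt}=\frac{\Theta_i\left(\braket{E}-E_i\right)}{\braket{\psi|\psi}}\,a_i.
\]

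The simplification I would exploit next is that the common normalization $\braket{\psi|\psi}$ cancels in the ratio of interest: since $\tilde{\psi}_i^{2}=a_i^{2}/\braket{\psi|\psi}$, one has $\tilde{\psi}_g^{2}/\tilde{\psi}_i^{2}=a_g^{2}/a_i^{2}$. Differentiating and organizing the result through logarithmic derivatives gives
\[
\frac{d}{dt}\!\left(\frac{\tilde{\psi}_g^{2}}{\tilde{\psi}_i^{2}}\right)=\frac{2a_g^{2}}{a_i^{2}}\left(\frac{\dot a_g}{a_g}-\frac{\dot a_i}{a_i}\right)=\frac{2a_g^{2}}{a_i^{2}\braket{\psi|\psi}}\Big[\Theta_g\left(\braket{E}-E_g\right)-\Theta_i\left(\braket{E}-E_i\right)\Big].
\]
Because the prefactor $2a_g^{2}/\left(a_i^{2}\braket{\psi|\psi}\right)$ is strictly positive, positivity of the derivative is equivalent to $\Theta_g\left(\braket{E}-E_g\right)>\Theta_i\left(\braket{E}-E_i\right)$.

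Finally I would rearrange this into the stated form, where the only point requiring care is the sign of the quantity one divides by. By Property 1 the NTK is positive definite, so $\Theta_i>0$, and since $\ket{g}$ is the ground state one has $\braket{E}\geq E_g$ with equality only at the exact ground state, hence $\braket{E}-E_g>0$. Dividing by the positive quantity $\Theta_i\left(\braket{E}-E_g\right)$ preserves the inequality direction and yields $\braket{g|\Theta|g}/\braket{E_i|\Theta|E_i}>\left(\braket{E}-E_i\right)/\left(\braket{E}-E_g\right)$, as claimed. In fact this chain is an equivalence, so the condition is also sufficient, though only necessity is asserted. The closing remark is then immediate: for $E_i>\braket{E}$ the right-hand side is negative while the left-hand side is a ratio of positive eigenvalues, so the inequality holds automatically, whereas for $E_i<\braket{E}$ (using $E_g<E_i$) the right-hand side lies in $(0,1)$ and the condition becomes a genuine constraint. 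The computation is elementary once the eigenbasis decoupling is in place, so I expect no serious obstacle; the only subtlety is the positivity bookkeeping in the final division, which is exactly where Property 1 and the variational bound $\braket{E}\geq E_g$ are used.
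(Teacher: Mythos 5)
Your proof is correct and takes essentially the same route as the paper: both exploit $[H,\Theta]=0$ to work in the shared eigenbasis, reduce positivity of the derivative to $\Theta_g\left(\braket{E}-E_g\right)>\Theta_i\left(\braket{E}-E_i\right)$, and then divide by the positive quantities $\Theta_i$ and $\braket{E}-E_g$. The only cosmetic difference is that the paper first writes the implicit exponential solution $\ket{\psi}=\exp\left[\Theta\left(f-Hh\right)\right]\ket{\psi_0}$ and then differentiates the squared amplitude ratio, whereas you differentiate the decoupled amplitude ODE directly — the underlying computation is identical.
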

Using the above we get that in order for the ground state component
to increase relative to the other components, the bias induced toward
the ground state relative to other directions cannot be too small.
One way to satisfy the criteria $\forall t$ and every eigenstate
is to have 
\begin{equation}
\braket{g|\Theta|g}\gg\braket{E_{i}|\Theta|E_{i}}\ ,\ \forall i>0.
\end{equation}
For some energy eigenstates or parts of the optimization this criterion
is actually necessary. Consider the case where $\braket{E}\gg E_{i}$,
which might occur for $\ket{E_{i}}$ being a low energy eigenstate
such as the first excited state and $t\rightarrow0$, or when we have
a relatively small energy gap between the ground state and the first
excited state. In that case the necessary condition translates into
\begin{equation}
\frac{\braket{g|\Theta|g}}{\braket{E_{i}|\Theta|E_{i}}}\gtrsim1,
\end{equation}
which means that the bias induced toward the ground state must be
larger than the bias toward other low energy eigenstates. Otherwise,
the energy might decrease, yet due to the NQS becoming more similar
to a low energy excited state rather than to the ground state. Only
when $\braket{E}\approx E_{i}$ after significant time elapsed, will
the NQS start to become more similar to the ground state. Thus, it
is preferable that the NTK expectation over the ground state will
be much larger than the expectation over the other energy eigenstates,
and in some cases this is even necessary in order to achieve reasonable
convergence times.

Given some value of $\bra{g}\Theta\ket{g}$, one should ask how the
NTK expectation values for the other energy eigenstates influence
the dynamics. This can be understood from the solution for Eq. (\ref{eq:8})
near the ground state, i.e. $\ket{\psi}=\ket{g}+\epsilon\ket{\phi}$
with $\epsilon\ll1$ and the initial perturbation $\ket{\phi_{0}}$
satisfying$\braket{g|\phi_{0}}=0$ and $\braket{\phi_{0}|\phi_{0}}=1$,
\begin{equation}
\ket{\psi}=\ket{g}+\epsilon\sqrt{\Theta}e^{-\sqrt{\Theta}\left[H-E_{g}\right]\sqrt{\Theta}t}\sqrt{\Theta^{-1}}\ket{\phi_{0}},
\end{equation}
where we assumed without loss of generality that the norm at $t\rightarrow\infty$
is 1, see \hyperref[subsec:App.B]{App. B} for further details. From
the solution one can see that the convergence to the ground state
will be dictated by the eigenvalues of $\tilde{H}=\sqrt{\Theta}\left[H-E_{g}\right]\sqrt{\Theta}$.
This matrix is positive definite, so its ground state is $\sqrt{\Theta^{-1}}\ket{g}$
with eigenvalue 0. Denoting the eigenvalues of $\tilde{H}$ by $\alpha_{i}$
\begin{equation}
\epsilon^{2}\alpha_{1}\bra{\phi_{0}}\sqrt{\Theta^{-1}}\ket{\alpha}^{2}e^{-2\alpha_{1}t}\leq\lvert\braket{E}-E_{g}\lvert\leq\epsilon^{2}\bra{\phi_{0}}\left[H-E_{g}\right]\ket{\phi_{0}}\alpha e^{-2\alpha_{1}t}
\end{equation}
with $\ket{\alpha_{1}}$ being the first excited state of $\tilde{H}$
with eigenvalue $\alpha_{1}$. Thus, the energy converges to the ground
state as $\sim e^{-2\alpha_{1}t}$, and so the larger $\alpha_{1}$
the fastest the convergence will be. 

We are interested in maximizing $\alpha_{1}$ given an Hamiltonian.
As we show in \hyperref[subsec:App.B]{App. B}, it can be seen that
this maximum occurs when $\alpha_{i}=C$ is independent of $i$ for
$i>0$. We also show that it means that $\braket{\Theta}_{\ket{E_{i}}}\propto\frac{1}{E_{i}-E_{g}}$
for $i>0$. This result provides an indication that a desirable relation
between the Hamiltonian and the NTK is such that their spectra are
inversely correlated, such that increasing the NTK expectation means
decreasing the energy. 

The same insight also arises by considering Eq. (\ref{eq:8}) for
the specific case $\left[H,\Theta\right]=0$ and noticing that $\frac{d\psi_{i}^{2}}{dt}\propto\braket{E_{i}|\Theta|E_{i}}\left(\braket{E}-E_{i}\right)\psi_{i}^{2}$.
Thus, $\psi_{i}$ decays for $E_{i}>\braket{E}$ and increases for
$E_{i}<\braket{E}$. Considering low $\braket{E}$ close to $E_{g}$,
high energy components with $E_{i}\gg\braket{E}$ will decay fast
due to the proportionality with $\braket{E}-E_{i}$. On the other
hand, for low energy components with $E_{i}\gtrsim\braket{E}$, one
will need $\braket{E_{i}|\Theta|E_{i}}$ to be large in order to compensate
for the small $\braket{E}-E_{i}$. Since having a large $\braket{E_{i}|\Theta|E_{i}}$
for low energy eigenstates will also increase the low energy components
of the state for large $\braket{E}\gg E_{g}$, we suspect that large
bias toward low energy states in general will cause fast energy decay
for the entirety of the optimization. 

Combining this idea with the previous Compatibility Conjecture we
formulate a Strong Compatibility Conjecture 
\begin{conjecture*}
Strong Compatibility: For a Hamiltonian $H$ with eigenstates $\ket{E_{i}}$,
the convergence rate toward the minimal energy is positively correlated
with whether or not the following condition is met
\[
\braket{\Theta}_{\ket{g}}\gg\braket{\Theta}_{\ket{E_{1}}}>\cdots>\braket{\Theta}_{\ket{E_{k}}}>\cdots,
\]
i.e., if the task of decreasing the energy is compatible with the
task of increasing the NTK expectation.
\end{conjecture*}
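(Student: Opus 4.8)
The plan is to treat the Strong Compatibility Conjecture not as a single clean identity but as a correlation statement that I would first make precise and then establish rigorously in the two tractable regimes the paper has already isolated, arguing the general case by continuity and numerics. First I would fix a quantitative proxy for ``convergence rate.'' Near the ground state the linearized solution $\ket{\psi}=\ket{g}+\epsilon\sqrt{\Theta}e^{-\tilde{H}t}\sqrt{\Theta^{-1}}\ket{\phi_0}$ with $\tilde{H}=\sqrt{\Theta}\left[H-E_g\right]\sqrt{\Theta}$ gives energy decay $\sim e^{-2\alpha_1 t}$, so I would take $\alpha_1$, the smallest nonzero eigenvalue of $\tilde{H}$, as the late-time rate and use the instantaneous $\frac{d\braket{E}}{dt}$ read off from Eq.~(8) as the global-regime rate. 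The ``condition'' I would read as the family consisting of $\braket{\Theta}_{\ket{g}}\gg\braket{\Theta}_{\ket{E_1}}$ together with the monotone chain $\braket{\Theta}_{\ket{E_1}}>\braket{\Theta}_{\ket{E_2}}>\cdots$, and ``positively correlated'' as the statement that strengthening these inequalities, under a fixed normalization such as $\mathrm{tr}\,\Theta$ or fixed $\braket{g|\Theta|g}$, does not decrease and generically increases the chosen rate.

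Second I would establish the perturbative half. Holding $\braket{g|\Theta|g}$ and the spectrum of $H$ fixed, I would maximize $\alpha_1$ over admissible $\Theta$. This is a constrained spectral optimization whose extremum, as sketched in App.~B, is the flat profile $\alpha_i=C$ for all $i>0$; translating back through the definition of $\tilde{H}$ this is exactly $\braket{\Theta}_{\ket{E_i}}\propto\frac{1}{E_i-E_g}$. Since $E_1<E_2<\cdots$, the optimal profile automatically realizes the decreasing chain $\braket{\Theta}_{\ket{E_1}}>\braket{\Theta}_{\ket{E_2}}>\cdots$, while maximizing $\braket{g|\Theta|g}$ relative to these supplies the $\gg$. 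I would then prove a local monotonicity statement: a first-order perturbation of $\Theta$ that moves its excited-state expectations toward the inverse-energy ordering raises $\alpha_1$, which is precisely the ``positive correlation'' in the late-time regime.

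Third I would handle the global regime in the commuting case $\left[H,\Theta\right]=0$, where Eq.~(8) diagonalizes into $\frac{d\psi_i^2}{dt}\propto\braket{E_i|\Theta|E_i}\left(\braket{E}-E_i\right)\psi_i^2$. Combining Proposition~1 across all $i$ and all $t$, I would argue that the ordering condition is what guarantees the ground-state amplitude ratio $\tilde{\psi}_g^2/\tilde{\psi}_i^2$ grows throughout the trajectory: the $\gg$ forces growth against the low-lying states at early times when $\braket{E}\gg E_i$, whereas the decreasing chain ensures high-energy components are shed first and the useful low-energy weight is retained, so the energy decreases along the ground-state direction rather than stalling on an excited state.

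The hard part will be the general, non-commuting, fully nonlinear case together with the very notion of ``correlation.'' When $\left[H,\Theta\right]\neq0$ the eigen-decomposition of Eq.~(8) breaks down, $\braket{E}$ is time dependent, and no exact closed form is available; I expect only a perturbative argument in $\|[H,\Theta]\|$ around the commuting solution, supplemented by a Gr\"onwall-type bound showing the rate depends continuously on the NTK expectations, to be genuinely provable. Likewise, converting ``positively correlated'' into a theorem requires either a monotonicity statement under a partial order on kernels, which the two regimes above supply only locally, or an ensemble average over Hamiltonians and initializations. A clean unconditional proof valid for every $H$ is, I believe, the real obstruction, which is exactly why the statement is posed as a conjecture and supported numerically in Sec.~VI rather than proved outright.
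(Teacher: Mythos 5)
Your proposal is sound and follows essentially the same route as the paper: the statement is posed there as a conjecture precisely because no general proof exists, and the supporting evidence the paper assembles --- the linearized solution governed by $\tilde{H}=\sqrt{\Theta}\left[H-E_{g}\right]\sqrt{\Theta}$ with rate $\alpha_{1}$, the constrained optimization in App.~B showing the optimal gap occurs at $\braket{\Theta}_{\ket{E_{i}}}\propto\frac{1}{E_{i}-E_{g}}$, the commuting-case velocity equation $\frac{d\psi_{i}^{2}}{dt}\propto\braket{E_{i}|\Theta|E_{i}}\left(\braket{E}-E_{i}\right)\psi_{i}^{2}$ together with Proposition~1, and the numerics of Sec.~VI --- is exactly the program you outline. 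Your closing diagnosis, that the non-commuting fully nonlinear case and the formalization of ``positively correlated'' are the genuine obstructions, matches the paper's reason for leaving this as a conjecture.
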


\subsection*{IV.B The CK \protect\label{subsec:IV.B}}

The role of the CK is at the initialization, since it is the covariance
matrix of the Gaussian distribution of the initial state. One can
see that for a general $N$ qubits normalized state $\ket{x}$ 
\begin{equation}
\mathbb{E}\left(\frac{\braket{\psi|x}^{2}}{\braket{\psi|\psi}}_{t=0}\right)\approx\frac{\bra{x}K\ket{x}}{\text{Tr}(K)},
\end{equation}
meaning that the initial overlap with the ground state is $\frac{\bra{g}K\ket{g}}{\text{Tr}(K)}$
on average, and the approximation should be valid for a large number
of qubits, see \hyperref[subsec:App.B]{App. B} for the argument.
Thus, in general one would like a high expectation value of the CK
over the ground state in order for the initial state to be close to
the ground state. If either of the eigenvalues $K_{0}$ or $K_{1}$
is much larger than any other value, then the initialization will
be highly biased; for example if $K_{0}\gg K_{i}\ \forall i>0$, then
$\mathbb{E}\left(\frac{\braket{\psi|g}^{2}}{\braket{\psi|\psi}}\right)\approx\braket{g|+}^{2}$. 

The average initial energy can also be seen to be 
\begin{equation}
\mathbb{E}\left(\braket{E}_{t=0}\right)\approx\frac{\text{Tr}\left(KH\right)}{\text{Tr}\left(K\right)}.
\end{equation}
Thus, again, in order to initialize with a low energy, lower energy
states should have high CK expectation values, similarly to the desired
relation between the NTK and the Hamiltonian. We note that one can
calculate the average initial values of higher moments by using Wick's
theorem due to the Gaussian nature of the initial distribution.

\subsection*{IV.C Convergence \protect\label{subsec:IV.C}}

In this section we wish to state general results about the convergence
properties of the gradient descent dynamics. We begin with presenting
a general result about the convergence of the dynamical system.  
\begin{thm}
The continuous time system described by Eq. (\ref{eq:8}) will always
converge to some state. The fixed points of the system are any state
that is proportional to an eigenstate of $H$, with the only stable
point being the one proportional to the ground state. 
\end{thm}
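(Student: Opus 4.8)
The plan is to use the energy $\braket{E}$ itself as a Lyapunov function and to combine a LaSalle-type invariance argument (for global convergence to the fixed-point set) with a linear stability analysis at each fixed point (to single out the ground state). Two preliminary observations organize the argument. Since $\braket{E}=\braket{\psi|H|\psi}/\braket{\psi|\psi}$ depends only on the ray of $\ket{\psi}$, and the scalar prefactor $1/\braket{\psi|\psi}$ in Eq.~(\ref{eq:8}) is strictly positive, that prefactor is merely a state-dependent \emph{positive} reparametrization of time and alters neither the fixed points nor their stability type. I would therefore pass to the compact unit sphere of normalized states by writing $\ket{\psi}=r\ket{\hat\psi}$ and rescaling time through $ds=dt/r^{2}$, which yields the reduced flow
\begin{equation}
\frac{d\ket{\hat\psi}}{ds}=\left(I-\ket{\hat\psi}\bra{\hat\psi}\right)\Theta\left(\braket{E}-H\right)\ket{\hat\psi}.
\end{equation}

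First I would identify the fixed points directly from Eq.~(\ref{eq:8}): as $1/\braket{\psi|\psi}\neq0$ and $\Theta$ is positive definite hence invertible (Property 1), the right-hand side vanishes iff $\left(\braket{E}-H\right)\ket{\psi}=0$, i.e.\ iff $\ket{\psi}$ is an eigenstate of $H$ (necessarily with eigenvalue $\braket{E}$), which is the claimed fixed-point set. Next I would verify monotonicity of the energy: using $\braket{\hat\psi|d\hat\psi/ds}=0$ and $\bra{\hat\psi}H\left(I-\ket{\hat\psi}\bra{\hat\psi}\right)=\bra{\hat\psi}\left(H-\braket{E}\right)$, a short computation gives
\begin{equation}
\frac{d\braket{E}}{ds}=-2\bra{w}\Theta\ket{w}\le0,\qquad\ket{w}\equiv\left(H-\braket{E}\right)\ket{\hat\psi},
\end{equation}
and positive definiteness of $\Theta$ forces equality iff $\ket{w}=0$, i.e.\ exactly on the eigenstates. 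In particular this shows the sphere-flow equilibria coincide with the eigenstates, the vanishing of $d\braket{E}/ds$ ruling out spurious equilibria where the velocity is merely radial.

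With a strict Lyapunov function on a compact set (on which the smooth reduced flow is forward-complete), convergence follows from LaSalle's invariance principle: every trajectory approaches the largest invariant subset of $\{d\braket{E}/ds=0\}$, which is the eigenstate set. Because the $\omega$-limit set of a bounded trajectory is connected while the eigenstates are isolated for a nondegenerate spectrum, the limit set reduces to a single eigenstate, establishing convergence ``to some state''; degenerate excited eigenspaces would be treated by the same reasoning applied within each invariant eigenspace. For the stability claim I would exploit that the reduced flow is exactly the Riemannian gradient descent of $\braket{E}$ on the sphere in the $\Theta^{-1}$ metric identified in Sec.~IV.A, so that a critical point is asymptotically stable iff it is a strict local minimum of $\braket{E}$ and unstable otherwise. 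As the spectrum is ordered, the unique ground state $\ket{g}$ is the unique global (hence strict local) minimum, whereas every excited state $\ket{E_i}$ is a saddle, since perturbing toward any lower eigenstate $\ket{E_j}$, $E_j<E_i$, strictly lowers $\braket{E}$. A confirming linearization about $\ket{E_i}$ gives the generator $L=P\,\Theta\,(E_i-H)\,P$ on $\ket{E_i}^{\perp}$ with $P=I-\ket{E_i}\bra{E_i}$, whose eigenvalues, by a Sylvester-inertia argument through the congruence $\sqrt{\Theta}$ (the operator $\tilde{H}$ met above being its $i=g$ instance), carry the signs of $E_i-E_j$: all negative at $\ket{g}$ and at least one positive at each excited state. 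Hence the ground state is the unique stable fixed point.

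The step I expect to be the main obstacle is the normalization and compactness bookkeeping rather than any algebraic computation: one must justify that the physical (projective) dynamics is faithfully captured by the reduced flow on the sphere, that $s\to\infty$ genuinely corresponds to $t\to\infty$ (controlling the radial growth so that $\int dt/r^{2}$ diverges), and that the $\omega$-limit set collapses to a single state when the spectrum is degenerate. Once these are in place, the Lyapunov estimate and the inertia-based stability analysis are routine given the positive definiteness of $\Theta$.
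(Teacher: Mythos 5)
Your proposal is correct, and its two pillars are the same as the paper's: the energy serves as a Lyapunov function, with $\frac{d\braket{E}}{dt}\propto-\bra{\psi}\left(H-\braket{E}\right)\Theta\left(H-\braket{E}\right)\ket{\psi}\leq0$ vanishing exactly on eigenstates (positive definiteness of $\Theta$ also pins down the fixed points), and stability is decided by the inertia of $\sqrt{\Theta}\left(H-E_{i}\right)\sqrt{\Theta}$. Where you genuinely differ is the convergence mechanism and the linearization bookkeeping: the paper argues convergence simply from monotone bounded energy and then solves the linearized \emph{unnormalized} equation $\frac{d\ket{\phi}}{dt}=-\Theta\left(H-E_{i}\right)\ket{\phi}$ explicitly, noting that $\sqrt{\Theta}\left(H-E_{i}\right)\sqrt{\Theta}$ has negative expectation on $\sqrt{\Theta^{-1}}\ket{g}$ unless $E_{i}=E_{g}$; you instead compactify to the unit sphere, apply LaSalle's invariance principle, and use connectedness of the $\omega$-limit set together with a projected linearization $P\Theta\left(E_{i}-H\right)P$. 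Your route buys something real: monotone boundedness of $\braket{E}$ strictly speaking only yields convergence of the \emph{energy}, whereas LaSalle on the compact sphere delivers convergence of the \emph{state}, which is what the theorem actually asserts, so on that point your argument is tighter than the paper's. Conversely, the "bookkeeping obstacle" you flag dissolves via a conservation law the paper itself uses in proving Theorem 2: Eq. (\ref{eq:8}) conserves $\braket{\psi|\Theta^{-1}|\psi}$, hence $\Theta_{\min}\braket{\psi_{0}|\Theta^{-1}|\psi_{0}}\leq\braket{\psi|\psi}\leq\Vert\Theta\Vert\braket{\psi_{0}|\Theta^{-1}|\psi_{0}}$ for all $t$, so the radius is bounded above and below, $\int dt/r^{2}$ diverges, and your reparametrized time faithfully captures $t\rightarrow\infty$. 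The one point left loose in both treatments is a degenerate excited eigenvalue, where the $\omega$-limit set could a priori be a continuum of eigenstates; your remark about "the same reasoning within each invariant eigenspace" does not quite settle this (a {\L}ojasiewicz-type argument for the real-analytic energy on the sphere would), but the paper is no more careful there, so this is not a gap relative to its own standard.
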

In the following theorem we provide different bounds on the rate of
convergence, which, under reasonable conditions, is exponential with
time. 
\begin{thm}
Given a Hamiltonian $H$ with eigenstates $\ket{E_{i}}$, eigenvalues
$E_{i}$, a ground state $\ket{g}$ with energy $E_{g}$ and a gap
$\Delta$. Denoting the energy by $\braket{E}$, $\braket{E_{0}}$
as the initial energy, and the ground state overlap by $\tilde{\psi}_{g}$,
then for every $C>2$\\
\textbf{Case (1)}: If $\tilde{\psi}_{g}\neq0\ \forall t$ then
\[
\lvert\braket{E}-E_{g}\lvert\leq\begin{cases}
\braket{E_{0}}-E_{g}\lvert e^{-a\Delta t}, & \lvert\braket{E}-E_{g}\lvert\geq\frac{\Delta}{C},\\
\braket{E_{0}}\frac{1}{e^{tb}+\lvert\braket{E_{0}}-E_{g}\lvert\frac{a}{2b}\left(e^{tb}-1\right)}, & \lvert\braket{E}-E_{g}\lvert\leq\frac{\Delta}{C},
\end{cases}
\]
with $a=\frac{\min\left(\tilde{\psi}_{g}^{2}\right)}{\lvert\lvert\Theta\lvert\lvert C}\frac{1}{\braket{g|\Theta^{-1}|g}}$,
$b=\frac{C-1}{C^{2}\Vert\Theta\Vert}\max\left(\frac{1}{2^{N+1}-2}\min_{i\geq1}\left(\frac{\lvert E_{i}-\Delta/C\lvert}{\braket{i|\Theta^{-1}|i}}\right),\ \Delta\min_{i}\left(\Theta_{i}\right)\right)$,
where $\min\left(\tilde{\psi}_{g}^{2}\right)$ is a minimum over the
entire time of the optimization. Here $\Vert\cdot\Vert$ denotes the
operator norm. \\
\textbf{Case (2)}: If $\ket{g}$ is an eigenstate of $\Theta$ then
\[
\lvert\braket{E}-E_{g}\lvert\leq Ae^{-\alpha\Delta t}\ \ \text{for}\ \ \lvert\braket{E}-E_{g}\lvert\geq\frac{\Delta}{C},
\]
with $\alpha=\frac{2\braket{g|\Theta|g}}{\Vert\Theta\Vert C}$, $A=\frac{\Vert\Theta\Vert^{2}}{\Theta_{\min}^{2}}\frac{\Vert H\Vert-E_{g}}{\tilde{\psi}_{g}^{2}\left(t=0\right)}$
and, $\Theta_{\min}^{2}$ being the smallest eigenvalue of $\Theta$.
For $\lvert\braket{E}-E_{g}\lvert\leq\frac{\Delta}{C}$ the bound
is the same as in (1).\\
\textbf{Case (3)}: If $\left[H,\Theta\right]=0$ then
\[
\lvert\braket{E}-E_{g}\lvert\leq\lvert\braket{E_{0}}-E_{g}\lvert\frac{1}{e^{tb}+\lvert\braket{E_{0}}-E_{g}\lvert\frac{a}{2b}\left(e^{tb}-1\right)}\ \ \text{for}\ \ \lvert\braket{E}-E_{g}\lvert\leq\frac{\Delta}{C},
\]
with $a=\frac{\Theta_{\min}}{\Vert\Theta\Vert C}\tilde{\psi}_{g}^{2}\left(t=0\right)\braket{g|\Theta|g}$
and $b=\frac{C-1}{C^{2}\Vert\Theta\Vert}\min_{i\geq1}\left(\braket{i|\Theta|i}\left(E_{i}-\frac{\Delta}{C}\right)\right)$.
For $\lvert\braket{E}-E_{g}\lvert\geq\frac{\Delta}{C}$ the bound
is the same as in (2).
\end{thm}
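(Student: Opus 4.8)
The plan is to reduce the entire statement to a single scalar differential inequality for the energy error $\delta(t)\equiv\braket E-E_g\ge 0$ and then integrate it by comparison. First I would differentiate the Rayleigh quotient along Eq.~(\ref{eq:8}). Writing $N\equiv\braket{\psi|\psi}$ and $\ket\chi\equiv(H-\braket E)\ket\psi$ (so that $\braket{\psi|\chi}=0$ and $\braket{\chi|\chi}=N\,\mathrm{Var}(E)$, with $\mathrm{Var}(E)$ the energy variance of the normalized state), a direct computation yields the master equation
\[
\frac{d\braket E}{dt}=-\frac{2}{N^{2}}\braket{\chi|\Theta|\chi}\le 0,
\]
which already re-proves the monotone decrease (the dissipative content of Theorem~1). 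Everything then hinges on lower-bounding the dissipation $\braket{\chi|\Theta|\chi}$ by a function of $\delta$.

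For the superlinear (quadratic) piece I would apply Cauchy--Schwarz in the $\Theta$ inner product with the single test vector $\ket g$: since $\braket{g|\chi}=(E_g-\braket E)\braket{g|\psi}=-\delta\,\tilde\psi_g\sqrt N$, one gets $\braket{\chi|\Theta|\chi}\ge\braket{g|\chi}^2/\braket{g|\Theta^{-1}|g}=\delta^2\tilde\psi_g^2 N/\braket{g|\Theta^{-1}|g}$, i.e.\ a dissipation term $\propto\delta^2$. This alone controls the regime near the ground state and, after using $\delta\ge\Delta/C$ to trade one power via $\delta^2\ge(\Delta/C)\delta$, gives the clean exponential rate of Case~(1)/(2). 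For the linear piece---needed to keep the bound effective as $\delta\to0$, where $\tilde\psi_g^2\ge 1-\delta/\Delta\ge(C-1)/C$---I would instead run Cauchy--Schwarz against the excited eigenvectors $\ket{E_i}$, average the $2^N-1$ resulting inequalities (producing the $2^{N+1}-2$ prefactor), factor one power of $E_i-\braket E$ out of the minimum (using $E_i-\braket E\ge|E_i-\Delta/C|$ once $E_g=0$ is fixed and $\braket E\le\Delta/C$), and collapse the remaining sum with the exact identity $\sum_{i\ge1}(E_i-\braket E)\tilde\psi_i^2=\tilde\psi_g^2\,\delta$. A cruder alternative---$\braket{\chi|\Theta|\chi}\ge\Theta_{\min}\braket{\chi|\chi}=\Theta_{\min}N\,\mathrm{Var}(E)$ together with the gap inequality $\mathrm{Var}(E)\ge\delta(\Delta-\delta)$---supplies the second candidate, and the $\max$ in $b$ simply keeps whichever is larger.

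Putting the two estimates together produces a Riccati differential inequality $\dot\delta\le-b\delta-\tfrac a2\delta^2$. Comparison with the solution of the corresponding equality, which is exactly $\delta_0\big(e^{bt}+\delta_0\tfrac{a}{2b}(e^{bt}-1)\big)^{-1}$, gives the closed-form near-ground-state bound, while dropping the quadratic term (or, in the far regime, linearizing it through $\delta\ge\Delta/C$) gives the pure exponential bounds. The specializations are then bookkeeping: when $\ket g$ is an eigenvector of $\Theta$ (Case~2) or $[H,\Theta]=0$ (Case~3), $\braket{g|\Theta^{-1}|g}=1/\braket{g|\Theta|g}$ and $\braket{i|\Theta^{-1}|i}=1/\braket{i|\Theta|i}$, so the generic inverse-kernel expectations collapse to the stated spectral quantities; moreover the eigenvector property lets the overlap $\tilde\psi_g$ be controlled autonomously, replacing the trajectory-wide $\min(\tilde\psi_g^2)$ by its initial value and absorbing the transient into the prefactor $A=\tfrac{\Vert\Theta\Vert^2}{\Theta_{\min}^2}\tfrac{\Vert H\Vert-E_g}{\tilde\psi_g^2(t=0)}$.

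The step I expect to be the main obstacle is controlling the norm factor $1/N$ that the master equation carries: because Eq.~(\ref{eq:8}) is not norm-preserving ($\dot N=\tfrac2N\bra\psi\Theta(\braket E-H)\ket\psi$ is sign-indefinite), the instantaneous rate in physical time $t$ depends on $\braket{\psi|\psi}$, and every rate constant in the theorem carries a compensating $1/\Vert\Theta\Vert$. I would handle this by bounding $N$ uniformly along the flow in terms of $\Vert\Theta\Vert$ (equivalently, reparametrizing by $d\tau=dt/N$ and then bounding $t$ versus $\tau$ through that norm control), which is exactly what converts the $\Theta$-only estimates above into the stated constants and forces the use of $\min(\tilde\psi_g^2)$ in Case~(1). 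The remaining effort is routine: checking the elementary variance/gap inequalities, verifying the averaging bound, and confirming that the comparison lemma applies to the (decreasing) right-hand side.
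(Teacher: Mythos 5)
Your overall strategy is indeed the paper's: the same dissipation identity for $d\braket{E}/dt$, Cauchy--Schwarz against $\ket{g}$ for the quadratic term, Cauchy--Schwarz against each excited eigenstate plus the $1/(2d-2)$ averaging (the $2^{N+1}-2$ factor) and the identity $\sum_{i>0}\tilde{\psi}_i^2(E_i-\braket{E})=\tilde{\psi}_g^2(\braket{E}-E_g)$ for the linear term, followed by the Riccati comparison; your derivation of the second candidate $\Delta\min_i\Theta_i$ in $b$ even fills in a step the paper states without proof. However, the step you yourself flag as the main obstacle --- the uniform bound on $N=\braket{\psi|\psi}$ --- is left as an assertion, and it is not routine: since $\dot{N}$ is sign-indefinite, no direct estimate along the flow bounds $N$ by $\Vert\Theta\Vert$. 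The missing idea is a conservation law: differentiating along Eq.~(\ref{eq:8}) gives $\frac{d}{dt}\braket{\psi|\Theta^{-1}|\psi}=\frac{2}{N}\left(\braket{E}\braket{\psi|\psi}-\braket{\psi|H|\psi}\right)=0$, hence $N=\braket{\psi_0|\Theta^{-1}|\psi_0}/\braket{\tilde{\psi}|\Theta^{-1}|\tilde{\psi}}\leq\Vert\Theta\Vert\braket{\psi_0|\Theta^{-1}|\psi_0}$ for all $t$. This single observation is what converts every kernel estimate into the stated constants (and also justifies the WLOG normalization $\braket{\psi_0|\Theta^{-1}|\psi_0}=1$); without it your plan stalls exactly where you predicted it might.

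Second, Case (2) is not bookkeeping on Case (1). If you specialize the Riccati/exponential bound using $\braket{g|\Theta^{-1}|g}=1/\braket{g|\Theta|g}$ and a pointwise-in-time lower bound on $\tilde{\psi}_g^2$, the factors $\tilde{\psi}_g^2(0)$ and $\Theta_{\min}/\Vert\Theta\Vert$ end up multiplying the decay rate in the exponent, which is a strictly weaker statement than the theorem, whose rate $\alpha\Delta=2\braket{g|\Theta|g}\Delta/(\Vert\Theta\Vert C)$ is independent of the initial overlap. The paper's route is genuinely different: when $\ket{g}$ is a $\Theta$-eigenvector the unnormalized amplitude obeys the autonomous inequality $\dot{\psi}_g=\frac{\braket{g|\Theta|g}}{N}\left(\braket{E}-E_g\right)\psi_g\geq\frac{\braket{g|\Theta|g}\Delta}{\Vert\Theta\Vert C}\psi_g$ above the gap (using the norm bound again), so $\psi_g$ grows exponentially; squaring and dividing by $N$ gives $\tilde{\psi}_g^2\geq\frac{\Theta_{\min}}{\Vert\Theta\Vert}\tilde{\psi}_g^2(0)e^{\alpha\Delta t}$ (the factor $2$ in $\alpha$ comes from this squaring), and the energy bound then follows not from the energy ODE at all but from $\braket{E}-E_g\leq\left(1-\tilde{\psi}_g^2\right)\left(\Vert H\Vert-E_g\right)$ together with $1-x\leq 1/x$, which is precisely what places $\Theta_{\min}$ and $\tilde{\psi}_g^2(0)$ in the prefactor $A$ rather than in the rate. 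Your phrase ``controlled autonomously'' points in the right direction, but the conversion from overlap growth to the energy bound is the step you would actually need, and it is absent from the proposal.
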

Theorem 2 states that in the most general case, as long as the network
optimization process does not dramatically fail (vanishing overlap)
then the convergence is exponential with time and dramatically depends
on the relationship between the Hamiltonian and the NTK. One can see
that the more $\ket{g}$ points in the direction of the principal
component of $\Theta$, and the more the other eigenvalues of the
NTK are inversely related to the energy spectrum, the faster the bounds
will decay to zero with time.

Cases (2) and (3) deal with the specific cases when the ground state
and/or other states are also eigenstates of the NTK. In those cases
the convergence is guaranteed and the convergence rate for the bounds
are better. Using what we know from Sec. \hyperref[sec:III]{III},
the third case holds for every Hamiltonian of the form $\sum_{S}\alpha_{S}X_{S}$
with $S\subset\left\{ 1,...,N\right\} $ and $\alpha_{S}$ being real. 

It is also important to note that, in accordance with Sec. \hyperref[subsec:IV.A]{IV.A},
the bounds change below and above the gap. While in the imaginary
time Schrödinger equation the convergence scales as $\sim e^{-at}$
with $a=\mathcal{O}\left(\Delta\right)$, in the non-trivial NTK case
we have $a=\mathcal{O}\left(\braket{g|\Theta|g}\Delta\right)$ above
the gap, depending on the ground state relation with the NTK, while
below the gap $a$ is related to an effective gap which depends on
the relation between the higher energy eigenstates and the NTK.

It turns out that instead of requiring that $\ket{g}$ will be a ground
state of the NTK to achieve better bounds and guarantee convergence,
it is enough to require that the variance of the NTK over the ground
state will be sufficiently small.
\begin{prop}
If the inequality
\[
\frac{1}{4}\frac{\braket{g|\Theta|g}^{2}}{\braket{g|\Theta^{2}|g}-\braket{g|\Theta|g}^{2}}\geq\frac{\lvert\lvert H\lvert\lvert}{\Delta}\frac{\Vert\Theta\Vert C}{\psi_{g}^{2}\left(0\right)}
\]
holds, then case (2) in Theorem 2 holds. 
\end{prop}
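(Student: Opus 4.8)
The plan is to read the stated inequality as a \emph{quantitative relaxation} of the eigenstate hypothesis of Case (2). First I would record the algebraic identity
\[
\braket{g|\Theta^{2}|g}-\braket{g|\Theta|g}^{2}=\big\lVert(\Theta-\braket{g|\Theta|g})\ket{g}\big\rVert^{2},
\]
so the denominator on the left of the proposition is exactly the squared norm of the component of $\Theta\ket{g}$ orthogonal to $\ket{g}$. It vanishes precisely when $\ket{g}$ is an eigenstate of $\Theta$, in which case the left-hand side is $+\infty$ and the condition holds trivially. This already signals that the proof should be the proof of Case (2) with its single use of the eigenstate property replaced by an error term controlled by this variance.

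Next I would isolate that single use. A direct computation from Eq. (\ref{eq:8}) (carried out in App. B) gives the energy-dissipation identity $\frac{d\braket{E}}{dt}=-\frac{2}{\braket{\psi|\psi}}\bra{\tilde\psi}(H-\braket{E})\Theta(H-\braket{E})\ket{\tilde\psi}$, with $\ket{\tilde\psi}$ the normalized state. Writing $\ket{\chi}=(H-\braket{E})\ket{\tilde\psi}$ and using that $\Theta$ is positive definite, the Cauchy--Schwarz inequality in the $\Theta$-inner product with test vector $\ket{g}$ yields $\bra{\chi}\Theta\ket{\chi}\ge|\bra{\chi}\Theta\ket{g}|^{2}/\braket{g|\Theta|g}$; this is exactly the step that produces the rate $\propto\braket{g|\Theta|g}$ in Case (2). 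Decomposing $\Theta\ket{g}=\braket{g|\Theta|g}\ket{g}+\ket{r}$ with $\ket{r}\perp\ket{g}$ and $\braket{r|r}$ the variance above, I obtain
\[
\bra{\chi}\Theta\ket{g}=\braket{g|\Theta|g}(E_{g}-\braket{E})\tilde\psi_{g}+\braket{\chi|r},
\]
whose first term is the Case (2) contribution and whose second term, bounded by $|\braket{\chi|r}|\le\lVert H-\braket{E}\rVert\,\lVert r\rVert\le\lVert H\rVert\sqrt{\braket{g|\Theta^{2}|g}-\braket{g|\Theta|g}^{2}}$, is the new error.

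I would then show the stated inequality forces the error to be at most half the leading term, so that $|\bra{\chi}\Theta\ket{g}|\ge\tfrac12\braket{g|\Theta|g}(\braket{E}-E_{g})\tilde\psi_{g}$ and hence $\bra{\chi}\Theta\ket{\chi}\ge\tfrac14\braket{g|\Theta|g}(\braket{E}-E_{g})^{2}\tilde\psi_{g}^{2}$ --- the factor $\tfrac14=(\tfrac12)^{2}$ being the origin of the $\tfrac14$ in the statement. Above the gap one has $\braket{E}-E_{g}\ge\Delta/C$, converting one power of $(\braket{E}-E_{g})$ into $\Delta/C$ and reproducing the exponential bound of Case (2). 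The domination requirement $\lVert H\rVert\,\lVert r\rVert\le\tfrac12\braket{g|\Theta|g}(\braket{E}-E_{g})\tilde\psi_{g}$, evaluated in the worst case along the trajectory (smallest excess $\Delta/C$, smallest overlap $\tilde\psi_{g}^{2}\ge\psi_{g}^{2}(0)$) and combined with the norm factors $\lVert\Theta\rVert,\Theta_{\min}$ that appear when turning $1/\braket{\psi|\psi}$ into the clean rate $\alpha$ and prefactor $A$, is what rearranges into the displayed condition.

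The hard part will be making this domination \emph{uniform in time}. Both $\tilde\psi_{g}$ and $\braket{E}-E_{g}$ drift along the flow, so the argument must certify that the overlap does not fall below its initial value $\psi_{g}(0)$ --- mirroring the monotonicity of $\braket{g|\psi}$ used in Case (2), which here reads $\frac{d}{dt}\braket{g|\psi}=\frac{1}{\braket{\psi|\psi}}\big[\braket{g|\Theta|g}(\braket{E}-E_{g})\braket{g|\psi}+\bra{r}(\braket{E}-H)\ket{\psi}\big]$, with the correction again governed by $\lVert r\rVert$ --- so that the single static inequality controls the entire phase above the gap. Propagating $\braket{\psi|\psi}$ to recover the exact constants $A,\alpha$ and the precise powers of $\Delta$, $\lVert H\rVert$, $\lVert\Theta\rVert$, $\Theta_{\min}$ and $\psi_{g}^{2}(0)$ is the same bookkeeping already present in Case (2), and is where most of the technical effort sits.
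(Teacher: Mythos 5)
Your core mechanism is in fact the paper's own proof: the paper writes $\frac{d\psi_{g}}{dt}=\frac{1}{\braket{\psi|\psi}}\bra{g}\Theta\left(P_{g}+P_{g}^{\perp}\right)\left(\braket{E}-H\right)\ket{\psi}$, demands that the $P_{g}^{\perp}$ term be at most half the $P_{g}$ term (your $\ket{r}=P_{g}^{\perp}\Theta\ket{g}$ and your ``error at most half the leading term'' are exactly this, with $\braket{g|\Theta P_{g}^{\perp}\Theta|g}=\braket{g|\Theta^{2}|g}-\braket{g|\Theta|g}^{2}$), and then bootstraps in time via $\tilde{\psi}_{g}^{2}\geq\psi_{g}^{2}(0)/\Vert\Theta\Vert$, valid as long as $\psi_{g}$ is nondecreasing. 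So what you call ``the hard part'' (monotonicity of $\braket{g|\psi}$) is not an auxiliary step --- it \emph{is} the entire proof: once $\frac{d\psi_{g}}{dt}\geq\frac{1}{2\braket{\psi|\psi}}\braket{g|\Theta|g}\left(\braket{E}-E_{g}\right)\psi_{g}$ is secured, the Case (2) argument (exponential growth of $\psi_{g}$, then $\braket{E}-E_{g}\leq\left(1-\tilde{\psi}_{g}^{2}\right)\Vert H\Vert$) runs verbatim. Your primary route through the dissipation identity is then redundant, and also strictly weaker: Cauchy--Schwarz there yields an instantaneous decay rate proportional to $\braket{g|\Theta|g}\tilde{\psi}_{g}^{2}\Delta/\left(\Vert\Theta\Vert C\right)$, and the extra factor $\tilde{\psi}_{g}^{2}\leq1$ (at best bounded below by $\psi_{g}^{2}(0)/\Vert\Theta\Vert$) means that route alone cannot reproduce the constants $\alpha,A$ of Case (2).

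There is also a concrete quantitative gap: the estimate $\lvert\braket{\chi|r}\rvert\leq\Vert H\Vert\,\Vert r\Vert$ is too lossy to close the argument under the stated hypothesis. Carrying it through, your domination requirement in the worst admissible configuration ($\braket{E}-E_{g}=\Delta/C$, $\tilde{\psi}_{g}^{2}=\psi_{g}^{2}(0)/\Vert\Theta\Vert$) reads
\[
\frac{1}{4}\frac{\braket{g|\Theta|g}^{2}}{\braket{g|\Theta^{2}|g}-\braket{g|\Theta|g}^{2}}\geq\frac{\Vert H\Vert^{2}C^{2}}{\Delta^{2}}\frac{\Vert\Theta\Vert}{\psi_{g}^{2}(0)},
\]
which is stronger than the proposition's hypothesis by a factor $\Vert H\Vert C/\Delta\geq C>2$; the stated inequality therefore does not imply your chain of estimates. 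The repair is to not discard the actual magnitude of $\ket{\chi}$: keep $\Vert\chi\Vert^{2}=\braket{E^{2}}-\braket{E}^{2}$ and use $\braket{E^{2}}\leq\Vert H\Vert\braket{E}$ (valid once $E_{g}$ is shifted to $0$, so $H\geq0$). The requirement then becomes $\frac{1}{4}\frac{\braket{g|\Theta|g}^{2}}{\braket{g|\Theta^{2}|g}-\braket{g|\Theta|g}^{2}}\geq\frac{1}{\tilde{\psi}_{g}^{2}}\left(\frac{\braket{E^{2}}}{\braket{E}^{2}}-1\right)$, which above the gap is bounded by $\frac{1}{\tilde{\psi}_{g}^{2}}\frac{\Vert H\Vert C}{\Delta}\leq\frac{\Vert H\Vert}{\Delta}\frac{\Vert\Theta\Vert C}{\psi_{g}^{2}(0)}$ --- precisely the stated hypothesis. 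This one-line sharpening is exactly how the paper closes the argument.
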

One can see that the L.H.S goes to infinity whenever the ground state
is an eigenstate of the NTK. The R.H.S will get smaller if the initial
overlap with the ground state is larger, which on average depends
on the CK expectation value over the ground state. Thus, again, the
desirable relation between both kernels and the Hamiltonian is that
lower energy states will have larger kernel expectation values. 

\subsection*{IV.D Learning Rate \protect\label{subsec:IV.D}}

The learning rate is a fundamental hyper-parameter in the context
of machine learning. Setting the right learning rate is crucial for
convergence. Choosing a learning rate too small will make the convergence
very slow, yet choosing a learning rate that is too high will eventually
prevent successful convergence. Thus, one usually wishes to choose
the largest learning rate while still converging. We provide an upper
bound on the learning rate that must be satisfied in order for the
process to converge.
\begin{thm}
A necessary condition for the convergence to the ground state is 
\[
\Delta t\leq\frac{\braket{\psi_{0}|\Theta^{-1}|\psi_{0}}}{\braket{g|\Theta^{-1}|g}}\frac{2}{\left\Vert \sqrt{\Theta}\left(H-E_{g}\right)\sqrt{\Theta}\right\Vert }.
\]
\end{thm}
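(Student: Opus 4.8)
The plan is to read the bound as a linearized stability condition for the discrete-time map about its ground-state fixed point, supplemented by a conservation law that pins down the norm at convergence. First I would write the explicit Euler discretization of Eq. (\ref{eq:8}),
\[
\ket{\psi_{n+1}}=\ket{\psi_{n}}-\frac{\Delta t}{\braket{\psi_{n}|\psi_{n}}}\,\Theta\left(H-\braket{E}_{n}\right)\ket{\psi_{n}},
\]
and invoke Theorem 1: the only attracting fixed point is proportional to the ground state, so convergence means $\ket{\psi_{n}}\to\lambda\ket{g}$ for some amplitude $\lambda$. The first step is to determine $\lvert\lambda\rvert^{2}$.

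Second, I would establish the conservation law $\frac{d}{dt}\braket{\psi|\Theta^{-1}|\psi}=0$ under Eq. (\ref{eq:8}). Differentiating and cancelling $\Theta^{-1}\Theta$ leaves $\frac{2}{\braket{\psi|\psi}}\braket{\psi|(\braket{E}-H)|\psi}$, which vanishes identically by the definition of $\braket{E}$. The same cancellation renders the quantity stationary to first order in $\Delta t$ for the discrete map, and exactly stationary at the fixed point, where $(H-\braket{E}_{n})\ket{\psi_{n}}\to0$. Equating initial and terminal values of $\braket{\psi|\Theta^{-1}|\psi}$ then gives
\[
\lvert\lambda\rvert^{2}=\frac{\braket{\psi_{0}|\Theta^{-1}|\psi_{0}}}{\braket{g|\Theta^{-1}|g}},
\]
which is precisely the prefactor in the claimed bound.

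Third, I would linearize. Writing $\ket{\psi_{n}}=\lambda\ket{g}+\ket{\delta_{n}}$ with $\braket{g|\delta_{n}}=0$, to leading order $\braket{\psi_{n}|\psi_{n}}\approx\lvert\lambda\rvert^{2}$ and $\braket{E}_{n}\approx E_{g}$, so $\ket{\delta_{n+1}}=\bigl[I-\frac{\Delta t}{\lvert\lambda\rvert^{2}}\Theta(H-E_{g})\bigr]\ket{\delta_{n}}$. The generator $\Theta(H-E_{g})$ is not Hermitian, but the substitution $\ket{\eta_{n}}=\sqrt{\Theta^{-1}}\ket{\delta_{n}}$ conjugates it to $\tilde{H}=\sqrt{\Theta}(H-E_{g})\sqrt{\Theta}$, which is positive semidefinite with zero mode $\sqrt{\Theta^{-1}}\ket{g}$ (the image of the $\ket{g}$ direction removed by the constraint). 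On the complement of this zero mode, attraction requires the spectral radius of $I-\frac{\Delta t}{\lvert\lambda\rvert^{2}}\tilde{H}$ to be below one, i.e. $\bigl|1-\frac{\Delta t}{\lvert\lambda\rvert^{2}}\alpha_{i}\bigr|<1$ for every nonzero eigenvalue $\alpha_{i}$ of $\tilde{H}$. This fails first for the largest eigenvalue $\alpha_{\max}=\Vert\tilde{H}\Vert$, giving the necessary condition $\frac{\Delta t}{\lvert\lambda\rvert^{2}}\Vert\tilde{H}\Vert<2$; substituting $\lvert\lambda\rvert^{2}$ from the second step yields exactly the stated inequality.

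I expect the main obstacle to be the consistency of the linearization with the normalization denominator: the factor $1/\braket{\psi_{n}|\psi_{n}}$ couples the overall norm to the perturbation, and I would need to confirm that to leading order it is frozen at $\lvert\lambda\rvert^{2}$ and that the component of $\ket{\delta_{n}}$ along $\ket{g}$, which only renormalizes $\lambda$, decouples from the transverse, potentially unstable modes. A secondary subtlety is that the similarity transform $\sqrt{\Theta^{-1}}$ is not orthogonal, so the constraint $\braket{g|\delta_{n}}=0$ does not map to orthogonality against the zero mode of $\tilde{H}$; since $\tilde{H}$ is symmetric and the instability is governed solely by its operator norm, this leaves the bound unchanged, but it is the point most deserving of careful justification.
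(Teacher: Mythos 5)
Your proposal is correct and follows essentially the same route as the paper's own proof: an explicit Euler discretization linearized about the ground-state fixed point, with the limiting amplitude fixed by the conserved quantity $\braket{\psi|\Theta^{-1}|\psi}$, and the step-size bound obtained by demanding that the eigenvalues of $I-\frac{\Delta t}{\lvert\lambda\rvert^{2}}\Theta\left(H-E_{g}\right)$ (similar to the symmetric positive-semidefinite $\sqrt{\Theta}\left(H-E_{g}\right)\sqrt{\Theta}$) lie within the unit interval in absolute value. Your explicit derivation of the conservation law and your handling of the neutral mode along $\ket{g}$ are in fact spelled out more carefully than in the paper, but the underlying argument is identical.
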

Although this upper bound cannot be calculated easily, one can see
that it is inversely proportional to $\braket{g|\Theta^{-1}|g}$,
so the more the ground state points in the direction of the principal
component of the NTK, the larger the learning rate can be. The operator
norm is a bit harder to asses, yet in the highly biased case, when
the maximal eigenvalue is much larger than all other eigenvalues,
one can write the normalized NTK as $\Theta\approx\ket{\phi}\bra{\phi}$,
with $\ket{\phi}$ being $\ket{\Theta_{0}}$ or $\ket{\Theta_{1}}$
by the results quoted in Sec. \hyperref[sec:III]{III}. It follows
that $\left\Vert \sqrt{\Theta}\left(H-E_{g}\right)\sqrt{\Theta}\right\Vert \approx\braket{E}_{\ket{\phi}}-E_{g}$.
Hence, the bound will be higher if the energy expectation value over
the principal component of the NTK will be lower, which echos the
same inverse relation between the Hamiltonian and the kernels that
is postulated to control the overall performance of the method. 

\section*{V. The Sign Bias and basis Dependence \protect\label{sec:V}}

\subsection*{V.A Performance Metrics \protect\label{subsec:V.A}}

In light of the above, we define some useful metrics which quantify
the compatibility between the Hamiltonian in a given basis and a kernel.
We begin with metrics that relate to the bias in the dynamics and
initializations toward the ground state,
\begin{equation}
\mathcal{M}\left(H,\Theta\right)\equiv\frac{\braket{g|\Theta|g}}{\text{\ensuremath{\Vert\Theta\Vert}}},\ \mathcal{M}\left(H,K\right)\equiv\frac{\braket{g|K|g}}{\text{Tr}\left(K\right)}.
\end{equation}
The metric $\mathcal{M}\left(H,\Theta\right)$ is the dynamical bias
toward the ground state, thus providing a measure which is correlated
with the convergence rate towards it. $\mathcal{M}\left(H,K\right)$
is a rough approximation for the average initial ground state overlap,
and should be positively correlated with it. In both cases larger
values are desirable. 

Since the relations between other energy eigenstates and the kernels
are also important for the overall performance, we also consider the
addtitional metrics 
\begin{equation}
\mathcal{N}\left(H,\Theta\right)\equiv\text{Tr}\left(H\Theta\right)-\text{Tr}\left(H\right)\text{Tr}\left(\Theta\right),\ \mathcal{N}\left(H,K\right)\equiv\frac{\text{Tr}\left(\left(H-E_{g}\right)K\right)}{\text{Tr\ensuremath{\left(K\right)}}}.
\end{equation}
$\mathcal{N}\left(H,K\right)$ is an approximation for the average
initial energy, and should be correlated with it. $\mathcal{N}\left(H,\Theta\right)$
is a possible metric for the overall relationship between the Hamiltonian
and the NTK, and quantifies how much the desirable inverse correlation
between them holds. We thus associate it with the convergence time
of the energy to the ground state energy, such that the lower its
value the lower the time it takes to reach it. We note that the second
term in $\mathcal{N}\left(H,\Theta\right)$ is present to account
for the invariance of the dynamics under the addition of a multiple
of the identity to the Hamiltonian.

We further note that while the CK metrics are derived from an approximation
to the average initial values of the ground state overlap and energy,
the NTK metrics are less fundamental. We use the NTK metrics to compare
different bases while other parameters are kept constant. In other
scenarios one might want to use different NTK metrics, which incorporate
other factors. For example, in order to distinguish between different
models and architectures one will also need to incorporate the dynamics
dependence on quantities such as the Hamiltonian and NTK spectra. 

\subsection*{V.B The Sign Bias}

As was mentioned above, representing the ground state of a Hamiltonian
using a neural network is a process that is basis dependent. It can
be seen by examining Eq. (\ref{eq:8}) and noticing that it is not
invariant under a change of basis. This is caused by the presence
of the non-trivial NTK in the equation, which is the same regardless
of the basis we choose to represent to Hamiltonian in. One can also
see that the mean initial values such as the energy and the ground
state overlap are also not invariant under a change of basis due to
the presence of the CK. So although there is no physical meaning to
applying a basis transformation to our Hamiltonian, the method is
sensitive to the chosen basis, As a result, we will now try to provide
some understanding of which basis are preferred by presenting the
following results, with derivations given in \hyperref[subsec:App.C]{App. C}. 

We first show which basis transformations would not effect the performance
of the method
\begin{thm}
1. For every unitary $U$ such that $\left[U,\Theta\right]=0$, the
solution for the transformed Hamiltonian $\tilde{H}=UHU^{\dagger}$
is the transformed solution $\ket{\tilde{\psi}}=U\ket{\psi}$, with
$\ket{\psi}$ being the solution for the Hamiltonian $H$. \\
2. For every unitary $U$ such that $\left[U,K\right]=0$, the average
initial values of the energy and ground state overlap are invariant.
\end{thm}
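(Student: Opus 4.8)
The plan for Part 1 is to substitute the ansatz $\ket{\tilde{\psi}}=U\ket{\psi}$ directly into the gradient-flow equation~(\ref{eq:8}) written for $\tilde{H}=UHU^{\dagger}$, and to verify that it collapses to equation~(\ref{eq:8}) for $H$ acting on $\ket{\psi}$. First I would note that unitarity of $U$ makes both scalar quantities in the flow invariant: $\braket{\tilde{\psi}|\tilde{\psi}}=\braket{\psi|U^{\dagger}U|\psi}=\braket{\psi|\psi}$, and the energy $\braket{\tilde{E}}=\braket{\psi|U^{\dagger}\tilde{H}U|\psi}/\braket{\psi|\psi}=\braket{\psi|H|\psi}/\braket{\psi|\psi}=\braket{E}$. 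Since $U$ is time-independent, the left-hand side of the transformed equation is $U\,d\ket{\psi}/dt$. On the right-hand side I would expand $\tilde{H}=UHU^{\dagger}$ and use $U^{\dagger}U=\mathbb{1}$ to factor a $U$ out on the right, obtaining $\tfrac{1}{\braket{\psi|\psi}}\Theta U(\braket{E}-H)\ket{\psi}$.

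The key step is then to commute $\Theta$ past $U$ via the hypothesis $[U,\Theta]=0$, replacing $\Theta U$ with $U\Theta$, so that the right-hand side becomes $U$ times the right-hand side of the original flow for $\ket{\psi}$. Matching the two sides shows that $\ket{\tilde{\psi}}=U\ket{\psi}$ satisfies the transformed equation; since it also meets the correct initial condition $\ket{\tilde{\psi}(0)}=U\ket{\psi(0)}$, uniqueness of solutions identifies it as \emph{the} solution.

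For Part 2, I would invoke the expressions for the average initial overlap $\braket{g|K|g}/\mathrm{Tr}(K)$ and average initial energy $\mathrm{Tr}(KH)/\mathrm{Tr}(K)$ derived in Sec.~\hyperref[subsec:IV.B]{IV.B}. The transformed Hamiltonian $\tilde{H}=UHU^{\dagger}$ has ground state $\ket{\tilde{g}}=U\ket{g}$, so the transformed overlap metric reads $\braket{g|U^{\dagger}KU|g}/\mathrm{Tr}(K)$; the commutation $[U,K]=0$ yields $U^{\dagger}KU=U^{\dagger}UK=K$, recovering the original value. For the energy I would write the transformed metric as $\mathrm{Tr}(KUHU^{\dagger})/\mathrm{Tr}(K)$ and apply cyclicity of the trace together with $U^{\dagger}KU=K$ to reduce it to $\mathrm{Tr}(KH)/\mathrm{Tr}(K)$, again the original value.

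Both parts are essentially mechanical. The only genuine subtlety lies in the uniqueness argument in Part 1: the flow~(\ref{eq:8}) is nonlinear because of the normalization factor and the energy expectation, so I would observe that its right-hand side is smooth (locally Lipschitz) away from the origin and invoke the convergence result (the first theorem above) to ensure the trajectory stays bounded away from $\ket{\psi}=0$, which licenses the appeal to uniqueness.
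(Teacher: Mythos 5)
Your Part 1 is correct and is essentially the paper's own argument run in the opposite direction: the paper starts from the solution $\ket{\psi}$ of Eq. (\ref{eq:8}) for $H$, multiplies the equation by $U$, inserts $U^{\dagger}U=I$, uses unitarity to show $\braket{\phi|\phi}=\braket{\psi|\psi}$ and $\braket{\tilde{E}}=\braket{E}$, and commutes $U$ past $\Theta$ to conclude that $\ket{\phi}\equiv U\ket{\psi}$ obeys the flow for $\tilde{H}$. Your substitution-plus-uniqueness phrasing is the same computation, and your explicit attention to uniqueness of the nonlinear flow is a point the paper leaves implicit.

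Part 2, however, has a genuine gap. You prove invariance of the quantities $\braket{g|K|g}/\mathrm{Tr}(K)$ and $\mathrm{Tr}(KH)/\mathrm{Tr}(K)$, but these are only the large-$N$ approximations to the average initial overlap and energy (Sec. \hyperref[subsec:IV.B]{IV.B} states them with $\approx$, and App. B bounds the error by $\mathcal{O}\left(N\cdot2^{-N/2}\right)$). The theorem asserts invariance of the \emph{exact} averages $\mathbb{E}\left(\braket{E}_{t=0}\right)$ and $\mathbb{E}\left(\tilde{\psi}_{g}^{2}(t=0)\right)$, and invariance of an approximation does not imply invariance of the quantity being approximated---the basis-dependent correction terms could in principle break the symmetry. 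The paper avoids this by working with the exact Gaussian expectations: it writes $\mathbb{E}\left(\braket{E_{0}}\right)=\int\frac{\braket{\psi_{0}|H|\psi_{0}}}{\braket{\psi_{0}|\psi_{0}}}\exp\left[-\frac{1}{2}\braket{\psi_{0}|K^{-1}|\psi_{0}}\right]d\ket{\psi_{0}}$, and similarly for the overlap, then changes the integration variable $\ket{\psi_{0}}\rightarrow U\ket{\psi_{0}}$; the Jacobian is $1$ by unitarity, and $\left[U,K\right]=0$ (hence $\left[U,K^{-1}\right]=0$) makes the Gaussian weight invariant, so the exact averages for $\tilde{H}$ (whose ground state is $U\ket{g}$) equal those for $H$. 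Your argument is repaired by running exactly this change-of-variables step on the exact integrals rather than on the approximate metrics.
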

The theorem above means that basis transformations that commute with
both kernels will yield the same overall performance. By using the
known kernels properties in Sec. \hyperref[sec:III]{III} it immediately
follows that 
\begin{cor}
Given two Hamiltonians $H$ and $\tilde{H}$ related by a unitary
transformation $\tilde{H}=UHU^{\dagger}$ of the form $U=e^{i\chi}$
with $\chi=\sum_{S}\alpha_{S}X_{S}$, for any real $\alpha_{S}$ and
$S\subset\left\{ 1,...,N\right\} $, The solutions are related by
$\ket{\psi}=U\ket{\tilde{\psi}}$ with the same average initial conditions.
Here $\ket{\psi},\ket{\tilde{\psi}}$ are the solutions for Eq. (\ref{eq:8})
under the Hamiltonians $H,\tilde{H}$ respectively.
\end{cor}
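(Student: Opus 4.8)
The plan is to show that every unitary of the stated form $U=e^{i\chi}$ with $\chi=\sum_S\alpha_S X_S$ falls within the hypotheses of the preceding theorem, by verifying that each such $U$ commutes with both kernels; the corollary then follows directly from parts 1 and 2 of that theorem.

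First I would establish the commutations $[U,\Theta]=[U,K]=0$. By Property 3 of Sec. \hyperref[sec:III]{III} both kernels are real linear combinations of the operators $X_{|S|}$, hence of the individual Pauli-$X$ strings $X_{S'}=\prod_{i\in S'}X_i$. The generator $\chi=\sum_S\alpha_S X_S$ is itself such a linear combination. The crucial observation is that any two products of Pauli-$X$ operators commute: single-site $X$ operators on distinct sites commute and each commutes with itself, so $X_S X_{S'}=X_{S'}X_S$ for all $S,S'$. Consequently $\chi$ commutes with every term appearing in $\Theta$ and in $K$, giving $[\chi,\Theta]=[\chi,K]=0$, and since $U=e^{i\chi}$ is a power series in $\chi$ it inherits these commutations. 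I would also note in passing that $\chi$ is Hermitian (the $\alpha_S$ are real and each $X_S$ is Hermitian), so $U$ is genuinely unitary and $\tilde{H}=UHU^\dagger$ is a legitimate basis transformation.

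With the commutations in hand, part 1 of the theorem (applied with this $U$ commuting with $\Theta$) yields that the solutions of Eq. (\ref{eq:8}) for $H$ and $\tilde{H}$ are related by the action of $U$; note that $U^\dagger=e^{-i\chi}$ is of exactly the same form, so the relation may be written in either direction as stated. Part 2 (applied with this $U$ commuting with $K$) gives the invariance of the average initial conditions, which I would make explicit: the ground state of $\tilde{H}$ is $U\ket{g}$, so its average initial overlap is $\bra{g}U^\dagger K U\ket{g}/\mathrm{Tr}(K)=\bra{g}K\ket{g}/\mathrm{Tr}(K)$ using $[U,K]=0$, and the average initial energy is $\mathrm{Tr}(K\tilde{H})/\mathrm{Tr}(K)=\mathrm{Tr}(KH)/\mathrm{Tr}(K)$ by cyclicity of the trace together with $U^\dagger K U=K$; both reduce to the corresponding quantities for $H$.

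Since the corollary is essentially a specialization of the theorem, there is no substantial obstacle; the only point requiring care is the commutation step, which rests entirely on the fact (Property 3) that the kernels are built exclusively from $X$-type Pauli strings. This is precisely what singles out the family $\{e^{i\sum_S\alpha_S X_S}\}$ as the relevant symmetry class — it is a subset of the commutant of the kernel algebra — and any transformation outside this $X$-only structure would generically fail to commute with $\Theta$ or $K$ and thus alter the dynamics or the initialization.
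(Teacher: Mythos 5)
Your proposal is correct and follows essentially the same route as the paper: the paper's proof likewise observes that the kernels have the form $\sum_{S}\alpha_{S}X_{S}$ (Property 3), so they commute with any $e^{i\chi}$ with $\chi=\sum_{S}\beta_{S}X_{S}$, and then invokes the preceding theorem. Your write-up merely fills in the details the paper leaves implicit (mutual commutation of Pauli-$X$ strings, Hermiticity of $\chi$, the direction of the relation via $U^{\dagger}$ having the same form, and the explicit invariance of the initial averages), which is a faithful elaboration rather than a different argument.
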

A specific class of unitary transformation of the form defined above
are just $X_{S}$ with $S$ being any subset of qubits. Their effect
is to change the signs of operators involving the $Z$ and $Y$ Pauli
matrices in the Hamiltonian, since $-Z=XZX$.

In order to demonstrate the effect of different basis we will focus
on Hamiltonians of the form 
\begin{equation}
\mathcal{H}=\sum_{S}\alpha_{S}X_{S}+\beta_{S}Z_{S}+\gamma_{S}Y_{S},\label{eq:22}
\end{equation}
with $\alpha_{S},\beta_{S},\gamma_{S}$ being real, $S$ being any
subset of qubits, and $\gamma_{S}$ being non zero for subsets with
even cardinality, in order to make sure the Hamiltonian is real. Although
the family is not the most general one can consider, it includes many
models of interests including the generalized TFIM, Heisenberg, or
the J1-J2 model defined below in Eq. (\ref{eq:33}). 
\begin{cor}
For the family of Hamiltonians of the form $\mathcal{H}$, the average
solution is invariant under a unitary change of the signs of $\beta_{S}$
and/or $\gamma_{S}$ while keeping $\alpha_{S}$ constant. 
\end{cor}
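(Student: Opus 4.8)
The plan is to realize the asserted sign changes as conjugations by $X$-type Pauli strings and then invoke the preceding Corollary. First I would observe that for any subset $S\subseteq\{1,\dots,N\}$ the operator $X_{S}=\prod_{i\in S}X_{i}$ is itself of the form $e^{i\chi}$ with $\chi=\sum_{S'}\alpha_{S'}X_{S'}$. Indeed, using the single-qubit identity $X_{i}=e^{i\frac{\pi}{2}(X_{i}-I)}$ (which follows from $e^{i\frac{\pi}{2}X_{i}}=iX_{i}$) together with the fact that $X_{i}$ on distinct qubits commute, one gets $X_{S}=e^{i\frac{\pi}{2}\sum_{i\in S}(X_{i}-I)}$, whose exponent is a real linear combination of products of $X$ matrices (the identity term $X_{\emptyset}$ being absorbed as the coefficient of $I$). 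Hence $X_{S}$ satisfies the hypothesis of the previous Corollary, and in particular it commutes with both $\Theta$ and $K$, being diagonal in the common eigenbasis $\ket{s}$ of Property 4.

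Next I would compute the conjugated Hamiltonian term by term. Using the relations $X_{S}X_{S'}X_{S}=X_{S'}$, $X_{S}Z_{S'}X_{S}=(-1)^{\lvert S\cap S'\rvert}Z_{S'}$ and $X_{S}Y_{S'}X_{S}=(-1)^{\lvert S\cap S'\rvert}Y_{S'}$ (the latter two from $\{X,Z\}=\{X,Y\}=0$ on each overlapping site), I obtain
\[
X_{S}\mathcal{H}X_{S}=\sum_{S'}\alpha_{S'}X_{S'}+(-1)^{\lvert S\cap S'\rvert}\beta_{S'}Z_{S'}+(-1)^{\lvert S\cap S'\rvert}\gamma_{S'}Y_{S'}.
\]
Thus conjugation by $X_{S}$ leaves every $\alpha_{S'}$ unchanged while flipping the sign of exactly those $\beta_{S'}$ and $\gamma_{S'}$ for which the overlap $\lvert S\cap S'\rvert$ is odd.

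Finally I would close the argument by applying the previous Corollary with $U=X_{S}$: since the transformed Hamiltonian is $\tilde{\mathcal{H}}=X_{S}\mathcal{H}X_{S}$ and $X_{S}$ commutes with both kernels, the solution of Eq. (\ref{eq:8}) for $\tilde{\mathcal{H}}$ is $X_{S}\ket{\psi}$, and the average initial energy and ground-state overlap coincide with those for $\mathcal{H}$. Composing several such $X_{S}$ (again an $X$-type unitary of the required form) realizes the admissible patterns of sign changes of the $\beta_{S'}$ and $\gamma_{S'}$, establishing the claimed invariance of the average solution.

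The point to get right — rather than a deep obstacle — is the identification of $X_{S}$ as an element of the exponentiated $X$-algebra $e^{i\sum_{S'}\alpha_{S'}X_{S'}}$, since this is precisely what licenses the use of the previous Corollary; the Pauli bookkeeping in the conjugation step is routine. I would also be careful to state honestly which sign patterns are reachable: a single $X_{S}$ flips the matched pair $(\beta_{S'},\gamma_{S'})$ \emph{together} whenever $\lvert S\cap S'\rvert$ is odd, so the reachable changes are exactly those generated by $X_{S}$ conjugations, not arbitrary independent sign flips of individual $\beta_{S'}$ or $\gamma_{S'}$.
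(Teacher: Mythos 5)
Your proof is correct and takes essentially the same approach as the paper: the paper's own (one-line) argument likewise realizes the admissible sign changes of $\beta_{S}$ and $\gamma_{S}$ as conjugations by $X_{S}$, which commute with the NTK and CK, and then invokes the invariance result (Theorem 4), whereas you reach the same commutation via the exponential form $X_{S}=e^{i\frac{\pi}{2}\sum_{i\in S}(X_{i}-I)}$ and the preceding Corollary. Your explicit Pauli bookkeeping and your caveat about which sign patterns are actually reachable (matched flips of $(\beta_{S'},\gamma_{S'})$ for $\lvert S\cap S'\rvert$ odd) supply details that the paper's terse proof leaves implicit.
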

The above follows immediately from the fact that altering the signs
of $\beta_{S}$ and $\gamma_{S}$ using a unitary matrix without altering
$\alpha_{S}$ can only be done by applying a unitary of the form $X_{S}$.
One consequence of this result is that changing the signs over the
diagonal using unitary transformation will not affect the performance.

On the other hand, changing the signs of $\alpha_{S}$ might change
the performance based on the results presented in previous sections.
Changing the signs of $\alpha_{S}$ is usually achieved in a unitary
fashion using ``sign transformations'' of the form $Z_{s}$, since
$ZXZ=-X$. 
\begin{prop}
Whenever $\Theta$ and $K$ are entry-wise positive, for the family
of Hamiltonians $\mathcal{H}$ in Eq. (\ref{eq:22}), the following
holds: \\
(1) If there is a sign transformation of the form $Z_{S}$ that will
make $\alpha_{S}\leq0\ \forall S$, then it will minimize $\mathcal{N}\left(\mathcal{H},\Theta\right)$
and $\mathcal{N}\left(\mathcal{H},K\right)$ out of all possible $Z_{S}$
transformations.\\
(2) For $\lvert\gamma_{S}\lvert\leq\lvert\alpha_{S}\lvert$, if there
is a sign transformation of the form $Z_{S}$ that will make $\alpha_{S}\leq0\ \forall S$,
then it will maximize $\mathcal{M}\left(\mathcal{H},\Theta\right)$
and $\mathcal{M}\left(\mathcal{H},K\right)$ out of all possible $Z_{S}$
transformations.\\
(3) For $\lvert\gamma_{S}\lvert\leq\lvert\alpha_{S}\lvert$, if there
is a sign transformation of the form $Z_{S}$ that will make $\alpha_{S}\leq0\ \forall S$,
then it will maximize the learning rate bound from Sec. \hyperref[subsec:IV.D]{IV.D}
in the large bias approximation as defined in Sec. \hyperref[subsec:IV.D]{IV.D},
out of all possible $Z_{S}$ transformations.
\end{prop}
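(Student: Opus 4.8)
The plan is to prove all three parts by directly analyzing how the metrics $\mathcal{N}$ and $\mathcal{M}$, together with the learning-rate bound, transform under sign transformations $Z_S$, exploiting the entry-wise positivity of $\Theta$ and $K$ and the fact (from Corollary 2) that flipping signs of $\beta_S,\gamma_S$ is irrelevant, so only the signs of the $\alpha_S$ terms matter. The central object throughout is how $Z_S$ conjugation acts on $\mathcal{H}$: since $ZXZ=-X$, conjugating by $Z_S$ flips the sign of each $\alpha_{S'}X_{S'}$ according to the parity of $|S\cap S'|$ (and similarly permutes signs among the $Z,Y$ terms, which by Corollary 2 we may ignore). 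Thus the space of reachable Hamiltonians is parameterized by a choice of signs $\epsilon_{S'}=\pm 1$ on the $\alpha$-coefficients, and I want to show the claimed configuration $\alpha_{S'}\le 0$ is extremal for each metric.

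First I would establish part (1). The key computation is $\mathcal{N}(\mathcal{H},\Theta)=\operatorname{Tr}(\mathcal{H}\Theta)-\operatorname{Tr}(\mathcal{H})\operatorname{Tr}(\Theta)$; because $\Theta=\sum_{|S|}\alpha_{|S|}X_{|S|}$ is a sum of $X$-strings (Property 3) and the Pauli strings are trace-orthogonal, only the $\alpha_{S'}X_{S'}$ part of $\mathcal{H}$ contributes to $\operatorname{Tr}(\mathcal{H}\Theta)$, and the $Z,Y$ terms drop out. Writing this trace in the $\ket{s}$ eigenbasis of $\Theta$ (Property 4), each $X_{S'}$ contributes a sum over eigenvalues $\Theta_{|s|}$ weighted by $\braket{s|X_{S'}|s}=\pm 1$. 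Entry-wise positivity of $\Theta$ means $\Phi\ge 0$, so by Property 6 the dominant weight is positive; the upshot is that each term $\alpha_{S'}\operatorname{Tr}(X_{S'}\Theta)$ has a \emph{fixed-sign} coefficient multiplying $\alpha_{S'}$, so $\mathcal{N}$ is minimized by choosing every $\alpha_{S'}$ to have the sign that makes its contribution most negative — which, I would verify, is exactly $\alpha_{S'}\le 0$ when the positivity forces $\operatorname{Tr}(X_{S'}\Theta)>0$. The same argument runs verbatim for $K$ using $F\ge 0$.

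For parts (2) and (3), the quantity that matters is $\braket{g|\Theta|g}$ (resp.\ $\braket{g|K|g}$ and the learning-rate bound, which in the large-bias approximation is controlled by $\braket{E}_{\ket{\phi}}$ with $\ket{\phi}$ a top eigenstate of $\Theta$). Here the extra hypothesis $|\gamma_S|\le|\alpha_S|$ enters. The strategy is to show that the ground state $\ket{g}$ of the all-$\alpha_{S'}\le 0$ Hamiltonian has entry-wise nonnegative amplitudes in the $X$-eigenbasis $\{\ket{s}\}$ — this is the Perron–Frobenius / stoquasticity mechanism: when $\alpha_{S'}\le 0$, the matrix $-\mathcal{H}$ (restricted to the relevant sector, after the $|\gamma_S|\le|\alpha_S|$ condition guarantees the off-diagonal-in-$\ket{s}$ entries have the correct sign) has nonnegative off-diagonal entries, so its top eigenvector is nonnegative. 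Then since $\Theta$ (resp.\ $K$) is entry-wise positive in this very basis, $\braket{g|\Theta|g}=\sum_{s,s'}g_s\,\Theta_{ss'}\,g_{s'}$ is maximized exactly when all $g_s$ share a sign, which is the $\alpha_{S'}\le 0$ configuration; any sign flip introduces cancellations that strictly decrease the quadratic form. For the learning-rate bound I would reduce it, via the large-bias formula $\|\sqrt{\Theta}(H-E_g)\sqrt\Theta\|\approx\braket{E}_{\ket\phi}-E_g$ stated in Sec.~IV.D, to showing $\braket{E}_{\ket{\phi}}$ is minimized in the same configuration, using that $\ket\phi\in\{\ket{+},\ket{-_i}\}$ is itself a nonnegative vector.

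The main obstacle I anticipate is part (2)/(3)'s reliance on $\ket{g}$ being nonnegative in the $X$-basis: this is precisely where the $|\gamma_S|\le|\alpha_S|$ hypothesis must be used carefully, since the $\gamma_S Y_S$ terms are off-diagonal in the $\ket{s}$ basis and can spoil the Perron–Frobenius sign structure unless their magnitude is dominated by the (now favorably-signed) $\alpha$ terms. I expect the technical heart of the proof to be verifying that under $\alpha_{S'}\le 0$ and $|\gamma_S|\le|\alpha_S|$, the relevant off-diagonal matrix elements $\braket{s|\mathcal{H}|s'}$ for $s\ne s'$ are nonpositive, so that $-\mathcal{H}$ is genuinely Perron–Frobenius with a nonnegative ground state; establishing this sign bookkeeping over all $\ket{s}$ pairs, and confirming it is the $Z_S$-optimal among all sign choices, is where the real work lies.
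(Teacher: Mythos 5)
Your part (1) is essentially the paper's own argument: trace-orthogonality of Pauli strings reduces $\mathcal{N}$ to $\sum_{S}\alpha_{S}\text{Tr}\left(X_{S}\Theta\right)$ (plus $Z_S$-invariant terms), entry-wise positivity forces $\text{Tr}\left(X_{S}\Theta\right)\geq0$, and so the all-negative configuration, when attainable by a $Z_S$ conjugation, minimizes the metric; the same for $K$. That part stands.

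Parts (2) and (3), however, contain a genuine error: you have swapped the two bases. The hypothesis that $\Theta$ and $K$ are entry-wise positive refers to the computational basis, $\braket{\sigma|\Theta|\sigma'}=\Phi(\sigma\cdot\sigma')\geq0$; in the $X$-eigenbasis $\{\ket{s}\}$ the kernels are \emph{diagonal} (Property 4), so there $\braket{g|\Theta|g}=\sum_{s}\Theta_{\lvert s\lvert}g_{s}^{2}$ is completely insensitive to the signs of the amplitudes $g_{s}$, and your proposed maximization argument yields nothing. Symmetrically, the Perron--Frobenius mechanism you invoke lives in the computational basis, not the $X$-basis: the terms $\alpha_{S}X_{S}$ are \emph{diagonal} in $\{\ket{s}\}$, and the off-diagonal elements $\braket{s|\mathcal{H}|s'}$ there are controlled by $\beta_{S}$ and $\gamma_{S}$, on whose signs nothing is assumed; hence the condition $\alpha_{S}\leq0$, $\lvert\gamma_{S}\lvert\leq\lvert\alpha_{S}\lvert$ cannot produce a sign-definite off-diagonal structure in that basis, and the ``technical heart'' you flag would fail. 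The correct route (the paper's) runs entirely in the computational basis: a $Z_{S}$ conjugation preserves the moduli $\lvert g_{\sigma}\lvert$ of the ground-state amplitudes and only flips signs; entry-wise positivity gives $\braket{g|\Theta|g}=\sum_{\sigma\sigma'}\Theta_{\sigma\sigma'}g_{\sigma}g_{\sigma'}\leq\sum_{\sigma\sigma'}\Theta_{\sigma\sigma'}\lvert g_{\sigma}\lvert\lvert g_{\sigma'}\lvert$, with equality iff all $g_{\sigma}$ share one sign; and that uniform sign structure holds precisely for the choice $\alpha_{S}\leq0$ with $\lvert\gamma_{S}\lvert\leq\lvert\alpha_{S}\lvert$, because then the off-diagonal elements $\braket{\sigma|\mathcal{H}|\sigma'}$, which come only from the $X_{S}$ and $Y_{S}$ terms and are of the form $\alpha_{S}\pm\gamma_{S}$, are all nonpositive --- i.e., $\mathcal{H}$ is stoquastic in $\{\ket{\sigma}\}$ and Perron--Frobenius applies there. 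The same repair is needed in part (3): besides the operator-norm factor, which reduces to $\braket{+|H|+}-E_{g}=\sum_{S}\alpha_{S}-E_{g}$ and which you do address, the bound carries the factor $1/\braket{g|\Theta^{-1}|g}$, which in the large-bias limit becomes $\epsilon/\bigl(1-(1-\epsilon)\braket{g|+}^{2}\bigr)$ and is maximized exactly when $\braket{g|+}\propto\sum_{\sigma}g_{\sigma}$ is maximal --- again the computational-basis sign argument, not an $X$-basis one.
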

Another very common family of Hamiltonians is that of 2-local Hamiltonians,
which can be written as 
\begin{equation}
H_{2}=\sum_{\gamma}h_{i}^{(\gamma)}\sigma_{i}^{(\gamma)}+\sum_{\gamma\delta}\sum_{i,j}J_{i,j}^{(\gamma,\delta)}\sigma_{i}^{(\gamma)}\sigma_{j}^{(\delta)},\label{eq:23}
\end{equation}
with $\sigma_{i}^{(\gamma)}\in\left\{ X_{i},Y_{i},Z_{i}\right\} $
and $i\in\left\{ 1,...,N\right\} $. One can also show that 
\begin{prop}
For two-local Hamiltonians, if there is a sign transformation of the
form $Z_{S}$ that will make $J_{i,j}^{(X,X)},h_{i}^{(X)}\leq0\ \forall i,j$,
then it will minimize $\mathcal{N}\left(H_{2},\Theta\right)$ and
$\mathcal{N}\left(H_{2},K\right)$ out of all possible $Z_{S}$ transformations.
\end{prop}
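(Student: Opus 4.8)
The plan is to reduce both $\mathcal{N}\left(H_{2},\Theta\right)$ and $\mathcal{N}\left(H_{2},K\right)$ to a single weighted sum of the pure-$X$ Pauli coefficients of $H_{2}$, and then to recognize that the $Z_{S}$ transformations act on this sum as an optimization over sign vectors $\epsilon\in\left\{ \pm1\right\} ^{N}$ with nonnegative weights, whose minimum is attained precisely when all those coefficients are made nonpositive.

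First I would expand the kernel via Property 3 as $\Theta=\sum_{\mathcal{S}}\alpha_{\lvert\mathcal{S}\lvert}X_{\mathcal{S}}$ with $X_{\mathcal{S}}=\prod_{i\in\mathcal{S}}X_{i}$, write $H_{2}=\sum_{a}c_{a}P_{a}$ in the Pauli basis, and use the orthogonality $\text{Tr}\left(P_{a}P_{b}\right)=2^{N}\delta_{ab}$. Since $\text{Tr}\left(P_{a}X_{\mathcal{S}}\right)\neq0$ only when $P_{a}=X_{\mathcal{S}}$, only the pure-$X$ terms of $H_{2}$ survive, giving
\[
\text{Tr}\left(H_{2}\Theta\right)=2^{N}\Big[\alpha_{0}c_{I}+\alpha_{1}\sum_{i}h_{i}^{(X)}+\alpha_{2}\sum_{\{i,j\}}J_{ij}^{(X,X)}\Big],
\]
with $\alpha_{\lvert\mathcal{S}\lvert}=\Phi\left(N-2\lvert\mathcal{S}\lvert\right)$. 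Entry-wise positivity of the kernels (the same regime as Proposition 5) means $\Phi\geq0$ on every achievable value $N-2\lvert\mathcal{S}\lvert$ of $\sigma\cdot\sigma'$, hence $\alpha_{1},\alpha_{2}\geq0$; the identical statement holds for $K$ with coefficients $F\left(N-2\lvert\mathcal{S}\lvert\right)\geq0$.

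Second I would strip off every $Z_{S}$-invariant piece. The product $\text{Tr}\left(H_{2}\right)\text{Tr}\left(\Theta\right)$ is invariant because trace is preserved under unitary conjugation and $\text{Tr}\left(\Theta\right)=2^{N}\Phi\left(N\right)$ is fixed; the identity contribution $\alpha_{0}c_{I}$ is likewise fixed since $c_{I}$ is a trace. For $\mathcal{N}\left(H_{2},K\right)$ the same holds, and moreover $E_{g}$ and $\text{Tr}\left(K\right)$ are unchanged because a $Z_{S}$ transformation is unitary and preserves the spectrum. Thus the only $Z_{S}$-dependent part of either metric is the pure-$X$ sum, with $\alpha\to F$ for the CK. The clean structural point is that conjugation by $Z_{S}$ never changes the Pauli type of a term ($ZXZ=-X$, $ZYZ=-Y$, $ZZZ=Z$), so it creates no new pure-$X$ strings and merely sends $h_{i}^{(X)}\mapsto\epsilon_{i}h_{i}^{(X)}$ and $J_{ij}^{(X,X)}\mapsto\epsilon_{i}\epsilon_{j}J_{ij}^{(X,X)}$ with $\epsilon_{i}=\left(-1\right)^{[i\in S]}$.

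Finally I would carry out the sign optimization. Setting $f\left(\epsilon\right)=\alpha_{1}\sum_{i}\epsilon_{i}h_{i}^{(X)}+\alpha_{2}\sum_{\{i,j\}}\epsilon_{i}\epsilon_{j}J_{ij}^{(X,X)}$, the hypothesis provides some $\epsilon^{\star}$ with $\epsilon_{i}^{\star}h_{i}^{(X)}\leq0$ and $\epsilon_{i}^{\star}\epsilon_{j}^{\star}J_{ij}^{(X,X)}\leq0$ for all $i,j$, so each term equals $-\lvert h_{i}^{(X)}\lvert$ or $-\lvert J_{ij}^{(X,X)}\lvert$. For any competitor $\epsilon$ one has $\epsilon_{i}h_{i}^{(X)}\geq-\lvert h_{i}^{(X)}\lvert=\epsilon_{i}^{\star}h_{i}^{(X)}$ and $\epsilon_{i}\epsilon_{j}J_{ij}^{(X,X)}\geq\epsilon_{i}^{\star}\epsilon_{j}^{\star}J_{ij}^{(X,X)}$, and since $\alpha_{1},\alpha_{2}\geq0$ this yields $f\left(\epsilon\right)\geq f\left(\epsilon^{\star}\right)$ term by term; the same bound with $F$ in place of $\alpha$ disposes of $K$. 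Hence $\epsilon^{\star}$ minimizes both metrics over all $Z_{S}$. The main obstacle here is conceptual rather than computational: one must carefully verify that the non-$X$ contributions together with the $E_{g}$ and $\text{Tr}$ prefactors are genuinely $Z_{S}$-invariant, so that the problem collapses to the pure-$X$ weighted sum, and one must invoke entry-wise positivity to secure nonnegativity of the weights $\alpha_{1},\alpha_{2}$ (and their CK analogues) — without which driving the coefficients nonpositive would not be the minimizer.
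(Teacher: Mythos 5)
Your proof is correct and follows essentially the same route as the paper's: reduce $\text{Tr}\left(H_{2}\Theta\right)$ (and its CK analogue) to the pure-$X$ weighted sum $\alpha_{1}\sum_{i}h_{i}^{(X)}+\alpha_{2}\sum_{i,j}J_{i,j}^{(X,X)}$ via Pauli orthogonality, use entry-wise positivity to get $\alpha_{1},\alpha_{2}\geq0$, and bound the sum below by the all-nonpositive sign assignment. The paper's appendix proof is far terser—it simply cites the main-text trace formula and states the absolute-value lower bound—so the details you supply (the explicit $Z_{S}$ action $h_{i}^{(X)}\mapsto\epsilon_{i}h_{i}^{(X)}$, $J_{ij}^{(X,X)}\mapsto\epsilon_{i}\epsilon_{j}J_{ij}^{(X,X)}$, and the $Z_{S}$-invariance of $\text{Tr}\left(H\right)\text{Tr}\left(\Theta\right)$, $E_{g}$, and $\text{Tr}\left(K\right)$) are exactly the steps the paper leaves implicit.
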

For general Hamiltonians one can also show the following
\begin{prop}
For any Hamiltonian $H$, if $\Theta$ and $K$ are entry-wise positive,
$\braket{g|\Theta|g}$ and $\braket{g|K|g}$ are maximized when $\braket{\sigma|g}\geq0\ ,\forall\sigma$
out of all possible signs of $\braket{\sigma|g}$ with the same norms.
\end{prop}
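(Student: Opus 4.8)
The plan is to expand both quadratic forms in the computational basis and exploit the entry-wise positivity of the kernels directly. I would write $\ket{g}=\sum_{\sigma}g_{\sigma}\ket{\sigma}$ with $g_{\sigma}=\braket{\sigma|g}\in\mathbb{R}$, and decompose each amplitude into its magnitude and sign, $g_{\sigma}=s_{\sigma}\lvert g_{\sigma}\rvert$ with $s_{\sigma}\in\{-1,+1\}$. Holding the magnitudes $\lvert g_{\sigma}\rvert$ fixed and varying the signs $s_{\sigma}$ is exactly the optimization over ``all possible signs $\ldots$ with the same norms'' in the statement, and the all-nonnegative choice corresponds to $s_{\sigma}=+1$ for every $\sigma$.

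First I would rewrite the form as
\[
\braket{g|\Theta|g}=\sum_{\sigma,\sigma'}s_{\sigma}s_{\sigma'}\,\lvert g_{\sigma}\rvert\,\lvert g_{\sigma'}\rvert\,\Theta_{\sigma\sigma'},\qquad \Theta_{\sigma\sigma'}\equiv\braket{\sigma|\Theta|\sigma'}.
\]
The entry-wise positivity hypothesis gives $\Theta_{\sigma\sigma'}\geq0$, and $\lvert g_{\sigma}\rvert,\lvert g_{\sigma'}\rvert\geq0$ by construction, so every summand $\lvert g_{\sigma}\rvert\,\lvert g_{\sigma'}\rvert\,\Theta_{\sigma\sigma'}$ is nonnegative. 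Since $s_{\sigma}s_{\sigma'}\leq1$ for any sign pattern, each term is bounded above by its magnitude, $s_{\sigma}s_{\sigma'}\,\lvert g_{\sigma}\rvert\,\lvert g_{\sigma'}\rvert\,\Theta_{\sigma\sigma'}\leq\lvert g_{\sigma}\rvert\,\lvert g_{\sigma'}\rvert\,\Theta_{\sigma\sigma'}$. Summing over $\sigma,\sigma'$ then gives $\braket{g|\Theta|g}\leq\sum_{\sigma,\sigma'}\lvert g_{\sigma}\rvert\,\lvert g_{\sigma'}\rvert\,\Theta_{\sigma\sigma'}$, whose right-hand side is precisely the value of the form on the state with amplitudes $\lvert g_{\sigma}\rvert\geq0$. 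Hence the all-nonnegative sign choice attains the maximum.

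The same argument applies verbatim to $\braket{g|K|g}$ upon replacing $\Theta_{\sigma\sigma'}$ by $K_{\sigma\sigma'}=\braket{\sigma|K|\sigma'}\geq0$, so $K$ needs no separate treatment. The only step demanding care is that the term-by-term inequality survives the summation: this relies on every summand being nonnegative, which uses both the entry-wise positivity of the kernel and the nonnegativity of the fixed magnitudes --- if either hypothesis were dropped the bound could fail. I therefore do not anticipate a genuine obstacle; the result is a Perron--Frobenius-flavored observation that a nonnegative quadratic form is maximized, over sign patterns of fixed magnitude, by the fully aligned (all-positive) pattern.
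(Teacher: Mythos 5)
Your proof is correct and follows essentially the same route as the paper's: expand the quadratic form in the computational basis, use entry-wise positivity of the kernel to bound each term by its absolute value, and observe that the all-nonnegative sign pattern attains the bound. The paper's version additionally notes (via strict positivity) that equality holds only when all amplitudes share the same sign, but for the maximization claim as stated your argument suffices.
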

As mentioned in Sec. \hyperref[sec:III]{III}, $K$ is entry-wise
positive whenever the activation functions are positive, while $\Theta$
is entry-wise positive whenever we have a high bias toward $\Theta_{0}$
in the NTK spectra, both of which are very common cases. 

The results above provide an explanation for why applying a transformation
to make the Hamiltonian obey the Marshall-Peierls sign rule before
optimizing the network enhanced the performance in previous studies
\citep{Park_2022,Nomura_2021,Bukov_2021}. Such a transformation is
a specific case of using a $Z_{S}$ transformation in order to turn
the Hamiltonian into a stoquastic Hamiltonian. Thus the above demonstrates
that for the general family of Hamiltonians defined above, if one
can make the Hamiltonian stoquastic using a $Z_{S}$ transformation,
then it will tend to improve the performance of the optimization,
making the stoquastic form preferred whenever it is attainable. 

More generally, it was shown empirically that stoquastic Hamiltonians
are easier for NQS ground state optimizations \citep{Bukov_2021,Szabo:2020vk},
as compared to highly non-stoquastic Hamiltonians. One of the reasons
for this phenomenon is the fact that for a lot of common ANN architectures,
the kernels will be highly biased toward their principal component.
From the kernels property in Eq. (\ref{eq:14}), the principal component
always have a simple sign structure, and in most common cases, it
is the state $\ket{+}=\frac{1}{2^{N/2}}\sum_{\sigma}\ket{\sigma}$
which is ``in the middle'' of the positive cone. Since stoquastic
Hamiltonians always have a positive ground state, $\braket{g|\sigma}\geq0$,
the bias is compatible with their ground states, in the sense that
the kernel expectation values over the ground state will be large
compare to non-stoquastic Hamiltonians. One can see that by considering
the extremely biased case toward the $\ket{+}$ state, where $\mathcal{M}\left(H,\Theta\right)\propto\braket{g|\Theta|g}\approx\left(\sum_{\sigma}\braket{\sigma|g}\right)^{2}.$
Thus, in general ground state with a uniform sign structure will have
larger kernel expectation then a ground state with a highly non-uniform
signs.

On top of that, stoquastic Hamiltonians will yield low values for
$\mathcal{N}\left(H,\Theta\right)$ and $\mathcal{N}\left(H,K\right)$,
meaning that they will be strongly compatible with the kernels, in
the sense that lower energy states will have higher kernels expectation
values. One can see that for example by considering the most general
form of two-local Hamiltonians in Eq. (\ref{eq:23}), which are the
most common local Hamiltonians in practical settings. For such Hamiltonians 

\[
\text{\ensuremath{\mathcal{N}\left(H_{2},\Theta\right)}=\ensuremath{\sum_{i,j}J_{i,j}^{(X,X)}\alpha_{2}}}+h_{i}^{(X)}\alpha_{1}
\]
where we assumed that $\text{Tr}\left(H\right)=0,\Vert H\Vert_{F}=\Vert\Theta\Vert_{F}$
W.L.G. $\alpha_{1},\alpha_{2}$ depend on $\Theta$, and whenever
there is a significant bias toward $\ket{+}$ they are both positive.
A necessary condition for a 2-local Hamiltonians to be stoquastic
is that $J_{i,j}^{(X,X)},J_{i,j}^{(X,I)},J_{i,j}^{(X,J)}$ will all
be negative. And a negative value of those coefficients will in general
produce a low value for $\mathcal{N}\left(H,\Theta\right)$ compared
to a highly non-stoquastic Hamiltonians with highly non-uniform signs
for those coefficients. 

Even in the most general case, without assuming anything about the
kernels, using the weak bias property in Eq. (\ref{eq:14}), there
will always be some bias toward states with less ``complex'' sign
structures. As mentioned, the kernels eigenstates are the states $\ket{s}$
with $s$ being a string of $+$ and $-$. The kernels spectra decomposes
into sectors, each one of which is a subspace spanned by the states
$\ket{s}$ with a specific number of $+$ signs. Each sector has a
unique complexity as was defined in Sec. \hyperref[sec:III]{III}
and \citep{https://doi.org/10.48550/arxiv.1907.10599}, which is the
number of Hilbert space dimensions that the state oscillates in. The
weak bias property of every NTK and CK states that there will always
be some bias toward a ``simple'' subspace with complexity 0 or 1.
Thus, ground states which have their most significant component in
low complexity sectors will be more aligned with the bias, while highly
``oscillating'' ground states will be much less aligned, rendering
them harder to learn. Due to the exponential growth of Hilbert space,
the bias toward the principal component will grow exponentially with
the system size.

\subsection*{V.C Basis Optimization \protect\label{subsec:V.C}}

One can utilize the relations between the metrics $\mathcal{N}\left(H,\Theta\right)$
and $\mathcal{N}\left(H,K\right)$ and the overall performance in
order to find the local basis most suitable for the ANN at hand. In
the case of 2-local Hamiltonians, every local basis transformation,
described by a set of unitary matrices $U_{i}\in SU(2)\ ,i\in\left\{ 1,...,N\right\} $,
such that $\tilde{H}=UHU^{\dagger}$ with $U=\otimes_{i}^{N}U_{i}$,
corresponds to a set of orthogonal matrices $O_{i}\in O(3)$, that
transform the Hamiltonian coefficients as \citep{Klassen_2019}
\begin{equation}
\tilde{h}_{i}^{(\gamma)}=\sum_{\delta}O_{i}^{(\gamma,\delta)}h_{i}^{(\delta)}\ ,\ \tilde{J}_{i,j}^{(\gamma,\delta)}=\sum_{\alpha,\beta}O_{i}^{(\gamma,\alpha)}J_{i,j}^{(\alpha,\beta)}O_{j}^{(\beta,\delta)}.
\end{equation}
For example, one can then try to optimize $\mathcal{N}\left(\tilde{H},K\right)$
over these orthogonal matrices. Since under a basis transformation
the trace is preserved, ones has
\begin{equation}
\mathcal{N}\left(\tilde{H},K\right)=\text{Tr}\left(\tilde{H}K\right)=k_{2}\sum_{i,\delta}O_{i}^{(X,\delta)}h_{i}^{(\delta)}+k_{1}\sum_{i,j,\alpha,\beta}O_{j}^{(X,\alpha)}J_{i,j}^{(\alpha,\beta)}O_{j}^{(\beta,X)},
\end{equation}
with $k_{1}\equiv\text{Tr}\left(X_{1}K\right)$ and $k_{2}\equiv\text{Tr}\left(X_{1}X_{2}K\right)$,
which depends on the network. This optimization is not easy in general.
In the translationally invariant case one can restrict oneself to
a uniform transformation, meaning that $O_{i}\equiv O$, which reduces
the number of variables to a constant. Consider for example the TFIM
in Eq. (\ref{eq:13}), and consider transformations which will render
the transformed Hamiltonian real for the sake of simplicity. Then
the transformation takes place in the $XZ$ plane, meaning that $O\in O(2)$.
One can further verify that in our case it is enough to consider $O\in SO(2)$,
which is parameterized by a single parameter $\theta$. Carrying out
the calculation one obtains 
\begin{equation}
\mathcal{N}\left(\tilde{H},K\right)=-k_{2}J\sin^{2}(\theta)-k_{1}hN\cos(\theta),
\end{equation}
with $J$ being the number of edges over the interactions graph. Optimizing
this expression in term of $\theta$, one finds that the optimal value
of $\mathcal{N}\left(\tilde{H},K\right)$ is $\min\left\{ -k_{2}J-\frac{k2N^{2}h^{2}}{4K_{2}J},\ -k_{1}Nh,\ k_{1}Nh\right\} $,
which corresponds to $\theta=\arccos\left(-\frac{hk_{1}N}{2k_{2}J}\right),\ \theta=0,\ \theta=\frac{\pi}{2}$,
respectively. $k_{1}$ and $k_{2}$ can be approximated in the extremely
biased regime when the bias is toward $\ket{+}$ such that $k_{1}\approx k_{2}$.
In other cases, for common activation functions and architectures
one can use various method to evaluate them approximately or extrapolate
from small system sizes \citep{https://doi.org/10.48550/arxiv.1711.00165,neal2012bayesian,https://doi.org/10.48550/arxiv.1907.10599,https://doi.org/10.48550/arxiv.2104.03093}.

\section*{VI. Numerical Results \protect\label{sec:VI}}

We demonstrate the ideas presented above in numerical experiments
in the infinite limit. Specifically, we will show that the metrics
$\mathcal{M}\left(H,\Theta\right)$, $\mathcal{N}\left(H,\Theta\right)$,
$\mathcal{M}\left(H,K\right)$, and $\mathcal{N}\left(H,K\right)$
are indeed correlated, respectively, with the convergence time of
the network to the ground state, their ground state energy, and the
initial values of both these quantities.

\subsection*{VI.A NTK vs. Convergence Time}

$\mathcal{N}\left(H,\Theta\right)$ is a version of the inner product
between the Hamiltonian and NTK, and thus provides an a-priori metric
about the overall relationship between the Hamiltonian and NTK spectra
and eigenstates. As conjectured, the lower it is, the faster the energy
will decrease overall, due to the bias toward lower energy eigenstates
in general. 

$\mathcal{M}\left(H,\Theta\right)$ is proportional to the expectation
value of the NTK over the ground state, so it provides a measure for
the bias of the optimization toward to ground state, as compared to
the principal component. It is important to note that $\mathcal{N}\left(H,\Theta\right)$
can be small, indicating that the bias will be compatible with lowering
the overall energy, while $\mathcal{M}\left(H,\Theta\right)$ is also
small, so obtaining a low energy state will be easy, yet obtaining
the actual ground state will be hard. As noted before, these metrics
are only sensible when one uses them to compare different bases, while
keeping the initial conditions the same. 

In this section we will consider different models and different architectures
under a local basis transformation, which is determined by a single
parameter in a similar fashion to Sec. \hyperref[subsec:V.C]{V.C}.
In order to eliminate the influence of the initial conditions determined
on average by the CK, we use the initial uniform superposition 
\begin{equation}
\ket{\psi_{0}}=\sum_{i}\ket{E_{i}},
\end{equation}
with $\ket{E_{i}}$ being the eigenstates of the Hamiltonian. 

In Fig. \ref{fig:Left:-The-energy} we consider the noninteracting
Hamiltonian 
\begin{equation}
H_{\text{{local}}}=-\sum_{i}X_{i},
\end{equation}
which under local basis rotation assumes the form 
\begin{equation}
H_{\text{{local}}}=\sum-\cos\left(\theta\right)X-\sin\left(\theta\right)Z.\label{eq:=00002029}
\end{equation}
Following Sec. \hyperref[subsec:V.C]{V.C} we expect that the smaller
$\theta$ is, the faster the NQS will converge to the ground state;
As can be seen from the results, this is indeed the case. We also
see that the metric $\mathcal{M}\left(H,\Theta\right)$ is indeed
positively correlated with the convergence time. 

\begin{figure}
\begin{centering}
\includegraphics[scale=0.22]{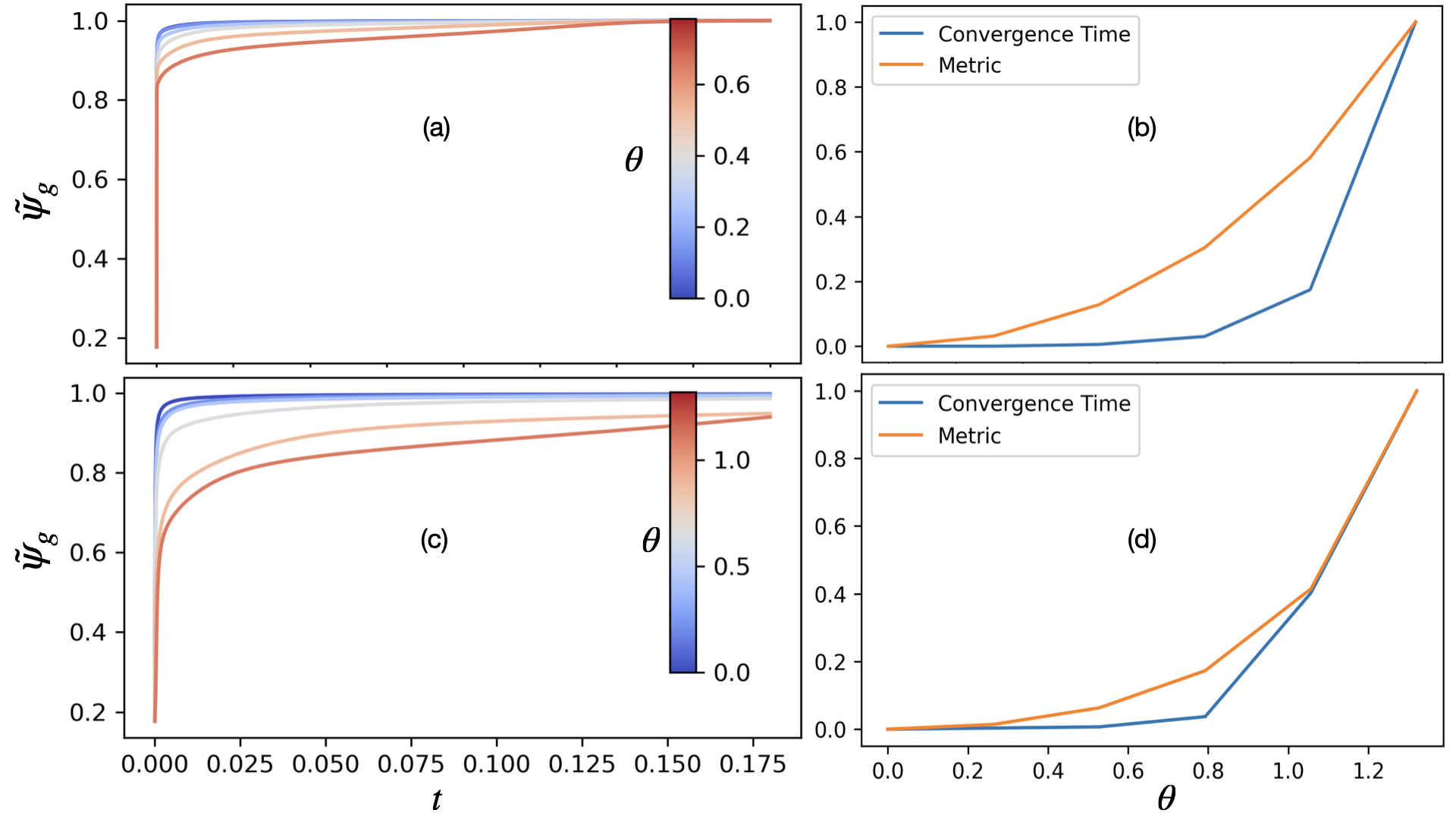}\caption{\protect\label{fig:Left:-The-energy}(a) The ground state overlap
as a function of time for $H_{\text{TFIM}}$ with varying $\theta$.
(b) The convergence time, which is the time it takes for the overlap
to reach a fixed threshold $\epsilon=0.95$ and the metric $\mathcal{M}\left(H,\Theta\right)$
as a function of $\theta$. (c) The ground state overlap as a function
of time for $H_{\text{local}}$ for $\theta\in\left[0,\frac{\pi}{2}\right]$.
(d) The convergence time and the metric $\mathcal{M}\left(H,\Theta\right)$
plotted as a function of $\theta$. Both networks have one hidden
layer and a ReLU activation. Metrics and convergence times are normalized
to the range $\left[0,1\right]$.}
\par\end{centering}
\end{figure}
We use the same metric in Fig. \ref{fig:Left:-The-energy} for the
TFIM model in Eq. (\ref{eq:13}), with $h_{i}=2$. Under local basis
rotation, from the analysis in Sec. \hyperref[sec:IV.C]{IV.C} we
expect the fastest convergence to occur when $\theta=0$, which is
indeed the case. This can be understood from the more general analysis,
since this is the case where the negative off diagonal elements are
most dominant.

We also examine if the metric $\mathcal{N}\left(H,\Theta\right)$
is positively correlated with the convergence time of the energy to
the ground state energy. We first examine this relationship for the
Heisenberg model with a transverse field,
\begin{equation}
H_{\text{Heis}}=\sum_{\braket{i,j}}\vec{\sigma}_{i}\cdot\vec{\sigma}_{j}-h\sum_{i}X_{i},
\end{equation}
with $h=1$. We note that the transverse field is introduced since
otherwise the Hamiltonian would be invariant under uniform local basis
transformations. Due to the symmetry of the coupling term we expect
to get similar results to the uncoupled chain, such that the minima
will occur at $\theta=0$. As can be seen in Fig. \ref{fig:Left:-The-average-1},
this is indeed the case. One can also see that the metric $\mathcal{N}\left(H,\Theta\right)$
is indeed positively correlated with the convergence time. We further
demonstrate this relation for the XYZ Hamiltonian 

\begin{figure}
\begin{centering}
\includegraphics[scale=0.22]{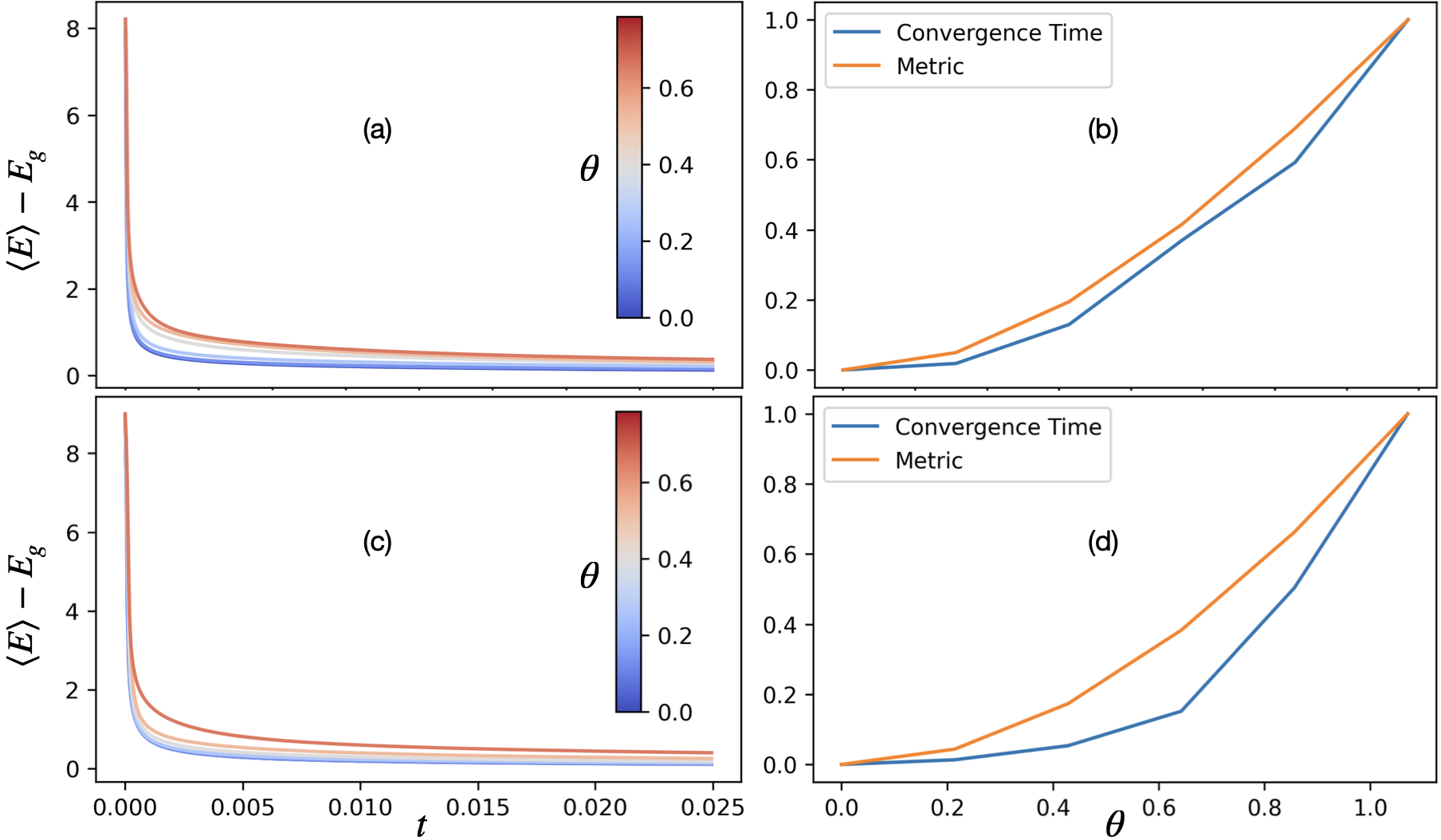}\caption{\protect\label{fig:Left:-The-average-1}(a) The energy as a function
of time for $H_{\text{XYZ}}$ for different local basis. (b) The average
convergence time and the metric $\mathcal{N}\left(H,\Theta\right)$
as a function of $\theta$ normalized to the range $\left[0,1\right]$.
(c) The energy as a function of time for $H_{\text{Heis}}$ for different
local basis. (d) the average convergence time and the $\mathcal{N}\left(H,\Theta\right)$
as a function of $\theta$ normalized to the range $\left[0,1\right]$.
Both networks consist of one hidden layer with ReLU activation. Metrics
and convergence times are normalized to the range $\left[0,1\right]$.}
\par\end{centering}
\end{figure}
\begin{equation}
H_{\text{XYZ}}=\sum_{\braket{i,j}}\alpha X_{i}X_{j}+\beta Z_{i}Z_{j}+\gamma Y_{i}Y_{j},
\end{equation}
with $\alpha=-2,\beta=-1,\gamma=-0.5$. Again one can observe in Fig.
\ref{fig:Left:-The-average-1} that the metric $\mathcal{N}\left(H,\Theta\right)$
is highly correlated with the convergence time for the energy expectation
over the NQS. 

In Fig. \ref{fig:Left:-the-convergence} we provide the results of
additional benchmarks with different models and architectures, which
all seem to indicate a strong relationship between the metrics and
the corresponding convergence time. 

\begin{figure}
\begin{centering}
\includegraphics[scale=0.2]{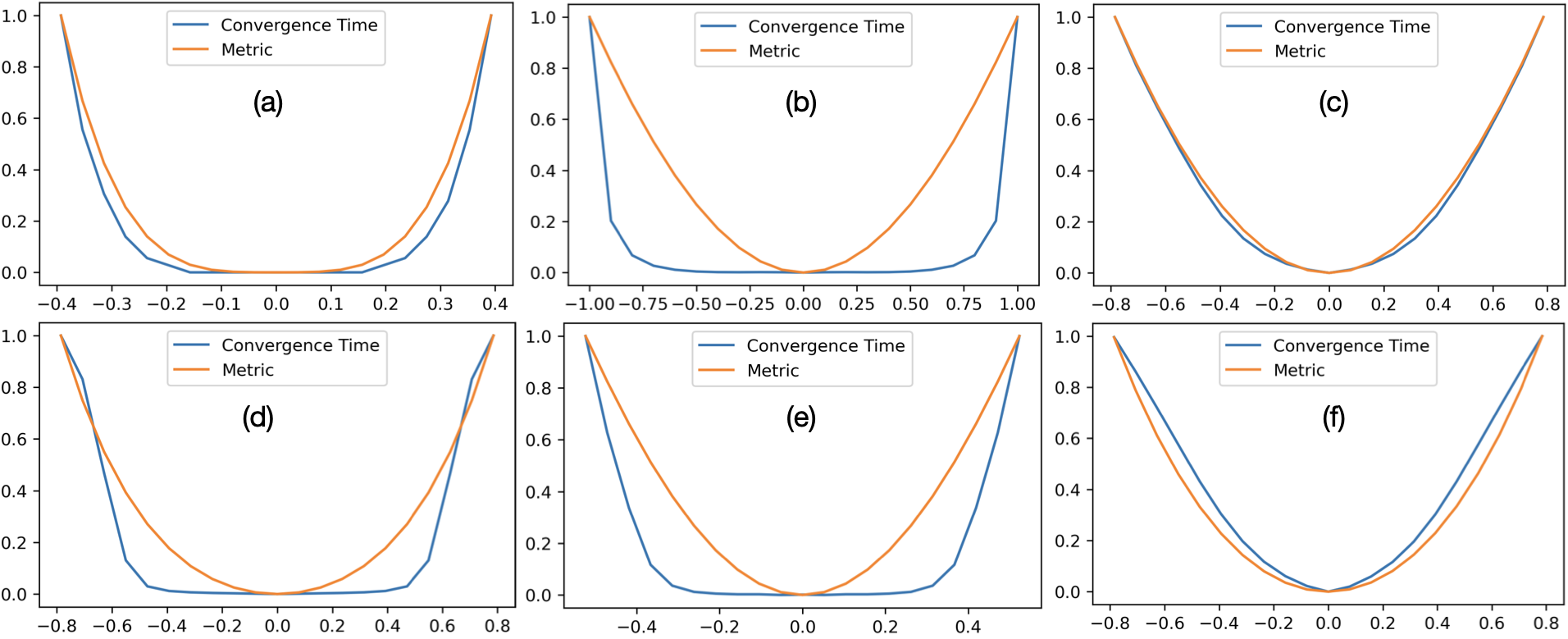}\caption{\protect\label{fig:Left:-the-convergence}(a) Ground state overlap
convergence time and $\mathcal{M}\left(H,\Theta\right)$ for $H_{\text{local}}$
with one hidden layer and tanh activation. Energy convergence time
and the metric $\mathcal{N}\left(H,\Theta\right)$ for $H_{\text{Heis}}$
with (b) one hidden layer and erf activation, and (c) two hidden layers
with ReLU activation. (d) Ground state overlap convergence time and
$\mathcal{M}\left(H,\Theta\right)$ for $H_{\text{TFIM}}$ with one
hidden layer and erf activation. (e) Energy convergence time and the
metric $\mathcal{N}\left(H,\Theta\right)$ for $H_{\text{XYZ}}$ with
one hidden layer and sigmoid activation function. (f) Energy convergence
time and the metric $\mathcal{N}\left(H,\Theta\right)$ for $H_{\text{TFIM}}$
with two hidden layers and ReLU activation function. Metrics and convergence
times are normalized to the range $\left[0,1\right]$}
\par\end{centering}
\end{figure}
Beyond the general correlation between the metrics and the convergence
time, one can see that in all of the models we examined, local minima
for the convergence time are aligned with the local minima for the
metrics. All of those local minima occur when the negative off diagonal
terms are the most dominant relative to other bases. Further details
are provided in \hyperref[subsec:App.D]{App. D}.

\subsection*{VI.B CK vs. Initial Averages }

As was mentioned in Sec. \hyperref[subsec:III.B]{III.B}, the CK is
just the covariance matrix of the gaussian distribution of the initial
unnormalized states. The metrics $\mathcal{N}\left(H,K\right)$ and
$\mathcal{M}\left(H,K\right)$ are just the large-$N$ aproximations
to the average initial energy and ground state overlap, respectively. 

In Fig. \ref{fig:Top:-Average-initial} we plot the averages and their
corresponding metrics in several cases. Similarly to the previous
subsection, we use the uncoupled Hamiltonian as defined in Eq. (\ref{eq:=00002029})
under local unitary basis transformation. We also examine the dependence
over a model parameter such as the strength of the transverse field
in the case of the TFIM. We observe that the more the negative off-diagonal
elements are dominant, the better the metrics and averages. We also
note that there is a crucial difference between negative and positive
off diagonal elements in the initial ground state overlap. Finally
we use the 1D $J1\text{-}J2$ model,

\begin{equation}
H_{J1\text{-}J2}\equiv J_{1}\sum_{\braket{i,j}}\vec{\sigma}_{i}\vec{\sigma}_{j}+J_{2}\sum_{\braket{\braket{i,j}}}\vec{\sigma}_{i}\vec{\sigma}_{j},\label{eq:33}
\end{equation}
with $\braket{\braket{i,j}}$ indicating next nearest neighbors, with
$J_{1}=1$ and $J_{2}=\frac{1}{2}$. One can see that the ground state
overlap prediction is quite accurate, and that it displays a rapid
decrease as a function of $N$. Although we use a 1D example we can
already see how the metrics can predict the hardness of trying to
simulate the $J1\text{-}J2$ model using ANN. 

\begin{figure}

\begin{centering}
\includegraphics[scale=0.18]{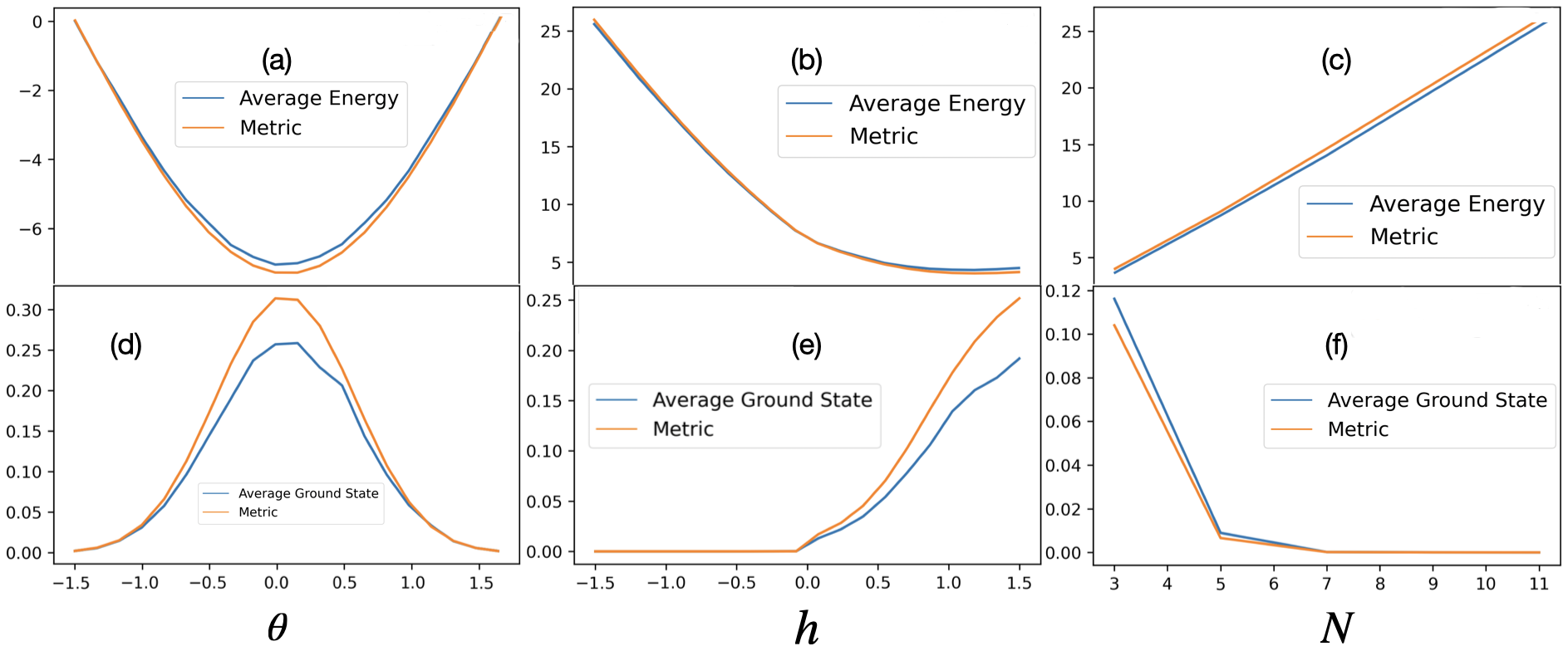}\caption{\protect\label{fig:Top:-Average-initial}(a,b,c) Average initial energy
and (d,e,f) ground state overlap together with their corresponding
metrics $\mathcal{N}\left(H,K\right)$ and $\mathcal{M}\left(H,K\right)$,
averaged over $10^{3}$ samples. (a,d) The model $H_{\theta}$ with
$\theta\in\left[0,\pi\right]$, where we omitted the $E_{g}$ term
from $\mathcal{N}\left(H,K\right)$ due to it being independent of
$\theta$. (b,e) The TFIM as a function of $h\in\left[-1.5,1.5\right]$,
where negative values of $h$ correspond to positive off diagonal
elements in the Hamiltonian and vice versa.  (c,f) The $J1\text{-}J2$
model at with $J1/J2=0.5$ as a function of $N$. For all of the above
we used one hidden layer and ReLU activation function. }
\par\end{centering}
\end{figure}
In Fig. \ref{fig:Top:-Average-initial-1} we add further benchmarks.
One can see that in some cases there can be large deviations between
the metric and the actual average, due to the rough estimation the
metrics were derived from. Nonetheless, their trends perfectly match.
This can be used for the purpose of comparing different parameters,
bases, or architectures.

\begin{figure}
\begin{centering}
\includegraphics[scale=0.19]{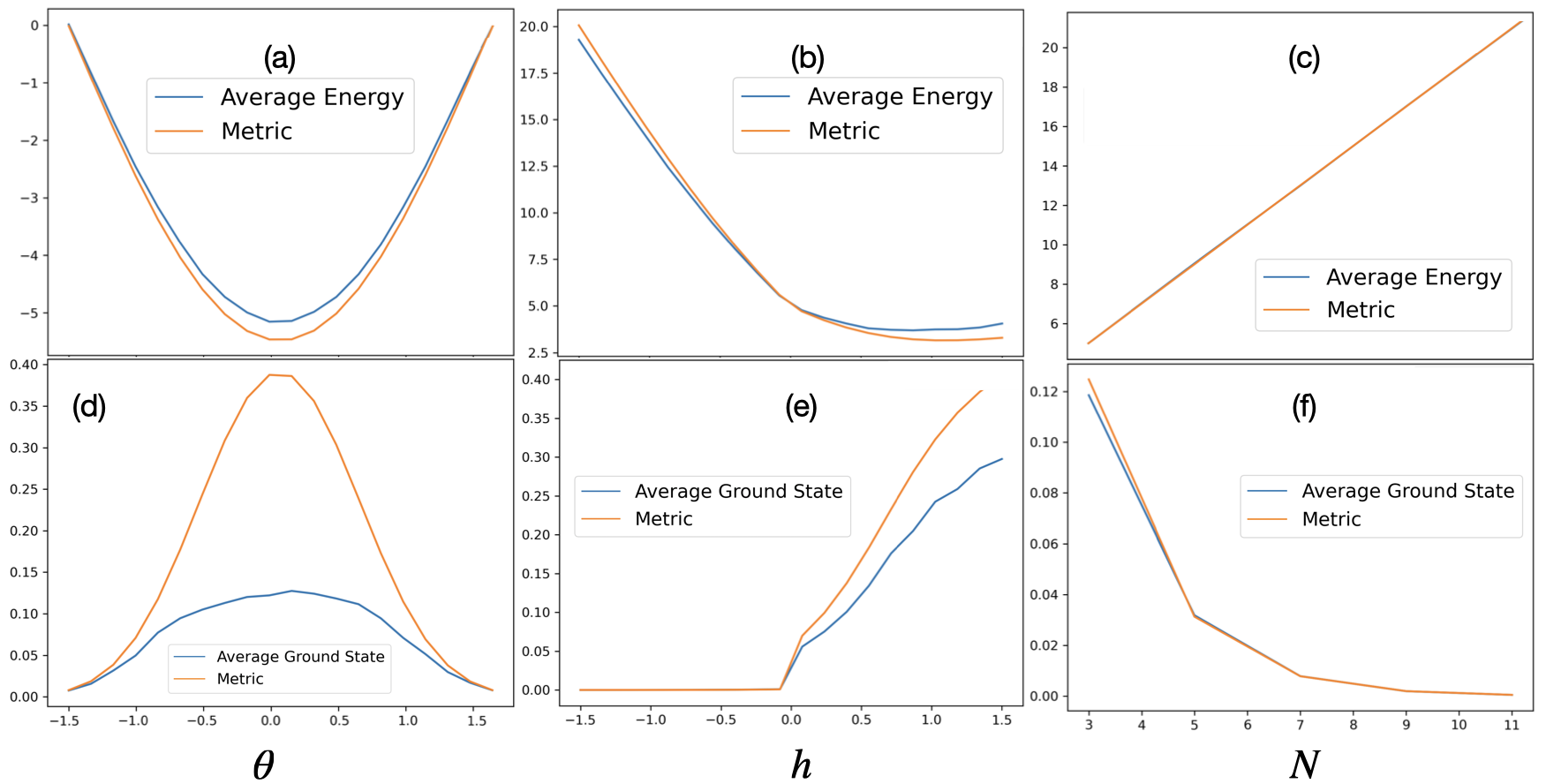}
\par\end{centering}
\caption{\protect\label{fig:Top:-Average-initial-1}(a,b,c) Average initial
energy and (d,e,f) ground state overlap with their corresponding metrics.
(a,d) $H_{\theta}$ with $\theta\in\left[0,\pi\right]$, with tanh
activation function and one hidden layer. (b,e) $H_{\text{TFIM}}$
with $h\in\left[-1.5,1.5\right]$, with two layers and ReLU activation.
(c,f): $H_{\text{Ising}}$ with sigmoid activation function. }

\end{figure}

\section*{VII. Stochastic Reconfiguration \protect\label{sec:VII}}

The non-triviality of the NTK arrises from the fact that the gradient
flow is performed in parameter space. A method known more generally
as natural gradient and in our specific context as Stochastic Reconfiguration
(SR) \citep{becca_sorella_2017,Sorella:1998vk,Park_2020} attempts
to mimic a gradient flow in the wavefunction space (which in our case
is just the Hilbert space). It is usually achieved by a modification
of the parameter update such that 
\begin{equation}
\frac{d\theta_{i}}{dt}=\sum_{j}\left(J^{T}J+\epsilon I\right)_{ij}^{-1}\frac{d\mathcal{L}}{d\theta_{j}},
\end{equation}
with $J_{\sigma i}=\frac{\partial\psi_{\sigma}}{\partial\theta_{i}}$.
Can it alleviate the kernel bias discussed above? Using the chain
rule we get 
\begin{equation}
\frac{d\ket{\psi}}{dt}=\frac{1}{\braket{\psi|\psi}}\left(I+\epsilon\Theta^{-1}\right)^{-1}\left(\braket{E}-H\right)\ket{\psi},
\end{equation}
which for a sufficiently small $\epsilon$, in the limit of infinite
width, is indeed the imaginary time Schrödinger equation. Although
it might seem as a remedy to the kernel bias, it is important to notice
several points. The first is that the initialization is still governed
by the CK, and biased in the same manner, so using the usual initialization
strategies will produce a poor initial overlap with the ground state
when the CK expectation value over the ground state is low. 

Another thing that one can notice from the equation above, is that
it is the same as Eq. (\ref{eq:8}), with $\Theta_{\text{eff}}=\left(I+\epsilon\Theta^{-1}\right)^{-1}\approx I-\epsilon\Theta^{-1}$.
One can then see that in order to guarantee that the resulting equation
will mimic the imaginary time Schrödinger equation one needs to have
\begin{equation}
\epsilon\ll\Theta_{\min}
\end{equation}
with $\Theta_{\min}$ being the minimal eigenvalue of $\Theta$. We
note that using the results listed in Sec. \hyperref[sec:III]{III},
$\Theta_{\min}$ decays as $\sim\frac{1}{2^{N^{2}}}$, so one will
need a very small $\epsilon$ to fully mimic the imaginary time dynamics.
If $\epsilon$ is not small enough, then the effective NTK affect
can be non-trivial and might induce a significant bias. Following
the analysis presented before, in order for the dynamical bias to
be aligned with the ground state, $\braket{g|\Theta_{eff}|g}\approx1-\epsilon\braket{g|\Theta^{-1}|g}$
should come close to unity as much as possible, so $\braket{g|\Theta^{-1}|g}$
should be small as possible, similar to the case without the SR method.

It is also important to note that these observations should also be
weighted against the fact that since the application of the SR method
requires a matrix inversion, each parameter update requires $\mathcal{O}\left(p^{3}\right)$
operations with $p$ being the number of parameters, so in general
it is mostly applicable to small systems \citep{Park_2020}.

\section*{VIII. General Ansätze \protect\label{sec:VIII}}

In this section, we discuss how our approach for the study of the
possible existence of a sign bias, or, more generaly, the basis dependence,
might relate to other ansatz that are optimized under gradient descent. 

For a general ansätze, the dyanmics would be governed by the equation
\begin{equation}
\frac{d\ket{\psi}}{dt}=\frac{1}{\braket{\psi|\psi}}\Theta_{t}\left(\braket{E}-H\right)\ket{\psi},
\end{equation}
with the NTK being potentially time dependent now, and where the ansätze
might not be able to represent every state in Hilbert space. 

We begin by asking what are the criteria for the performence of the
ansätze to be invariant under some change of basis, represented by
a unitary transformation $U$. We denote the solution the equation
above with $H$ by $\psi_{\sigma}\left(\theta(t)\right)$ and the
solution for the transformed Hamiltonian $\tilde{H}=UHU^{\dagger}$
by $\psi_{\sigma}\left(\tilde{\theta}(t)\right)$, with $\theta$
and $\tilde{\theta}$ being the parameters of the two solutions. 
\begin{enumerate}
\item The initial average (over the distribution of the function parameters)
of the every function of interest, e.g., the energy expectation value
or the ground state overlap should by invariant under U. For a general
Hamiltonian this would require 
\[
p_{0}\left(\psi_{\sigma}\right)=p_{0}\left(\sum_{\sigma'}U_{\sigma\sigma'}\psi_{\sigma'}\right)
\]
with $p_{0}$ being the initial distribution of the wavefunction.
\item For all $t$ there is a set of parameters $\tilde{\theta}$ such that
$\psi_{\sigma}\left(\tilde{\theta}(t)\right)=\sum_{\sigma'}U_{\sigma\sigma'}\psi_{\sigma}\left(\theta(t)\right)$.
\end{enumerate}
In general, verifying that these criteria are obeyed is not easy.
For infinite NQS, the second criteria is trivial, and the first criterion
reduces to the much simpler criterion that the NTK and CK will commute
with $U$ as stated on Sec. \hyperref[sec:III]{III} due to the emergence
of gaussian distribution for the initial state and the convergence
of the NTK to its mean initial value. In the following subsection
we show a very well known ansätze that can be seen to satisfy those
criteria for every local unitary transformation 

\subsection*{VIII.A Matrix Product States are Local-Basis Unbiased}

Let us consider the most elementary tensor network, the Matrix Product
States (MPS), which is defined algebraically as
\begin{equation}
\psi_{\sigma}=\sum_{i_{1,}...i_{N}}\left(\prod_{n}^{N}A_{i_{n},i_{n+1}}^{(\sigma_{n})}\right),\label{eq:38}
\end{equation}
for a system of $N$ spins, with each $A_{i_{n},i_{n+1}}^{(\sigma_{n})}$
being an order 3 tensor. The upper index $\sigma_{n}\in\left\{ -1,1\right\} $
corresponds to one of the physical spins, while the indices $i_{n}\in\left\{ 1,...,\chi\right\} $
are virtual (dummy) indices to be summed over. For simplicity we will
consider periodic boundary conditions where $i_{1}=i_{N}$. MPSs are
usually represented graphically using vertices and edges as follows 

\begin{figure}[H]
\centering

\centering{}\includegraphics[scale=0.3]{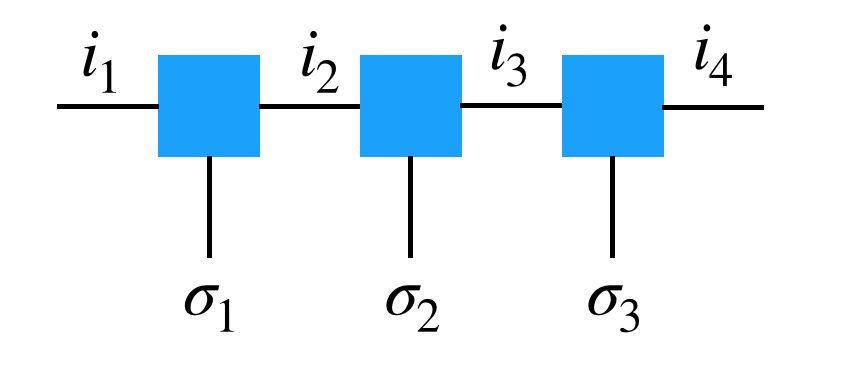}
\end{figure}
\hspace{0pt}\\
with each blue square with 3 black legs corresponding to a tensor
$A_{i_{n},i_{n+1}}^{(\sigma_{n})}$. The capacity of the ansatz is
controlled mainly by the geometry of the network and the virtual bond
dimension $\chi$. 

Lets us begin by showing the first criterion is met for local unitaries.
Usually the matrices $A^{(\sigma_{n})}$ are drawn i.i.d, from distributions
such as the Haar measure, matrix normal distributions and entrywise
normal distributions. If that is the case then 
\[
P(\psi_{\sigma})=\prod_{n}P\left(A^{(\sigma_{n})}\right)
\]
due to the i.i.d property. Thus, for local unitaries the first criterion
amounts to
\[
p\left(A^{(\sigma_{n})}\right)=p\left(\sum_{\sigma_{n}'}U_{\sigma_{n}\sigma_{n}'}A^{(\sigma_{n}')}\right).
\]
This criteria can easily be seen to be met whenever 
\[
p\left(A^{(\sigma_{n})}\right)=p\left(\sum_{\sigma_{n}}A^{(\sigma_{n})}A^{(\sigma_{n})^{T}}\right)
\]
,which is met for the Haar measure, matrix normal distribution with
identity covariance, entrywise normal distribution, and every other
distribution for the matrix distribution which is invariant under
the application of a unitary. 

The second criterion is immediate for every local unitary transformation.
Indeed, if we apply a local unitary $U$ on a spin $n$, the MPS will
transform into another MPS with the same bond dimension and local
matrix $\tilde{A}^{(\sigma_{n})}=\sum_{\sigma_{n}'}U_{\sigma_{n}\sigma_{n}'}A^{(\sigma_{n}')}$.
Hence, MPSs are locally unbiased and will not suffer from a sign bias
that can be solved by local change of basis. 

In the case that the MPS is overparameterized, i.e., that $\chi$
is large enough such that it can represent every $N$ spins wave function,
the first criterion is met for every unitary $U$, even a non-local
one, under mild conditions. Considering Eq. (\ref{eq:38}), since
there are a total of $N\chi$ product terms in the sum, the wave function
$\psi_{\sigma}$ will be distributed according to a normal distribution
by virtue of the central limit theorem. For a fixed $N$, the deviation
from the central limit theorem will scale with the bond dimension
as $\chi^{-N/2}$ in the case that each $A^{(\sigma_{n})}$ is entrywise
i.i.d, and at least as $\chi^{-1/2}$ if we allow correlations. The
covariance matrix for the normal distribution is analgous to the CK
and is given by 
\[
K=\left<\psi_{\sigma}\left(0\right)\psi_{\sigma'}(0)\right>.
\]
using Eq. (\ref{eq:38}) and the independence we obtain
\[
\left<\psi_{\sigma}\left(0\right)\psi_{\sigma'}(0)\right>=\left<\sum_{i_{1,}...i_{N}}\left(\prod_{n}^{N}A_{i_{n},i_{n+1}}^{(\sigma_{n})}\right)\sum_{i_{1,}...i_{N}}\left(\prod_{n}^{N}A_{i_{n},i_{n+1}}^{(\sigma'_{n})}\right)\right>
\]
\[
=\sum_{i_{1,}...i_{N},i'_{1,}...i'_{N}}\prod_{n}^{N}\left<A_{i_{n},i_{n+1}}^{(\sigma_{n})}A_{i'_{n},i'_{n+1}}^{(\sigma'_{n})}\right>=\sum_{i_{1,}...i_{N},i'_{1,}...i'_{N}}\prod_{n}^{N}\left<A_{i_{n},i_{n+1}}^{(\sigma_{n})}A_{i'_{n},i'_{n+1}}^{(\sigma'_{n})}\right>\delta_{\sigma_{n},\sigma'_{n}},
\]
where in the last equality we used the i.i.d property of the matrices
$A^{(\sigma_{n})}$. Finally we get 
\[
\left<\psi_{\sigma}\left(0\right)\psi_{\sigma'}(0)\right>=\left(\sum_{i_{1,}...i_{N},i'_{1,}...i'_{N}}\prod_{n}^{N}\left<A_{i_{n},i_{n+1}}^{(\sigma_{n})}A_{i'_{n},i'_{n+1}}^{(\sigma{}_{n})}\right>\right)\delta_{\sigma,\sigma'}\propto\delta_{\sigma,\sigma'}.
\]
Since the CK is proportional to the identity it will commute with
every unitary $U$. The second criterion is trivially obeyed in the
overparametrized limit.

\section*{IX. Conclusions and Future Outlook }

In this work we explored how the interplay between a neural network
architecture and a Hamiltonian affects the success of learning the
Hamiltonian ground state. Building upon recent developments in neural
network theory, we have used an infinite width limit which provides
twofold simplification; First, it provided mathematical simplicity
that allowed the study of the network initialization and dynamical
properties in a more tractable manner. Second, it absolved us from
going into the separate discussion of the network capacity. In this
spirit we also avoided going into the use of Monte Carlo techniques.
Although not practical, we could use these settings to reveal the
more fundamental obstacles that this method faces, which cannot be
alleviated by increased computational resources. 

Using two known quantities characterizing the network in this limit,
the NTK and the CK, we have derived several results, from which we
could come up with simple criteria quantifying the desired relationship
between them and the Hamiltonian, criteria which are correlated with
the success of the method, namely that the lower the energy of a Hamiltonian
eigenstate, the larger its kernel expectation should be. This is a
manifestation of a general feature of neural network, which are known
to have implicit biases \citep{https://doi.org/10.48550/arxiv.1905.10264,https://doi.org/10.48550/arxiv.1907.10599}.
We quantified that property using several metrics and showed its relationship
to the network convergence rates and initial state properties, both
theoretically and numerically. 

Using these metrics and theorems, we demonstrated that the overall
performance of the of the method depends both on the basis the Hamiltonian
is presented in and its basis invariant spectrum. Although some models
will be harder than others due to their spectral properties, for example
due to their smaller gap, which is true for almost every numerical
method, the NN method is also very sensitive the chosen basis. Using
the tools developed throughout the paper we were able to characterize
which basis transformation can influence the performance, and demonstrated
how and why common strategies to choose the basis, such as the Marshall-Peierls
sign rule, enhanced performance \citep{Park_2022,Nomura_2021,Bukov_2021}.
We were also able to elucidate the affect of basis dependent properties
of the Hamiltonian, such as its ``stoquasticity'' and the sign structure
of its ground-state. This in turn provided a fundamental explanation
to an empirical observation, that there is a sort of a manifestation
of the sign problem for ANNs, namely, that stoquastic Hamiltonians
are easier than non-stoquastic ones \citep{Bukov_2021,Szabo:2020vk}.
By characterizing and quantifying the basis dependence we also open
a new study direction toward developing methods to find a more suitable
basis or even an optimal one. One may then incorporate those idea
as a sub-routine before optimizing the network. 

Given a Hamiltonian, we also demonstrated how the metrics provided
can predict the overall performance based on the network architecture
and its hyper parameters. Although we limited ourselves to a rather
simple feed-forward network, one can apply the same techniques to
more complicated architectures, recalling the NTK and CK limits apply
to most common ones, such as convolutional neural networks \citep{https://doi.org/10.48550/arxiv.1810.05148,https://doi.org/10.48550/arxiv.1902.04760},
recurrent neural networks \citep{https://doi.org/10.48550/arxiv.2006.10246,https://doi.org/10.48550/arxiv.2104.03093},
transformers \citep{https://doi.org/10.48550/arxiv.2006.10540} and
others \citep{https://doi.org/10.48550/arxiv.2006.14548}. This in
turns might provide a pathway to understanding what are the desirable
characteristics of a neural network architecture when trying to find
the ground state of a given Hamiltonian. 

We also mention that we focused on metrics that provide a more ``macroscopic''
overall description of the relationship between a Hamiltonian and
the kernels. This allowed us to address general questions about general
Hamiltonians. Yet, when one considers a specific model that one wished
to study using a specific ANN, it can be more beneficial to take a
more fine grained approach and define metrics which are relevant to
the task at hand in order to fully describe the relationship between
the specific Hamiltonian and architecture and its influence on the
performance. We leave this as a direction for future work.

While we mainly explored the initialization and dynamics for neural
networks, some of the idea discusses in the paper might be useful
for other ansätze, as we breifly demonstarted in \hyperref[sec:VIII]{VIII}.
As we showed, the sign bias might not manifest itself for some ansätze
such as MPSs and further investigation using similar techniques might
shed a light about other structural differences between different
ansätze. A future study could focus on how the metrics we proposed
might vary between different ansätze and look into their dynamics
in the case of a dynamical NTK. 

To summarize, the techniques developed and used in this paper provide
a way to study both theoretical and practical questions regarding
NQS methods for many-body Hamiltonians. It remains to be seen if and
how similar techniques can be applied to other variational methods,
and how other concepts we disregarded throughout the paper, such as
representability and stochastic optimization strategies, influence
the general conclusions drawn here when a particular setting is considered. 

We would like to thank Y. Bar Sinai, G. Cohen, S. Gazit, and Z. Ringel
for useful discussions. Support by by the Israel Science Foundation
(ISF) and the Directorate for Defense Research and Development (DDR\&D)
Grant No. 3427/21, the ISF grant No. 1113/23, and the US-Israel Binational
Science Foundation (BSF) Grant No. 2020072 is gratefully acknowledged.

\section*{Appendix \protect\label{sec:App}}

\subsection*{App. A Applicability of the NTK limit \protect\label{subsec:App.A}}

It is stated in \citep{10.5555/3454287.3454551} that the NTK limit,
or the lazy training regime as it is often referred to, is guaranteed
to hold whenever 
\[
\frac{\mathcal{L}\left(\theta_{0}\right)}{\Vert\nabla\mathcal{L}\left(\theta_{0}\right)\Vert}\frac{\Vert D^{2}f\left(\theta_{0}\right)\Vert}{\Vert Df\left(\theta_{0}\right)\Vert^{2}}\ll1,
\]
with $\mathcal{L}\left(\theta_{0}\right)$ being the loss function
at initialization, $\nabla$ denoting the natural gradient, i.e, the
gradient with respect to the outputs of the network, $D$ denoting
the differential of the outputs with respect to the parameters, and
$\Vert\cdot\Vert$ denoting the vector or operator norm depending
on the context. It is also assumed without the loss of generality
that the loss is always positive. For the energy loss function the
criterion above amounts to 
\[
\frac{\sqrt{\braket{\psi_{0}\lvert\psi_{0}}}}{\sqrt{\braket{H^{2}}_{0}/\braket{H}_{0}^{2}-1}}\frac{\Vert D^{2}\psi\left(\theta_{0}\right)\Vert}{\Vert D\psi\left(\theta_{0}\right)\Vert^{2}}\ll1,
\]
with $\braket{\cdot}_{0}\equiv\frac{\braket{\psi_{0}\lvert\cdot\lvert\psi_{0}}}{\braket{\psi_{0}\lvert\psi_{0}}}$
being the average over the initial state $\ket{\psi_{0}}$. The term
$\braket{H^{2}}_{0}/\braket{H}_{0}^{2}-1$ approaches some limiting
value in the infinite width limit, which on average depends on the
relationship between the CK and the specific Hamiltonian. This value
is zero if and only if the initial state happens to be an eigenstate
of the Hamiltonian, which is a zero measure event. The term $\braket{\psi_{0}\lvert\psi_{0}}$
approaches the value $\text{Tr}\left(K\right)$ with $K$ being the
CK when the width approaches infinity. As long as one uses the NTK
initialization scheme, as mentioned in \citep{10.5555/3454287.3454551,https://doi.org/10.48550/arxiv.1806.07572},
this term approaches a non zero value. These two terms, hence their
ratio, do not scale with the number of hidden neurons. On the other
hand, the term $\frac{\Vert D^{2}\psi\left(\theta_{0}\right)\Vert}{\Vert D\psi\left(\theta_{0}\right)\Vert^{2}}$
does scales with the number of hidden neurons $N$ as $\mathcal{O}\left(N^{-1/2}\right)$,
as mentioned in \citep{10.5555/3454287.3454551,https://doi.org/10.48550/arxiv.1806.07572}.
Thus, the infinite width limit becomes valid when
\[
N^{-1/2}\ll1.
\]
It is also worth mentioning that contrary to the case of mean-squared
error loss, where one can approach the lazy training regime by rescaling
the output of the neural network by a factor $\alpha\gg1$ \citep{10.5555/3454287.3454551},
the normalization of the energy loss is such that one cannot approach
the lazy training regime by rescaling; indeed, it is easy to see that
the above criterion remains invariant under the transformation $\psi\rightarrow\alpha\psi$.

\subsection*{App. B Proofs for Sec. \hyperref[sec:IV]{IV} \protect\label{subsec:App.B}}
\begin{proof}
Proposition 1: The proof follows immediately from the implicit solution
to Eq. (\ref{eq:8})
\[
\ket{\psi}=\exp\left[\Theta\left(f-Hh\right)\right]\ket{\psi_{0}},
\]
with $f\equiv\int_{0}^{t}\frac{\braket{E}}{\braket{\psi|\psi}}dt'$,
$h\equiv\int_{0}^{t}\frac{1}{\braket{\psi|\psi}}dt'$ and $\ket{\psi_{0}}$
being the initial state. Since $\left[H,\Theta\right]=0$ we have
that 
\[
\frac{\tilde{\psi}_{j}}{\tilde{\psi}_{i}}=\exp\left[\Theta_{j}\left(f-E_{g}h\right)-\Theta_{i}\left(f-E_{i}h\right)\right]\frac{\tilde{\psi}_{j}\left(t=0\right)}{\tilde{\psi}_{i}\left(t=0\right)}.
\]
Taking the square and differentiating with respect to time we get
the result.
\end{proof}
\begin{lem}
Given a $d\times d$ positive definite matrix $A$ with eigenvalues
$a_{i}$, $a_{0}=0$ and constrained Frobenius norm $0<\lvert\lvert A\lvert\lvert_{F}<C$,
the maximal gap $\alpha_{*}=\max_{\alpha_{i}}\min_{i>0}\{\alpha_{i}\}$
occurs when $\alpha_{i}=\frac{C}{\sqrt{d-1}}$. 
\end{lem}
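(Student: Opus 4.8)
The plan is to treat this as a constrained optimization but to avoid Lagrange multipliers entirely, reducing it to an elementary averaging inequality. The key observation is that since the distinguished eigenvalue $a_0=0$ (the ground-state eigenvalue of $\tilde H$) contributes nothing to the Frobenius norm, the constraint effectively involves only the $d-1$ strictly positive eigenvalues, so that $\Vert A\Vert_F^2=\sum_{i=1}^{d-1}a_i^2$. The quantity to be maximized, $\min_{i>0}\{a_i\}$, is precisely the spectral gap of $A$, and the content of the lemma is simply that maximizing the smallest of a collection of numbers subject to a fixed $\ell^2$ budget forces them all to be equal.

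First I would set $m\equiv\min_{i>0}\{a_i\}$, the gap we wish to maximize. By definition every nonzero eigenvalue satisfies $a_i\ge m$, hence $a_i^2\ge m^2$. Summing over the $d-1$ positive eigenvalues gives the lower bound $\Vert A\Vert_F^2=\sum_{i>0}a_i^2\ge(d-1)m^2$. Combining this with the norm constraint $\Vert A\Vert_F<C$ yields $(d-1)m^2<C^2$, that is $m<C/\sqrt{d-1}$, which establishes the upper bound on the attainable gap.

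Next I would check tightness, i.e. that this bound is the actual optimum. The inequality $\sum_{i>0}a_i^2\ge(d-1)m^2$ becomes an equality precisely when every $a_i$ equals $m$; any spread among the nonzero eigenvalues strictly inflates $\Vert A\Vert_F$ without raising their minimum, so it merely wastes budget. Setting all $d-1$ nonzero eigenvalues equal to $C/\sqrt{d-1}$ gives $\Vert A\Vert_F=C$ together with $m=C/\sqrt{d-1}$, saturating the bound. Hence the maximal gap is achieved at the equalized profile $a_i=C/\sqrt{d-1}$ for all $i>0$, as claimed.

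The argument carries no genuine obstacle, being essentially a one-line pigeonhole bound; the only point requiring a little care is the strict-versus-nonstrict nature of the constraint $\Vert A\Vert_F<C$. With the strict inequality the value $C/\sqrt{d-1}$ is a supremum approached but not attained, whereas closing the constraint to $\Vert A\Vert_F\le C$ makes the equalized configuration attain it exactly; in either reading the optimizing eigenvalue profile is the uniform one. This is also exactly the intuition anticipated in the main text, namely that the best gap is obtained when the nonzero eigenvalues of $\tilde H$ are all equal, which in turn translates into $\braket{\Theta}_{\ket{E_i}}\propto 1/(E_i-E_g)$ for $i>0$.
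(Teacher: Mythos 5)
Your proof is correct, and it takes a genuinely different and more elementary route than the paper's. The paper proceeds variationally: it introduces the auxiliary functional $\sum_i e^{-\alpha_i t}+\lambda\sum_i\alpha_i^2$ (the first term being the physical convergence profile the gap is meant to control), differentiates to obtain the stationarity condition $\alpha_i=\frac{t}{\lambda}e^{-t\alpha_i}$, bounds the multiplier via the Frobenius constraint, and then argues by contradiction in the $t\rightarrow\infty$ limit that any strictly smallest $\alpha_k$ would force the remaining eigenvalues to be nonpositive, so the nonzero eigenvalues must all coincide. You bypass all of this with the one-line pigeonhole bound $(d-1)m^2\leq\sum_{i>0}a_i^2<C^2$ with $m\equiv\min_{i>0}\{a_i\}$, and tightness exactly at the equalized profile. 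What your route buys is brevity and full rigor: no Lagrangian, no stationarity analysis, no limiting argument, and you correctly flag the strict-versus-nonstrict subtlety (under the paper's stated constraint $\lVert A\rVert_F<C$ the value $C/\sqrt{d-1}$ is only a supremum, attained once the constraint is closed to $\lVert A\rVert_F\leq C$), a point the paper glosses over. What the paper's heavier route buys is that it keeps the optimization visibly tied to the quantity it is meant to serve in the surrounding argument — the decay profile $\sum_i e^{-\alpha_i t}$ of the energy near the ground state — though as pure mathematics your argument is the cleaner proof of the stated lemma, and it feeds into Lemma 2 (the $\braket{\Theta}_{\ket{E_i}}\propto 1/(E_i-E_g)$ conclusion) just as well.
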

\begin{proof}
Defining the function $\sum e^{-\alpha_{i}t}+\lambda\sum_{i}\alpha_{i}^{2}$,
one can notice that as $t\rightarrow\infty$ the first term is dominated
by $1+e^{-\min_{i>0}\{\alpha_{i}\}t}$, so we wish to minimize this
function. Differentiating with respect to $\alpha_{i}$ for $i>0$
we get $\alpha_{i}=\frac{t}{\lambda}e^{-t\alpha_{i}}$. Using the
constraint we obtain $\lambda^{2}\geq\frac{t^{2}}{C^{2}}\sum_{i>0}e^{-2t\alpha_{i}}$.
Thus, $\alpha_{i}\leq C\frac{e^{-t\alpha_{i}}}{\sqrt{\sum_{i>0}e^{-2t\alpha_{i}}}}$.
Assume that $\alpha_{k}\equiv\min\{\alpha_{i}\}$ satisfies $\alpha_{k}<\alpha_{i}$
for $i\neq k$; then $\alpha_{i}\leq0$ for all $i\neq k$, which
is a contradiction to the assumption that $\alpha_{k}$ is minimal
unless $\alpha_{i}=0\ \forall i$. The only other possibility is that
$\alpha_{i}$ is constant for all $i>0$, which gives $\alpha_{*}\leq\frac{C}{\sqrt{d-1}}$.
Maximizing $\alpha_{*}$ yields the result. 
\end{proof}
\begin{lem}
Given an Hamiltonian $H$, the maximal gap of the matrix $\tilde{H}=\sqrt{\Theta}\left(H-E_{g}\right)\sqrt{\Theta}$
under the constraint that $\lvert\lvert\Theta\lvert\lvert_{F}<C$
for some finite $C$, occurs if and only if $\braket{\Theta}_{\ket{E_{i}}}\propto\frac{1}{E_{i}-E_{g}}$. 
\end{lem}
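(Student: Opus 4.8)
The plan is to reduce the problem to an eigenvalue optimization in the energy eigenbasis of $H$, after first establishing that the optimal NTK commutes with $H$. Writing $M \equiv H - E_g \succeq 0$, the object of interest is $\tilde{H} = \sqrt{\Theta}M\sqrt{\Theta}$, which is positive semidefinite with a one-dimensional kernel spanned by $\sqrt{\Theta^{-1}}\ket{g}$ (eigenvalue $0$); the ``gap'' to be maximized is $\alpha_1$, its smallest positive eigenvalue. First I would note that $\tilde{H} = \Theta^{1/2}M\Theta^{1/2}$ and $\tilde{H}' \equiv M^{1/2}\Theta M^{1/2}$ are of the form $GG^{T}$ and $G^{T}G$ with $G = \Theta^{1/2}M^{1/2}$, hence share the same spectrum; this lets me work with $\tilde{H}'$, which is more convenient because $M^{1/2}$ is a function of $H$ and therefore diagonal in the energy basis.

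The crux is to show that the maximizing $\Theta$ commutes with $H$, for which I would use a pinching argument. Let $\mathcal{P}$ denote the map that projects onto the block-diagonal part in the eigenbasis of $H$; then $\mathcal{P}(\tilde{H}') = M^{1/2}\mathcal{P}(\Theta)M^{1/2}$, since $M^{1/2}$ commutes with $\mathcal{P}$. By Schur's majorization theorem $\lambda(\mathcal{P}(\tilde{H}')) \prec \lambda(\tilde{H}')$, and because both matrices annihilate $\ket{g}$ (as $M^{1/2}\ket{g}=0$) and are otherwise of full rank for nondegenerate $H$ and $\Theta \succ 0$, the ascending-order majorization inequality $\nu_1 + \nu_2 \geq \mu_1 + \mu_2$ with $\nu_1 = \mu_1 = 0$ yields $\nu_2 \geq \mu_2$, i.e. pinching does not decrease the smallest positive eigenvalue. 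Since pinching also does not increase the Frobenius norm, $\Vert\mathcal{P}(\Theta)\Vert_F \leq \Vert\Theta\Vert_F$, the pinched NTK rescaled to saturate the constraint is feasible and attains a gap at least as large, with strict improvement unless $\Theta$ was already block-diagonal. Hence it suffices to optimize over $\Theta$ diagonal in the energy basis, for which $\braket{E_i|\Theta|E_i}$ are precisely the eigenvalues of $\Theta$.

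With $\Theta = \sum_i \theta_i \ket{E_i}\bra{E_i}$ and $\theta_i = \braket{E_i|\Theta|E_i}$, the eigenvalues of $\tilde{H}$ become $\alpha_i = \theta_i(E_i - E_g)$ and the constraint reads $\sum_i \theta_i^2 \leq C^2$. The final step is the max-min optimization $\max\,\min_{i>0}\theta_i(E_i - E_g)$: by an exchange argument (or, equivalently, by the previous lemma), any index with $\theta_i(E_i - E_g)$ strictly above the running minimum can have its $\theta_i$ lowered to free Frobenius budget without harming the objective, so at the optimum $\theta_i(E_i - E_g) = \alpha_*$ is constant for $i>0$, that is $\braket{E_i|\Theta|E_i} \propto 1/(E_i - E_g)$; the constant is then fixed by the norm as $\alpha_* = C/\sqrt{\sum_{i>0}(E_i - E_g)^{-2}}$, with $\theta_0 \to 0$ (which only wastes budget). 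This gives the ``only if'' direction, and the ``if'' direction follows by substituting the proportional choice back in and verifying it realizes this maximal gap.

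I expect the main obstacle to be the commuting/pinching step, specifically the careful treatment of the zero eigenvalue: I must ensure that both $\tilde{H}'$ and its pinching carry their single zero eigenvalue exactly on $\ket{g}$, so that the majorization inequality can be applied to the second-smallest eigenvalue rather than the smallest. Degeneracies of $H$ (where $E_g$ or excited energies repeat) and the boundary behavior $\theta_0 \to 0^{+}$ forced by $\Theta \succ 0$ are the edge cases that would require separate handling.
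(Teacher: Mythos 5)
Your proof is correct, and it takes a genuinely different route from the paper's. The paper invokes its preceding lemma (Lemma 1) to assert that at the optimum all nonzero eigenvalues of $\tilde{H}$ are equal, i.e. $\tilde{H}\propto I-\frac{\sqrt{\Theta^{-1}}\ket{g}\bra{g}\sqrt{\Theta^{-1}}}{\bra{g}\Theta^{-1}\ket{g}}$, and then finishes by pure algebra: multiplying by $\sqrt{\Theta}$ on both sides gives $\Theta\left(H-E_{g}\right)\Theta\propto\Theta-\frac{\ket{g}\bra{g}}{\bra{g}\Theta^{-1}\ket{g}}$; multiplying by $H-E_{g}$ kills the rank-one term and, after cancelling one factor of $\Theta$, yields $\left(H-E_{g}\right)\Theta\left(H-E_{g}\right)\propto H-E_{g}$, whose diagonal energy-basis matrix elements give $\left(E_{i}-E_{g}\right)^{2}\braket{E_{i}|\Theta|E_{i}}\propto E_{i}-E_{g}$. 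In that argument the effective commutation of $\Theta$ with $H$ on the excited subspace falls out of the off-diagonal matrix elements as a byproduct, whereas you establish it up front via the $GG^{T}$/$G^{T}G$ spectral identity, pinching, and Schur majorization, and then solve a scalar max-min problem by an exchange argument. Your route buys rigor exactly where the paper is loose: Lemma 1 constrains the Frobenius norm of the very matrix whose gap is maximized, while this lemma constrains $\Vert\Theta\Vert_{F}$, so in diagonal variables the feasible set is the weighted ball $\sum_{i}\alpha_{i}^{2}/\left(E_{i}-E_{g}\right)^{2}\leq C^{2}$ rather than $\sum_{i}\alpha_{i}^{2}\leq C^{2}$; the paper silently carries Lemma 1 across this mismatch, while your exchange argument handles the weighted constraint directly and thereby actually justifies the ``all nonzero eigenvalues equal'' step. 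The paper's route, in exchange, is shorter and produces a clean operator identity, avoiding the rank and degeneracy bookkeeping your pinching step requires (note that only ground-state nondegeneracy, which the paper assumes throughout, is needed there, not full nondegeneracy of $H$). Both proofs leave implicit the two small points you flag: the optimum is only a supremum (approached as $\theta_{0}\rightarrow0^{+}$, consistent with the strict inequality $\Vert\Theta\Vert_{F}<C$), and the ``if'' direction is the straightforward back-substitution.
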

\begin{proof}
Since the minimal eigenvalue of $\tilde{H}$ is $0$ with eigenvector
$\frac{\sqrt{\Theta^{-1}}\ket{g}}{\bra{g}\Theta^{-1}\ket{g}}$ , using
the lemma above we get that the optimal gap occurs when $\tilde{H}\propto I-\frac{\sqrt{\Theta^{-1}}\ket{g}\bra{g}\sqrt{\Theta^{-1}}}{\bra{g}\Theta^{-1}\ket{g}}$.
Multiplying both sides by $\sqrt{\Theta}$ we obtain $\Theta\left(H-E_{g}\right)\Theta\propto\Theta-\frac{\ket{g}\bra{g}}{\bra{g}\Theta^{-1}\ket{g}}$.
Multiplying then by $H-E_{g}$ we get $\left(H-E_{g}\right)\Theta\left(H-E_{g}\right)\Theta\propto\left(H-E_{g}\right)\Theta$,
which is equivalent to $E\Theta E\propto E$, thus $\left(E_{i}-E_{g}\right)^{2}\bra{E_{i}}\Theta\ket{E_{i}}\propto E_{i}-E_{g}$.
This concludes the proof.
\end{proof}

\subsubsection*{CK Averages Approximations }
\begin{lem}
Consider $\braket{\sigma\lvert\psi}\sim\mathcal{N}\left(0,K\right)$
and $K$ being the CK, with $\braket{\sigma\vert\psi}\in\mathbb{R}^{2^{N}}$;
then
\[
\mathbb{E}\left(\frac{\braket{g\lvert\psi}^{2}}{\braket{\psi\lvert\psi}}\right)=\frac{\mathbb{E}\left(\braket{g\lvert\psi}^{2}\right)}{\mathbb{E}\left(\braket{\psi\lvert\psi}\right)}+\mathcal{O}\left(N\cdot2^{-N/2}\right).
\]
\end{lem}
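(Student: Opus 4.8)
The plan is to exploit the fact that the averaged quantity is bounded, which collapses the whole estimate into a concentration statement for the squared norm. Write $X\equiv\braket{g|\psi}^{2}$ for the numerator and $Y\equiv\braket{\psi|\psi}$ for the denominator. Since $\ket{g}$ is normalized, Cauchy--Schwarz gives $0\le X/Y\le1$ pointwise, so the ratio never blows up no matter how small $Y$ is. This boundedness is the key structural feature I would lean on, and it lets me avoid a delicate analysis of the event where the denominator is atypically small.

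First I would pin down the two moments producing the leading term. Because $\braket{\sigma|\psi}\sim\mathcal{N}(0,K)$, the overlap $\braket{g|\psi}$ is a centered Gaussian with variance $\braket{g|K|g}$, so $\mathbb{E}(X)=\braket{g|K|g}$, while $\mathbb{E}(Y)=\sum_{\sigma}K_{\sigma\sigma}=\mathrm{Tr}(K)$. Writing $\bar Y\equiv\mathbb{E}(Y)$, the ratio $\mathbb{E}(X)/\bar Y=\braket{g|K|g}/\mathrm{Tr}(K)$ is precisely the claimed leading term, i.e.\ the average initial ground-state overlap of Sec. IV.B. I would then bound the deviation directly rather than through a full delta-method expansion, using the pointwise identity
\[
\left|\frac{X}{Y}-\frac{X}{\bar Y}\right|=\frac{X}{Y}\,\frac{|Y-\bar Y|}{\bar Y}\le\frac{|Y-\bar Y|}{\bar Y},
\]
where the inequality is exactly where $X/Y\le1$ enters. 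Taking expectations and applying Jensen's inequality yields
\[
\left|\mathbb{E}\!\left(\frac{X}{Y}\right)-\frac{\mathbb{E}(X)}{\mathbb{E}(Y)}\right|\le\frac{\mathbb{E}|Y-\bar Y|}{\bar Y}\le\frac{\sqrt{\mathrm{Var}(Y)}}{\mathbb{E}(Y)}.
\]
For the Gaussian quadratic form $Y=\braket{\psi|\psi}$, Wick's theorem gives $\mathrm{Var}(Y)=2\,\mathrm{Tr}(K^{2})$, so the entire error is controlled by the relative fluctuation $\sqrt{2\,\mathrm{Tr}(K^{2})}/\mathrm{Tr}(K)$ of the squared norm.

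The final and most delicate step is to show that this relative fluctuation is $\mathcal{O}(N\,2^{-N/2})$. Here I would invoke the translation-invariant structure from Sec. III, $K_{\sigma\sigma'}=F(\sigma\cdot\sigma')$. The diagonal is constant, so $\mathrm{Tr}(K)=2^{N}F(N)$, and counting the $\binom{N}{m}$ configurations at each Hamming distance from a fixed $\sigma$ gives
\[
\mathrm{Tr}(K)=2^{N}F(N),\qquad\mathrm{Tr}(K^{2})=2^{N}\sum_{m=0}^{N}\binom{N}{m}F(N-2m)^{2}.
\]
Thus the squared error ratio equals $2\sum_{m}\binom{N}{m}F(N-2m)^{2}\big/\big(2^{N}F(N)^{2}\big)$, and the remaining work is purely the large-$N$ asymptotics of this binomial sum: the decay of $F$ away from full alignment $\sigma\cdot\sigma'=N$ must beat the growth of the central binomial coefficients so that the sum is suppressed by a factor $\sim N^{2}2^{-N}$ relative to $2^{N}F(N)^{2}$, leaving $\sqrt{\mathrm{Var}(Y)}/\mathbb{E}(Y)=\mathcal{O}(N\,2^{-N/2})$.

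I expect this asymptotic estimate to be the main obstacle, since it is the only step requiring genuine analytic control of the kernel function $F$ rather than soft probabilistic inequalities; it is also the step that implicitly assumes a regime in which $K$ has sufficiently extensive effective rank for the norm to concentrate (a strongly rank-one-biased $K$ would make the fluctuation $\mathcal{O}(1)$ and the approximation fail). Everything preceding it—the reduction to a norm-concentration statement via $X/Y\le1$, the moment identities, and $\mathrm{Var}(Y)=2\,\mathrm{Tr}(K^{2})$—is robust and essentially assumption-free, so I would present those as the conceptual core and relegate the binomial-sum bound to a careful but routine estimate.
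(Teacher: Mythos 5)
Your reduction and moment computations are correct as far as they go --- the pointwise bound $\left|X/Y-X/\bar{Y}\right|\le\left|Y-\bar{Y}\right|/\bar{Y}$, the identities $\mathbb{E}(X)=\braket{g|K|g}$, $\mathbb{E}(Y)=\mathrm{Tr}(K)$, and $\mathrm{Var}(Y)=2\,\mathrm{Tr}(K^{2})$ are all fine --- but the step you defer as ``routine'' is where the argument breaks, and it breaks precisely in the regime the lemma is about. Your final error bound is $\sqrt{2\,\mathrm{Tr}(K^{2})}/\mathrm{Tr}(K)$, which is small only when $K$ has large effective rank. The conjugate kernel does not: by the properties in Sec. III, its spectrum is dominated by the top eigenvalue (eigenvector $\ket{+}$ in the common case), with the $k$th distinct eigenvalue suppressed as $\sim2^{-Nk}$. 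Consequently $\mathrm{Tr}(K^{2})/\mathrm{Tr}(K)^{2}\rightarrow1$, so your bound is $\Theta(1)$ and never becomes $\mathcal{O}(N\cdot2^{-N/2})$. Equivalently, in terms of $F$: for such kernels $K\approx K_{0}\ket{+}\bra{+}$, so $F$ is nearly \emph{constant} in $\sigma\cdot\sigma'$ (every entry close to $\mathrm{Tr}(K)/2^{N}$); the decay of $F$ away from full alignment that you need in order to beat the binomial coefficients simply is not there. You flag this failure mode yourself (``a strongly rank-one-biased $K$ would make the fluctuation $\mathcal{O}(1)$''), but you treat it as an excludable edge case --- it is in fact the central and generic case of the paper; the entire sign-bias discussion rests on exactly this spectral bias.

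The reason the lemma is nevertheless true is that the ratio $X/Y$ concentrates even though $Y$ does not: numerator and denominator are strongly correlated, both being dominated by the same random factor $\psi_{+}^{2}$, so $X/Y\approx\braket{g|+}^{2}$ nearly deterministically. Your inequality $\left|X/Y-X/\bar{Y}\right|\le\left|Y-\bar{Y}\right|/\bar{Y}$ discards exactly this correlation. The paper's proof keeps it: it expands $\mathbb{E}\left(\braket{g|\psi}^{2}/\braket{\psi|\psi}\right)=\sum_{s}\mathbb{E}\left(\psi_{s}^{2}/\braket{\psi|\psi}\right)\braket{g|s}^{2}$ in the eigenbasis of $K$, treats the dominant direction exactly by studying $\mathbb{E}\left(\psi_{+}^{2}/(\psi_{+}^{2}+\xi^{2})\right)$ with $\xi^{2}=\sum_{s\neq+}\psi_{s}^{2}$, and bounds the deviation from unity (and the $s\neq+$ contributions) by the subdominant spectral weight $\sum_{s\neq+}\braket{s|K|s}\le\left(1+2^{-N/2}\right)^{N}-1\approx N\cdot2^{-N/2}$, using the eigenvalue decay together with the degeneracies $\binom{N}{\lvert s\lvert}$. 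To repair your argument you would have to condition on, or factor out, the $\ket{+}$ component of $\ket{\psi}$ rather than require the full norm $\braket{\psi|\psi}$ to concentrate.
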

\begin{proof}
Working with the basis $\ket{s}$, the basis of eigenstates of $K$
\[
\mathbb{E}\left(\frac{\braket{g\lvert\psi}^{2}}{\braket{\psi\lvert\psi}}\right)=\sum_{s}\mathbb{E}\left(\frac{\psi_{s}^{2}}{\braket{\psi\lvert\psi}}\right)\braket{g\lvert s}^{2},
\]
with $\psi_{s}^{2}\equiv\braket{s\lvert\psi}^{2}$. We assume without
loss of generality that the eigenstate with maximal eigenvalue is
the state $\ket{+}$, and that the eigenvalues are normalized such
that $\braket{+\lvert K\lvert+}=1$. Let us denote $\psi_{+}^{2}=\braket{+\vert\psi}^{2}$.
Then, for any $s\neq+$ we have $\text{Cov}\left(\psi_{s}^{2},\braket{\psi\vert\psi}\right)\propto\braket{s\vert K\lvert s}^{4}$,
which, according to the discussion in Sec. \hyperref[sec:III]{III},
decays at least as fast as $\sim\frac{1}{2^{4N}}$. For the case that
$s=+$ we have $\mathbb{E}\left(\frac{\psi_{+}^{2}}{\braket{\psi\lvert\psi}}\right)=\mathbb{E}\left(\frac{\psi_{+}^{2}}{\psi_{+}^{2}+\xi^{2}}\right)$
with $\xi^{2}=\sum_{s\neq+}\psi_{s}^{2}$. Evaluating the Gaussian
integral in the complex plane we get $\mathbb{E}\left(\frac{\psi_{+}^{2}}{\psi_{+}^{2}+\xi^{2}}\right)=1-A\mathbb{E}\left(\xi e^{-\frac{\xi^{2}}{2}}\text{erf}\left(\frac{\xi}{\sqrt{2}}\right)\right)$,
with $A$ being a positive constant. Using Cauchy--Schwarz we get
$\left|\mathbb{E}\left(\frac{\psi_{+}^{2}}{\psi_{+}^{2}+\xi^{2}}\right)-1\right|^{2}\leq B\mathbb{E}\left(\xi^{2}\right)=B\sum_{s\neq+}\braket{s\lvert K\lvert s}$,
with $B$ being a positive constant. Since the $\kappa$th unique
eigenvalue of $K$ decays as $\sim2^{-N\kappa}$, and since the degeneracy
of each subspace is $\binom{N}{\lvert s\lvert}$, using the weak bias
property Eq. (\ref{eq:14}) we have $\sum_{s\neq+}\braket{s\vert K\vert s}\leq2\sum_{k=1}^{N/2}\binom{N}{2k}2^{-Nk}=\left(1+2^{-N/2}\right)^{N}-1$,
which for large $N$ approaches $N\cdot2^{-N/2}$. 
\end{proof}

\subsubsection*{Convergence}
\begin{proof}
Theorem 1: Looking at the energy time derivative we get 
\[
\frac{d\braket{E}}{dt}=-\frac{\bra{\psi}\left(H-\braket{E}\right)\Theta\left(H-\braket{E}\right)\ket{\psi}}{\braket{\psi|\psi}^{2}}.
\]
Following \citep{https://doi.org/10.48550/arxiv.2104.03093} we have
$\Theta>0$ and so $\frac{d\braket{E}}{dt}\leq0$, with equality occurring
only for $\ket{\psi}\propto\ket{E_{i}}$. Since $\braket{E}$ is bounded
(due to $H$ being a bounded operator), convergence is guaranteed.
To see the stability of the fixed point we consider the solution of
the equation near an eigenstate of $H$, $\ket{\psi}=\ket{E_{i}}+\epsilon\ket{\phi}$
with $\epsilon\ll1$, $\braket{g|\phi_{0}}=0$, and $\braket{\phi_{0}|\phi_{0}}=1$,
where $\ket{\phi_{0}}$ is the initial perturbation. Substituting
into Eq. (\ref{eq:8}) we get $\frac{d\ket{\phi}}{dt}=-\Theta\left(H-E_{i}\right)\ket{\phi}+\mathcal{O}\left(\epsilon\right)$.
Solving the equation to first order in $\epsilon$ we find that
\[
\ket{\psi}=\ket{E_{i}}+\epsilon\sqrt{\Theta}e^{-\sqrt{\Theta}\left[H-E_{i}\right]\sqrt{\Theta}t}\sqrt{\Theta^{-1}}\ket{\phi_{0}}+\mathcal{O}\left(\epsilon^{2}\right),
\]
where we assumed that the norm of the initial state near the fixed
point is $1$ to first order in $\epsilon$ without loss of generality.
From the form of the solution the condition for the stability of the
fixed point is that the operator $\sqrt{\Theta}\left(H-E_{i}\right)\sqrt{\Theta}$
will be positive semidefinite. One can see that the expectation of
the operator over the state $\sqrt{\Theta^{-1}}\ket{g}$ is $\frac{E_{g}-E_{i}}{\bra{g}\Theta^{-1}\ket{g}}$,
which is negative except for the case that the fixed point is proportional
to the ground state. 

Theorem 2: To simplify the algebra, we assume here, without loss of
generality, that $E_{g}=0$. \\
Case (1): For $\braket{E}\geq\frac{\Delta}{C}$ we have 
\[
\frac{d\braket{E}}{dt}=-\frac{\braket{\psi|\left(\braket{E}-H\right)\Theta\left(\braket{E}-H\right)|\psi}}{\braket{\psi|\psi}^{2}},
\]
multiplying and dividing by $\braket{g|\Theta^{-1}|g}$ and using
the Cauchy-Schwarz inequality we get 
\[
\frac{d\braket{E}}{dt}\leq-\frac{\braket{\psi|g}^{2}\braket{E}^{2}}{\braket{\psi|\psi}^{2}\braket{g|\Theta^{-1}|g}},
\]
defining $\ket{\tilde{\psi}}$ as the normalized state, $\frac{d\braket{E}}{dt}\leq-\frac{\braket{\tilde{\psi}|g}^{2}\braket{E}\Delta}{\braket{\psi|\psi}\braket{g|\Theta^{-1}|g}C}$.
\\
Using Eq. (\ref{eq:8}) we have $\frac{d\braket{\psi|\Theta^{-1}|\psi}}{dt}=0$,
hence $\braket{\psi|\psi}=\frac{\braket{\psi_{0}|\Theta^{-1}|\psi_{0}}}{\braket{\tilde{\psi}|\Theta^{-1}|\tilde{\psi}}}\leq\Vert\Theta\Vert\braket{\psi_{0}|\Theta^{-1}|\psi_{0}}$
, with $\ket{\psi_{0}}$ the initial state. Assuming without loss
of generality that $\braket{\psi_{0}|\Theta^{-1}|\psi_{0}}=1$ we
obtain 
\[
\frac{d\braket{E}}{dt}\leq-\frac{\min\left(\braket{\tilde{\psi}|g}^{2}\right)\Delta\braket{E}}{C\Vert\Theta\Vert\braket{g|\Theta^{-1}|g}},
\]
which concludes the proof of the first bound. \\
For the case $\braket{E}\leq\frac{\Delta}{C}$ we have 
\[
\frac{d\braket{E}}{dt}\leq-\frac{\braket{\psi|g}^{2}\braket{E}^{2}}{2\braket{\psi|\psi}^{2}\braket{g|\Theta^{-1}|g}}-\frac{1}{2d-2}\sum_{i>0}^{d}\frac{\braket{\psi|\left(\braket{E}-H\right)\Theta\left(\braket{E}-H\right)|\psi}}{\braket{\psi|\psi}^{2}}\frac{\braket{E_{i}|\Theta^{-1}|E_{i}}}{\braket{E_{i}|\Theta^{-1}|E_{i}}},
\]
where $d\equiv2^{N}$ and $\ket{E_{i}}$ are the eigenstate of $H$.
Using Cauchy-Schwarz we have 
\[
\frac{d\braket{E}}{dt}\leq-\frac{\braket{\psi|g}^{2}\braket{E}^{2}}{2\braket{\psi|\psi}^{2}\braket{g|\Theta^{-1}|g}}-\frac{1}{2d-2}\sum_{i>0}\frac{\braket{\psi|E_{i}}^{2}\left(\braket{E}-E_{i}\right)^{2}}{\braket{\psi|\psi}^{2}\braket{E_{i}|\Theta^{-1}|E_{i}}}.
\]
Focusing on the sum on the RHS, we have $\sum_{i>0}\frac{\braket{\psi|E_{i}}^{2}\left(\braket{E}-E_{i}\right)^{2}}{\braket{\psi|\psi}^{2}\braket{E_{i}|\Theta^{-1}|E_{i}}}\geq\frac{1}{\Vert\Theta\Vert}\min_{i\geq1}\left(\frac{\lvert E_{i}-\Delta/C\lvert}{\braket{i|\Theta^{-1}|i}}\right)\sum_{i>0}\tilde{\psi}_{i}\left(E_{i}-\braket{E}\right)=\frac{1}{\Vert\Theta\Vert}\min_{i\geq1}\left(\frac{\lvert E_{i}-\Delta/C\lvert}{\braket{i|\Theta^{-1}|i}}\right)\tilde{\psi}_{g}^{2}\braket{E}$.
Thus, overall we get $\frac{d\braket{E}}{dt}\leq-\frac{a}{2}\braket{E}^{2}-b\braket{E}$
with $a,b$ defined in Theorem 2. Solving the inequality we arrive
at the final result. \\
Case (2): Using Eq. (\ref{eq:8}), 
\[
\frac{d\psi_{g}}{dt}=\frac{\braket{g|\Theta|g}}{\braket{\psi|\Theta|\psi}}\braket{E}\psi_{g}\geq\frac{\braket{g|\Theta|g}}{\Vert\Theta\Vert}\frac{\Delta}{C}\psi_{g},
\]
 due to $\ket{g}$ being an eigenstate of $\Theta$. We thus find
\[
\psi_{g}\geq\psi_{g}\left(t=0\right)\exp\left[\frac{\braket{g|\Theta|g}}{\Vert\Theta\Vert}\frac{\Delta}{C}t\right],
\]
which implies that 
\[
\tilde{\psi}_{g}^{2}\geq\frac{\Theta_{\min}}{\Vert\Theta\Vert}\tilde{\psi}_{g}\left(t=0\right)\exp\left[\frac{\braket{g|\Theta|g}}{\Vert\Theta\Vert}\frac{\Delta}{C}t\right],
\]
since $\min\left(\braket{\psi|\psi}\right)\geq\Theta_{\min}$. One
can then see that $\braket{E}=\sum_{i}\tilde{\psi}_{i}^{2}E_{i}\leq\sum_{i>0}\psi_{i}^{2}\Vert H\Vert=\left(1-\tilde{\psi}_{g}^{2}\right)\Vert H\Vert$
and employ the fact that $1-x<\frac{1}{x}$ for every $x>0$ to get
the final bound.
\[
\]
\\
Case (3): Let us start from $\frac{d\braket{E}}{dt}=-\frac{\braket{\psi|\left(\braket{E}-H\right)\Theta\left(\braket{E}-H\right)|\psi}}{\braket{\psi|\psi}^{2}}$.
Since $\left[\Theta,H\right]=0$ we have $\frac{d\braket{E}}{dt}=-\frac{1}{\braket{\psi|\psi}}\sum_{i}\psi_{i}^{2}\left(\braket{E}-E_{i}\right)^{2}\Theta_{i}\leq-\frac{\braket{\psi|g}^{2}\braket{E}^{2}\Theta_{g}}{2\braket{\psi|\psi}^{2}}-\sum_{i>0}\frac{\braket{\psi|E_{i}}^{2}\left(\braket{E}-E_{i}\right)^{2}\Theta_{i}}{\braket{\psi|\psi}^{2}}$
. Continuing the argument as in Case (1) we obtain the result. 

Proposition 2: Let us examine $\frac{d\psi_{g}}{dt}=\frac{1}{\braket{\psi|\psi}}\bra{g}\Theta\left(\braket{E}-H\right)\ket{\psi}=\frac{1}{\braket{\psi|\psi}}\bra{g}\Theta\left(P_{g}+P_{g}^{\perp}\right)\left(\braket{E}-H\right)\ket{\psi}$.
If $\frac{1}{2}\bra{g}\Theta P_{g}\left(\braket{E}-H\right)\ket{\psi}\geq\bra{g}\Theta P_{g}^{\perp}\left(\braket{E}-H\right)\ket{\psi}$
then $\frac{d\psi_{g}}{dt}\geq\frac{1}{2\braket{\psi|\psi}}\braket{g|\Theta|g}\braket{E}\psi_{g}$
and Case (2) argument follows. In order to satisfy the inequality
it is enough to satisfy $\frac{1}{4}\braket{g|\Theta|g}^{2}\tilde{\psi}_{g}^{2}\braket{E}^{2}\geq\braket{g|\Theta P_{g}^{\perp}\Theta|g}\left(\braket{E^{2}}-\braket{E}^{2}\right)$,
which is the same as $\frac{1}{4}\frac{\braket{g|\Theta|g}^{2}}{\braket{g|\Theta^{2}|g}-\braket{g|\Theta|g}^{2}}\geq\frac{1}{\tilde{\psi}_{g}^{2}}\left(\frac{\braket{E^{2}}}{\braket{E}^{2}}-1\right)$.
We can now use $\frac{1}{\tilde{\psi}_{g}^{2}}\left(\frac{\braket{E^{2}}}{\braket{E}^{2}}-1\right)\leq\frac{1}{\tilde{\psi}_{g}^{2}}\frac{C\Vert H\Vert}{\Delta}$.
Moreover, if $\frac{d\psi_{g}^{2}}{dt}\geq0$ then $\tilde{\psi_{g}^{2}}\geq\frac{\psi_{g}^{2}}{\Vert\Theta\Vert}$.
It is therefore easy to see that satisfying the inequality for $t=0$
is then enough for it to hold for all $t\geq0$. \\

Theorem 3: In order to converge, the state must be, at some point
in time, arbitrarily close to the ground state. We use the approximation
$\ket{\psi}=N\left(\ket{g}+\epsilon\ket{\phi}\right)$, where $\ket{\phi}$
is orthogonal to the ground state and its norm is less than unity
and $\epsilon\ll1$. $N$ is the norm, $N=\frac{\braket{\psi_{0}|\Theta^{-1}|\psi_{0}}}{\braket{\psi|\Theta^{-1}|\psi}}$.
Writing down Eq. (\ref{eq:8}) in the discrete time form, under those
conditions we find to first order in $\epsilon$ that 
\[
\ket{\psi^{t+1}}=\left[\mathbb{I}-\Delta t\frac{\braket{g|\Theta^{-1}|g}}{\braket{\psi_{0}|\Theta^{-1}|\psi_{0}}}\Theta\left[H-E_{g}\mathbb{I}\right]\right]\ket{\psi^{t}}+\mathcal{O}\left(\epsilon\right).
\]
After $n$ steps from some time $t$ we have 
\[
\ket{\psi^{t+n}}=\left[\mathbb{I}-\Delta t\frac{\braket{g|\Theta^{-1}|g}}{\braket{\psi_{0}|\Theta^{-1}|\psi_{0}}}\Theta\left[H-E_{g}\mathbb{I}\right]\right]^{n}\ket{\psi^{t}}+\mathcal{O}\left(\epsilon\right),
\]
and so one has that all the eigenvalues of the operator $\mathbb{I}-\Delta t\frac{\braket{g|\Theta^{-1}|g}}{\braket{\psi_{0}|\Theta^{-1}|\psi_{0}}}\Theta\left[H-E_{g}\mathbb{I}\right]$
must be smaller than unity in absolute value. A necessary condition
for that to hold for all the eigenvalues can be seen to be $\Delta t\leq\frac{\braket{\psi_{0}|\Theta^{-1}|\psi_{0}}}{\braket{g|\Theta^{-1}|g}}\frac{2}{\left\Vert \sqrt{\Theta}\left(H-E_{g}\right)\sqrt{\Theta}\right\Vert }.$
\end{proof}

\subsection*{App. C Proofs for Sec. \hyperref[sec:V]{V} \protect\label{subsec:App.C}}
\begin{proof}
Theorem 4: Assume that $\ket{\psi}$ is a solution of Eq. (\ref{eq:8})
with a Hamiltonian $H$ and initial condition $\ket{\psi_{0}}$, so
$\frac{d\ket{\psi}}{dt}=\frac{1}{\braket{\psi|\psi}}\Theta\left[\braket{E}-H\right]\ket{\psi}$.
Multiplying both sides by $U$ and defining $\ket{\phi}\equiv U\ket{\psi}$,
and $\tilde{H}\equiv UHU^{\dagger}$, we have $\frac{d\ket{\phi}}{dt}=\frac{1}{\braket{\phi|\phi}}U\Theta U^{\dagger}U\left[\braket{\tilde{E}}-H\right]U^{\dagger}U\ket{\psi}$,
where we also used $U^{\dagger}U=I$, so $\braket{\phi|\phi}=\braket{\psi|\psi}$
and $\braket{E}=\frac{\braket{\psi|H|\psi}}{\braket{\psi|\psi}}=\frac{\braket{\phi|H|\phi}}{\braket{\phi|\phi}}$.
Since $\left[\Theta,U\right]=0$ we get $\frac{d\ket{\phi}}{dt}=\frac{1}{\braket{\phi|\phi}}\Theta\left[\braket{\tilde{E}}-\tilde{H}\right]\ket{\phi}$.
\\
The average initial energy is $\mathbb{E}\left(\braket{E_{0}}\right)=\int\frac{\braket{\psi_{0}|H|\psi_{0}}}{\braket{\psi_{0}|\psi_{0}}}\exp\left[-\frac{1}{2}\braket{\psi_{0}|K^{-1}|\psi_{0}}\right]d\ket{\psi_{0}}$
and the initial average overlap is $\mathbb{E}\left(\tilde{\psi}_{g}^{2}(t=0)\right)=\int\frac{\braket{g|\psi_{0}}^{2}}{\braket{\psi_{0}|\psi_{0}}}\exp\left[-\frac{1}{2}\braket{\psi_{0}|K^{-1}|\psi_{0}}\right]d\ket{\psi_{0}}$.
Changing the integration variable from $\ket{\psi}$ to $\ket{\phi}$
and using the fact that the Jacobian of the transformation is 1, we
obtain the result. 

Corollary 1: Since the general form of the NTK and CK is $\Theta=\sum_{S}\alpha_{S}X_{S}$
with $S\subset\left\{ 1,...,N\right\} $ and $\alpha_{S}$ being real,
then $\left[\Theta,e^{i\chi}\right]=0$ for every $\chi=\sum_{S}\beta_{S}X_{S}$.

Corollary 2: As explained in the main text, altering the signs of
$\beta_{S}$ and $\gamma_{S}$ without altering $\alpha_{S}$ and
the magnitudes of $\beta_{S},\gamma_{S}$ can only be done using a
unitary of the form $X_{S}$, which will always commutes with the
NTK and CK.

Proposition 3: \\
(1) Due to the general form of $\Theta$ and $K$, $\text{Tr}\left(\Theta\mathcal{H}\right)=\sum\text{Tr}\left(\Theta X_{S}\right)\alpha_{S}$,
where we used the orthogonality properties of the Pauli matrices.
Since $\Theta$ and $K$ are assumed to be entry-wise positive, $\text{Tr}\left(\Theta X_{S}\right)\geq0$,
and so if one can only change the signs of $\alpha_{s}$, the minimum
will be obtained whenever $\text{sgn}\left(\alpha_{s}\right)=-1$,
provided that it is attainable.\\
(2) More formally, the statement can be formulated as: If there is
a unitary transformation $V\in\left\{ Z_{S}\right\} $ that makes
$\alpha_{s}\leq0$, then $V=\text{argmax}_{U\in\left\{ Z_{S}\right\} }\left\{ \braket{g|U^{\dagger}\Theta U|g}\right\} $.
Writing the ground state of the Hamiltonian in the computational basis
$\ket{g}=\sum_{\sigma}g_{\sigma}\ket{\sigma}$, sign transformations
of the form $Z_{S}$ will only change the signs of $g_{\sigma}$ but
not the absolute values $\lvert g_{\sigma}\lvert$. $\braket{g|\Theta|g}=\sum_{\sigma\sigma'}\Theta_{\sigma,\sigma'}g_{\sigma}g_{\sigma'}\leq\sum_{\sigma\sigma'}\Theta_{\sigma,\sigma'}\lvert g_{\sigma}\lvert\lvert g_{\sigma'}\lvert$,
due to $\Theta_{\sigma,\sigma'}\geq0$, which implies that the upper
bound is obtained if and only if $g_{\sigma}\geq0$. This situations
occurs if and only if the Hamiltonian is stoquastic, which, under
the assumptions, occurs only if $\alpha_{S}\leq0$.\\
(3) In the extremely biased case, since $\Theta_{\sigma,\sigma'}\geq0$,
we have $\Theta\approx\ket{+}\bra{+}+\mathcal{O}\left(\epsilon\right)$
with $\epsilon\ll1$. Then the learning rate bound can be approximately
written as $\frac{\epsilon}{1-\left(1-\epsilon\right)\braket{g|+}^{2}}\frac{1}{\braket{+|H|+}-E_{g}}$.
The first term is a function of $\sum_{\sigma}g_{\sigma}$, hence
it is maximized by the case $g_{s}\geq0\ \forall s$ (or $g_{s}\leq0\ \forall s$),
which only occurs if $\alpha_{s}\leq0$. As for the second term, one
can see that $\braket{+|H|+}=\text{Tr\ensuremath{\left(H\ket{+}\bra{+}\right)}=\ensuremath{\sum_{s}\text{Tr}\left(HX_{s}\right)=\sum}\ensuremath{\alpha_{s}}}$,
which is minimized by $\alpha_{s}\leq0$. Thus the whole expression
is maximized by having $\alpha_{s}\leq0$. 

Proposition (4): This statement follows from $\braket{g|\Theta|g}=\sum_{\sigma,\sigma'}g_{\sigma}g_{\sigma'}\Theta_{\sigma,\sigma'}\leq\sum_{\sigma,\sigma'}\lvert g_{\sigma}\lvert\lvert g_{\sigma'}\lvert\Theta_{\sigma,\sigma'}$.
Without changing the norms the bound is obtained if and only if $g_{\sigma}$
have the same signs for all $\sigma$.

Proposition (5): As written in the main text, $\text{Tr}\left(H_{2}\Theta\right)=\text{\ensuremath{\sum_{i,j}J_{i,j}^{(X,X)}\alpha_{2}}}+h_{i}^{(X)}\alpha_{1}$,
with $\alpha_{1},\alpha_{2}\geq0$, since $\Theta$ is entry-wise
positive. Hence, $\text{Tr}\left(H_{2}\Theta\right)\geq-\text{\ensuremath{\sum_{i,j}\lvert J_{i,j}^{(X,X)}\lvert\alpha_{2}}}+\lvert h_{i}^{(X)}\lvert\alpha_{1}$.
\end{proof}

\subsection*{App. D Details of the numerical calculations \protect\label{subsec:App.D}}

Throughout the paper we used the common NTK initialization strategy
as defined in \citep{https://doi.org/10.48550/arxiv.1907.10599} with
a zero bias initialization. 

For each plot with one layer we used $5\cdot10^{4}$ hidden units,
and for ones with more or varying number of layers we used $2\cdot10^{3}$,
which is still much larger than the number of degrees of freedom.
For benchmarks of the convergence times we used systems with $N=5$,
which although small, already allows us to see the predicted trends.
For every figure addressing the average initialization we used $N=9$,
except for the case where we vary the number of qubits, where we went
to $N=11$. 

The convergence times were measured by the time it takes for the energy
or ground state overlap to reach a certain threshold $\epsilon$.
For the ground state this time is defined as the smallest time for
which $\tilde{\psi}_{g}\geq0.95$, while for the energy it is the
smallest time for which $\braket{E}-E_{g}\leq\frac{\Delta}{C}$, with
$\Delta$ being the energy gap and $C=4$.

Both the NTK and CK were calculated by averaging over initializations.
In order to determine the number of samples and hidden units we calculated
the relative error of the eigenvalues of each kernel, which is the
standard deviation over the mean. We chose a number of samples and
hidden units such that the error is less than $1\%$ for all of the
eigenvalues.

\bibliographystyle{apsrev4-1_c}
\bibliography{ref}

\end{document}